\newtheorem{theorem}{Theorem}[section]
\newtheorem{deff}[theorem]{Definition}  
\newtheorem{claim}[theorem]{Claim}  
\newtheorem{lem}[theorem]{Lemma}  
\newtheorem{conj}[theorem]{Conjecture}
\newcommand{\qedsymb}{\hfill{\rule{2mm}{2mm}}}  
\newenvironment{proof}[1][]{\begin{trivlist}  
\item[\hspace{\labelsep}{\bf\noindent Proof#1:\/}] 
}{\qedsymb\end{trivlist}}
\newcommand{\ignore}[1]{}
\newcommand{\BPP}{\mathsf{BPP}}
\newcommand{\RP}{\mathsf{RP}}
\newcommand{\QMA}{\mathsf{QMA}}
\newcommand{\NP}{\mathsf{NP}}
\newcommand{\PCP}{\mathsf{PCP}}
\newcommand{\QPCP}{\mathsf{QPCP}}
\newcommand{\B}[1]{\mathbf{#1}}
\newcommand{\Be}{\B{e}}
\newcommand{\norm}[1]{{\| #1 \|}}  
\newcommand{\ket}[1]{{ |{#1} \rangle }}  
\newcommand{\bra}[1]{{ \langle {#1} | }}
\newcommand{\orderof}[1]{\mathcal{O}(#1)} 
\newcommand{\poly}{\mathrm{poly}} 
\newcommand{\EqDef}{\stackrel{\mathrm{def}}{=}}
\newcommand{\Pl}[1]{\Pi^{\scriptscriptstyle(#1)}_{\scriptscriptstyle \le \ell}}  
\newcommand{\Pgt}[1]{\Pi^{\scriptscriptstyle(#1)}_{\scriptscriptstyle > \ell}}  
\newcommand{\Ppr}[2]{\Delta^{\scriptscriptstyle(#1)}_{#2}}
\newcommand{\Prs}[2]{R^{\scriptscriptstyle(#1)}_{\scriptscriptstyle \le #2}}
\newcommand{\UNSAT}{\mathrm{UNSAT}}
\newcommand{\QUNSAT}{\mathrm{QUNSAT}}
\newcommand{\SAT}{\mathrm{SAT}}
\newcommand{\QSAT}{\mathrm{QSAT}}
\newcommand{\kProj}{\text{$k$-QSAT}}
\newcommand{\Av}{\mathbbm{E}}
\newcommand{\Qbit}{\mathbbm{B}}
\newcommand{\Eq}[1]{Eq.~(\ref{#1})}
\newcommand{\Fig}[1]{Fig.~\ref{#1}}
\newcommand{\Lem}[1]{Lemma~\ref{#1}}
\newcommand{\Cl}[1]{Claim~\ref{#1}}
\newcommand{\Sec}[1]{Sec.~\ref{#1}}
\newcommand{\Ref}[1]{Ref.~\cite{#1}}
\newcommand{\App}[1]{Appendix~\ref{#1}}
\newcommand{\Id}{\mathbbm{1}}
\begin{document}

\title{The Detectability Lemma and Quantum Gap Amplification} 
\author{Dorit Aharonov\thanks{School of Computer Science and Engineering, 
 The Hebrew University, Jerusalem, Israel.},\ \  
   Itai Arad\thanks{Department of Electrical Engineering and Computer Sciences, 
   University of California at Berkeley, CA.},\ \  
   Zeph Landau\thanks{Department of Electrical Engineering and Computer Sciences,
 University of California at Berkeley, CA.}\ 
 \ and Umesh Vazirani \thanks{Department of Electrical Engineering and Computer Sciences,
 University of California at Berkeley, CA.}
} 
\maketitle 
 
\noindent 
\begin{abstract} 
  The quantum analogue of a constraint satisfaction problem is a sum
  of local Hamiltonians - each (term of the) Hamiltonian specifies a
  local constraint whose violation contributes to the energy of the
  given quantum state. Formalizing the intuitive connection between
  the ground (minimal) energy of the Hamiltonian and the
  \emph{minimum} number of violated constraints is problematic,
  since the number of constraints being violated is not well defined
  when the terms in the Hamiltonian do not commute. The
  detectability lemma proved in this paper provides precisely such a
  quantitative connection. We apply the lemma to derive a quantum
  analogue of a basic primitive in classical computational
  complexity: amplification of probabilities by random walks on
  expander graphs. We call it the quantum gap amplification lemma. 
  It holds under the restriction that the interaction graph of the
  local Hamiltonian is an expander. Our proofs are based on a novel
  structure imposed on the Hilbert space that we call the $XY$
  decomposition, which enables a reduction from the quantum
  non-commuting case to the commuting case (where many classical
  arguments go through).

  The results may have several interesting implications.  First,
  proving a quantum analogue to the PCP theorem is one of the most
  important challenges in quantum complexity theory. Our quantum gap
  amplification lemma may be viewed as the quantum analogue of the
  first of the three main steps in Dinur's PCP proof
  \cite{ref:Din07}. Quantum gap amplification may also be related to
  spectral gap amplification, and in particular, to fault tolerance
  of adiabatic computation, a model which has attracted much
  attention but for which no fault tolerance theory was derived yet.
  Finally, the detectability lemma, and the $XY$ decomposition
  provide a handle on the structure of local Hamiltonians and their
  ground states. This may prove useful in the study of those
  important objects, in particular in the fast growing area of
  ``quantum Hamiltonian complexity'' connecting quantum complexity
  to condensed matter physics. 

\end{abstract}

\section{Introduction}

There is a close analogy between two fundamental notions from
computational complexity theory and quantum physics: constraint
satisfaction problems and the ground energy of local Hamiltonians.
Each term in the local Hamiltonian specifies a local constraint
whose violation contributes to the energy of the given quantum
state. Hence the energy of the quantum state corresponds intuitively
to the number of violated quantum constraints. A canonical example
of this is the correspondence between the classical Cook-Levin
theorem and its quantum analogue proved by Kitaev \cite{ref:Kit02}.
Kitaev showed that estimating the ground energy of a local
Hamiltonian to within inverse polynomial accuracy (the quantum
analogue of determining the minimal number of violated constraints)
is complete for the quantum analog of $\NP$, namely $\QMA$.

But how accurate is this intuitive correspondence between the energy
of a state and the number of violated quantum constraints? The main
issue is that in the quantum case, the terms of the Hamiltonian do
not commute in general. This means that it is not even meaningful to
ask: how many constraints are violated by a given state? Or in
keeping with the probabilistic nature of quantum physics: what is
the probability that the given state violates at least $k$
constraints?

Our first main result in this paper is the \emph{quantum
detectability lemma}, which provides a way of making sense of and
answering these questions.  The lemma applies to any local
Hamiltonian subject to the mild restrictions that every particle
(qubit) in the local Hamiltonian participates in a bounded number of
constraints, and each term in the Hamiltonian is chosen from a
finite set of possibilities.

To state the detectability lemma, consider partitioning the terms in
the Hamiltonian into $g$ sets, which we call layers, so that in each
layer all terms are mutually commuting. Under the restrictions on
the Hamiltonian, it is possible to choose the number of layers $g$
to be a constant.  Notice that in every one of the $g$ layers, it is
meaningful to ask how many constraints are violated; every quantum
state $\ket{\psi}$ induces a probability distribution on how many
constraints are violated for a given layer.  The detectability lemma
states that if the ground energy of the system is finite, then the
probability that one or more constraints are violated in at least
one of the layers is also finite.  This is the probability of
detecting one or more violations when measuring the constraints in
that layer -- hence the name \emph{detectability lemma}. In its most
general form, the detectability lemma also ensures that for systems
with high ground energy, there exists a layer in which the
probability for more than $\ell$ violations is finite. Here $\ell$
is some integer that has to be smaller than some normalized version
of the ground energy.

To understand the subtleties  of the lemma, consider
a system with a ground state $\epsilon_0>0$, in which every layer
consists of $m$ constraints. Now consider the distribution induced
by some state {$\ket{\psi}$} on the different \emph{sectors} of the
layers (by sector we mean the subspace corresponding to a certain
number of violated constraints in that layer). One can imagine that
the induced distribution in every layer has a tiny $\epsilon_0/m$
weight in the $m$ violations part, and the rest is concentrated in
the $0$ violations part with no weight in the intermediate part
(with $1,2, \ldots, m-1$ violations).  Such a setup would certainly
comply with the ground energy condition, but constraint violation
would not be detectable.  When the different layers commute, it is
easy to see that this scenario cannot happen, since it would imply
a common ground state for all layers which contradicts $\epsilon_0
>0$. However, when the layers do not commute, the relationship
between the distributions becomes non-trivial. The $\ell=0$
detectability lemma shows that even in the non-commuting case, the
above  scenario cannot happen. It shows that there is at least one
layer in which the total weight on one or more violations is larger
than some constant that is linear in $\epsilon_0$ (for small
$\epsilon_0$) and \emph{is independent of the system size $m$}.

The heart of the proof is a certain decomposition of the Hilbert
called the $XY$-decomposition, which is interesting in its own
right. It captures a structural relationship between the ground
spaces of the different layers of the Hamiltonian. The decomposition
first partitions the Hilbert space into a tensor product of local
spaces (defined by objects which we call \emph{pyramids}) and then
further decomposes each of these local spaces into commuting and
non-commuting parts with respect to the Hamiltonian. Roughly
speaking, the commuting parts are dealt with by classical means.  In
the non-commuting parts we identify an important parameter $0<
\theta<1$ of the system (characterized by the finite family of
constraints allowed in the Hamiltonian) and find a way to point at
an exponential decay of the states in terms of that parameter. This
exponential decay allows for local analysis of the actions of the
individual terms of the Hamiltonian. 

Classical gap amplification, first proved in the context of saving
random bits in $\RP$ and $\BPP$ amplification \cite{ref:Ajt87,
ref:Imp89}, is a basic primitive in complexity theory.  The idea is
that if one is interested in amplifying the probability of hitting a
given subset of the nodes (or edges) in a graph, then if the graph
is an expander, a random walk would do almost as well as picking the
nodes (or edges) independently. More generally a constraint
satisfaction problem is represented by a hypergraph, with each
hyperedge corresponding to a constraint. To amplify the gap between
the acceptance and rejection probability, one considers the
``$t$-step walk'' on the hypergraph. Now, if the hypergraph is
expanding, one can show that the gap gets amplified by a factor of
$\Omega(t)$. This idea has since found many other important
implications, for example, in Dinur's proof of the PCP theorem
\cite{ref:Din07}. 

In this paper we prove a quantum analogue of the classical
amplification lemma: the hyper-constraints are also generated from
$t$ terms in the original Hamiltonian which form a walk in the
interaction graph. Each new hyper-constraint is the projection on
the intersection of the $t$ constraints on the walk. We show that if
the original interaction graph was an expander, the average ground
energy per term of the new Hamiltonian (consisting of the
hyper-constraints) is $\Omega(t)$ times the average ground energy
per term of the original Hamiltonian, thus establishing a bound
similar to the classical lemma. The proof relies critically on the
quantum detectability lemma, along with the classical analysis of
walks on expander graphs. 

The idea of the proof is that the overall amplification is
lower-bounded by the amplification of a single layer. But as the
constraints of a layer commute, we can treat them classically and
apply the classical amplification lemma to the distribution of
violations at that layer. The amplification of a layer therefore
depends on its distribution. This is exactly where the detectability
lemma is needed, as it ensures us that there is at least one layer
with a distribution that allows for substantial amplification.

{~}

\noindent \textbf{Discussions and Possible Implications:} 
The results in this paper are related to several important open
problems in quantum computation complexity. First, the study of the
computational complexity of local Hamiltonians has blossomed over
the last few years, and touches upon efficient simulation of quantum
systems and theoretical condensed matter physics. The techniques
developed in this paper, the $XY$-decomposition and the quantum
detectability lemma, can be expected to contribute to our
understanding of this new area.

Second, the PCP theorem is arguably the most important development
in computational complexity theory over the last two decades. Is
there a quantum analogue? One natural formulation is the following:
suppose we are given a local Hamiltonian on $n$ qubits with the
promise that the ground energy is either $0$ or at least $1/p(n)$
for some polynomial $p(n)$. Is there a way to map this to a new
local Hamiltonian such that the ground energy is either $0$ or
$\Omega(n)$?. Proving such a quantum PCP theorem is a major challenge
in quantum complexity theory; it would have implications for our
understanding of inapproximability results of quantum complexity
problems, quantum fault tolerance, the understanding of entanglement
and notions such as no-cloning, as well as on the basic notion of
energy gap amplification in condensed matter physics 
(see Section \ref{sec:qpcp} for more precise
definition and discussion). 

Our quantum gap amplification lemma can be viewed as a very weak
form of the above statement of quantum PCP. 
The problem of course is that checking the new
$t$-walk constraints requires $t$ queries, which is too large even
if we wish to check a {\it single} constraint. Dinur's proof of the
classical PCP theorem combines this kind of gap amplification with
two other steps - degree reduction and assignment testing. 
In this sense quantum gap amplification is a
possible first step towards emulating the outline of Dinur's proof
in the quantum setting.

Gap amplification is tightly connected (though not the same!) to
spectral gap amplification, a notion of interest in adiabatic
quantum computation (and in condensed matter physics in general).  
Adiabatic computation is a model of quantum computation which is 
equivalent in power to the standard one, and has attracted
considerable attention (\cite{ref:Far00, ref:Dam01, ref:Dam01b,
ref:Rei04, ref:Rol02, ref:Far02, ref:Aha03, ref:Aha04} and more). 
In adiabatic computation, the system evolves under a Hamiltonian
with a non-negligible spectral gap between the ground state and the
next excited state. Physical intuition suggests that such a model
might be inherently robust to thermal noise \cite{ref:Chi01}.
Despite work on the subject
\cite{ref:Lbe05,ref:Rol05,ref:Lid05,ref:Lid08}, including the
development of quantum error correcting codes tailored for adiabatic
evolution \cite{ref:Jor06}, an analogue to the threshold result of
the standard model \cite{ref:Aha97, ref:Kni98, ref:Kit03} is still
missing. Can the spectral gap in adiabatic computation be amplified
to a constant, to provide fault tolerance? This is probably
impossible when the system of qubits is arranged on a line, since
even though such a system can be adiabatically universal when the
gap is inverse polynomial \cite{ref:Aha07b}, Hastings has showed
\cite{ref:Has07, ref:Osb07} that adiabatic evolution in one
dimension with constant spectral gap can be simulated efficiently
classically. However, it may very well be true that amplification of
the spectral gap in some well defined sense is possible, if the
underlying geometry is that of an expander.  It seems likely that a
proof of a quantum PCP theorem would pave the way to such a result,
though we should caution that there is no proof showing such an
implication.

{~}

\noindent\textbf{Open Problems} 
In this paper we handle the restricted case in which the local terms
in the Hamiltonian are projections. We leave the general case for
future work.

As a benchmark open problem, we pose the following question: prove
an exponential size quantum PCP, in analogy with the first classical
PCP results \cite{ref:Aro08}. This already seems to require some
non-trivial work in quantum information theory. We note that it is
possible to prove a quantum PCP theorem where the proof is of doubly
exponential size; this seems to show that the no-cloning theorem, which 
some believe to be an obstacle against quantum PCP, might in fact be possible 
to bypass. 

Proving a quantum analogue of Dinur's degree
reduction, which allows reducing the degree of the graph of
interactions in the Hamiltonian, is another major open problem, 
which is related to the above problem. 

Another related problem is to improve the parameters in current
perturbation gadgets \cite{ref:Kem06, ref:Oli05, ref:Bra08,
ref:Jor08} significantly; Perturbation gadjets 
are objects that allow decomposing 
Hamiltonians acting on some number of qubits, 
to sums of terms acting on smaller sets while maintaining 
some properties of the eigenvalues of the original Hamiltonian 
(within some small error). These are clearly tightly related to 
the notion of degree reduction, however the parameters in current 
perturbation gadgets are not good enough for the purposes of degree 
reduction, since the harm the gap too much.

\section{Background - Local Hamiltonians and Local Projections}

A $k$-local Hamiltonian $H$ on $n$ qubits is an operator
$H:\Qbit^{\otimes n}\to \Qbit^{\otimes n}$ that can be written as a
sum $H=\sum_{i=1}^M H_i$, where $M=poly(n)$ and every $H_i$ is a
Hermitian operator acting on at most $k$ qubits. In this paper we
restrict attention to the case where the $H_i$ operators are
projections.  We will usually denote them by $Q_i$: 

\begin{align}
 H =
\sum_{i=1}^M Q_i. 
\end{align} 

 Another assumption that we use is that every
projection intersects with only a finite number of other
projections. Together with the $k$-locality, this implies that the
projections can be partitioned into a constant number, denoted $g$,
of subsets (which we call {\it layers}) such that the projections in
each layer are non-intersecting and thus commuting. We denote a
system satisfying the above restriction by a $\kProj$ system. 

For a state $\ket{\psi}$, $\bra{\psi}H\ket{\psi} = \sum_i \bra{\psi}
H_i\ket{\psi}$ is called the {\it energy} of the state. In our case,
this can be any number between $0$ and $M$. The minimal energy of
the system is the lowest eigenvalue of the Hamiltonian, and is
denoted by $\epsilon_0$.

Deciding whether $\epsilon_0$ is above some threshold $a$ or below a
threshold $b$ with $a-b>1/poly(n)$ is known as the $k$-local
Hamiltonian problem (which is complete for Quantum NP). It can be
viewed as the quantum analog of the $k-\SAT$ problem.  We often
refer to the projections $Q_i$ as {\it constraints}, and when
$\bra{\psi}Q_i\ket{\psi} > 0$ we say that $Q_i$ is violated (with
respect to the state $\ket{\psi}$). Similarly, when
$\bra{\psi}Q_i\ket{\psi} =0$, we say that $Q_i$ is satisfied, or
that $\ket{\psi}$ is in the accepting subspace of $Q_i$.

\section{The $XY$ decomposition}
\label{sec:XY}

We consider a $\kProj$ system over $n$ qubits with $M$ constraints
that can be arranged in $g$ layers. Let us start by describing at a
high level the $XY$-decomposition and how it is applied:

We start with a decomposition of the Hilbert space into a tensor
product of local spaces, and restrict our attention to only those
terms of the Hamiltonian that act non-trivially on exactly one of
these spaces. The way the actual decomposition is carried out
depends upon an ordering of the layers, and is described in the
pyramid construction below. Each of these local spaces can now be
further decomposed into a direct sum of subspaces according to
whether all the terms of the Hamiltonian acting on this subspace
commute ($X$) or not ($Y$).  This defines a natural
$XY$-decomposition of any state. The main point of the
$XY$-decomposition is that it allows us to capture some structural
relationship between the ground spaces of the different layers of
the Hamiltonian. The starting point for this is the observation that
in each $Y$ subspace there is some finite angle between the ground
spaces of the different layers. Now stepping back, we can decompose
the tensor product of all the local spaces into subspaces according
to the number of $Y$ components. The actions of the Hamiltonian on
the local spaces collectively ensure that if we start from an
arbitrary state and successively project it onto the ground spaces
of the different layers, then the weight of the resulting state in
each subspace decays exponentially in the number of $Y$ components.
This is a key property used in the proof of the detectability lemma.

\subsection{Pyramids and Pyramid projections}

We partition the Hilbert space of the $\kProj$ system into a product
of subspaces by defining the notion of a ``pyramid'', a special
``connected'' subset of the constraints, as follows.  First, we
arbitrarily order the layers from $1$ to $g$.  A pyramid is created
by picking its apex - a constraint in the first layer - and for each
successive layer picking all constraints that intersect with the set
of constraints picked in previous layers. We denote the Hilbert
space of the qubits which participate in the pyramid by $H_{pyr}$.
We now consider any maximal set of disjoint pyramids as illustrated
in \Fig{fig:pyramids}. Clearly the entire Hilbert space can be
written as a tensor product of the pyramid spaces $H_{pyr}$, and
constraints from different pyramids commute.

\begin{figure}
  \center \includegraphics[scale=0.7]{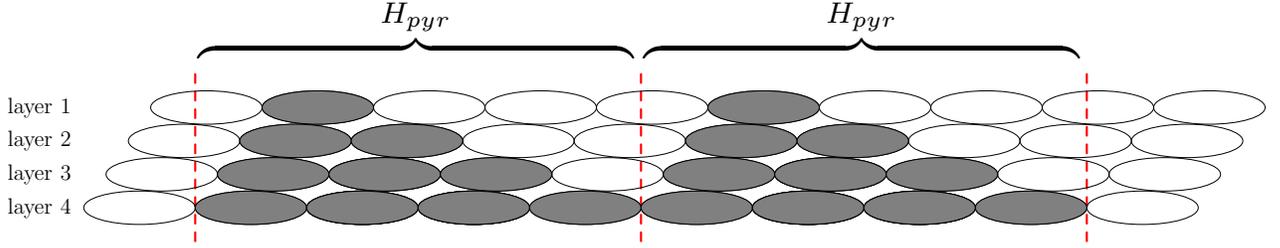} 
  \caption{The pyramids
  \label{fig:pyramids}}
\end{figure}

In the next step, we decompose the Hilbert space $H_{pyr}$ of the
first pyramid into a direct sum of subspaces $\{X_j\}$ and a a
subspace $Y$: every space $X_j$ is made of vectors which are
simultaneous eigenvectors of \emph{all} projections in the pyramid.
Moreover, in every such $X_j$, each projection is allowed to take
only one value - $0$ or $1$. Then $Y$ is defined to be the residual
subspace, i.e., the subspace that is orthogonal to all the $X_j$
subspaces inside $H_{pyr}$.  Clearly, all these spaces are
orthogonal to each other. We refer to $Y$ as the ``non-commuting''
part of the Hilbert space $H_{pyr}$; all other subspaces correspond
to the ``commuting parts''.  Of course, this decomposition can be
done for every one of the pyramids. 

We denote a sector of the $XY$ decomposition by a string $\nu$.
$\nu$ specifies either an $X_i$ space or a $Y$ space at each
location, and we define
\begin{align}
  |\nu| \EqDef \text{No. of $Y$ sites in $\nu$} \ .
\end{align}
We also define $P_\nu$ to be the projection into the tensor product
of these spaces. Note that $P_\nu$ is by itself a product of all the
corresponding $P_{X_i}, P_Y$ projections. Every state in $H$, $\ket{\psi}$,
can therefore be written as
\begin{align}
\label{eq:sectors}
  \ket{\psi} = \sum_\nu P_\nu\ket{\psi} \EqDef \sum_\nu \lambda_\nu
  \ket{\psi_\nu} \ . 
\end{align}
This is the $XY$ decomposition. 

We will in fact eventually use a finite number of these $XY$
decompositions. It is easy to see that there exists a constant
$f(k,g)$ (independent of $n$) of XY decompositions such that every
constraint in the top layer appears in one pyramid top in one of the
$XY$ decompositions, and so all top constraints are ``covered'' by
one of the decompositions.  But for most of the remainder of the
paper, we fix one $XY$ decomposition and stick to it. 

\subsection{Commutation relations between projections inside the 
pyramids} 
For a fixed pyramid, we denote the
operators which  act on $H_{pyr}$ and project of 
on the subspaces $\{X_j\}_j$ and $Y$ 
by $\{P_{X_j}\}_j$, $P_Y$ respectively. It is
easy to verify the following properties:
\begin{itemize}
  \item The projections form a valid decomposition of $H_{pyr}$: 
    \begin{align}
      [P_{X_i},P_{X_j}] &= [P_{X_i}, P_Y] = 0 \ , \\
       P_Y + \sum_j P_{X_j}  &= \Id 
    \end{align}
    
  \item Those projections commute with 
the constraints in the pyramid: 
    \begin{align} \label{eq:commutation1}
      [Q, P_{X_j}] = [Q, P_Y] = 0 \ .
    \end{align}
  
  \item For every two constraints $Q_1, Q_2$ in
   the pyramid  and every subspace $X_j$
    \begin{align}
      P_{X_j} [Q_1, Q_2] P_{X_j} = 0 \ .
    \end{align}
\end{itemize}

\subsection{The parameter $\theta$}
Next, we define the
parameter $0<\theta<1$, which plays a crucial role in the paper. 

\begin{deff}[The parameter $\theta$]
\label{def:theta} 

  Fix a pyramid.  Consider the product $Q_0\cdot Q_1\cdot\ldots\cdot
  Q_N$ where every $Q_i$ is either a projection from the pyramid or
  its complement, and every pyramid projection (or its complement)
  appears exactly once.  $Y$ does not contain any common eigenvector
  of \emph{all} those projections. Hence there exists a constant
  $0<\theta<1$ such that for any possible pyramid in the system, and
  any order in which the constraints are chosen to appear in the
  product, 
  \begin{align}
  \label{eq:theta}
    \norm{P_Y \cdot Q_0\cdot\ldots\cdot Q_N \cdot P_Y} \le \theta 
      \ .
  \end{align}
  $\theta$ is a constant that depends only on the family of
  constraints and on the constant $g$ (which determines the maximal
  number of constraints in a pyramid). 
\end{deff}

\subsection{The $\Pl{i}$ projections}  
\label{sec:Pi}

The $XY$ decomposition is useful for analyzing the action of the
$\Pl{i}$ projections, which play a central role in the exponential
decay and the detectability lemma. The $\Pl{i}$ projection projects
to the subspace of $\ell$ or less violations of the constraints in
the $i$'th layer. For example, for $\ell=0$, $\Pl{i}$ projects into
the accepting space of the $i$'th layer.

A central observation is that we can present $\Pl{i}$
according to the pyramids structure. Using the fact that the
constraints in the $i$'th layer all commute with each other and
defined on non-intersecting qubits, we may write it in terms of
violations inside the pyramids and violations outside the pyramids.
Specifically, we write it as the following sum of $\ell+1$ terms:
\begin{align}
\label{eq:inside-out}
  \Pl{i} = \sum_{j=0}^\ell \Ppr{i}{j}\cdot \Prs{i}{\ell-j} \ .
\end{align}
Here $\Ppr{i}{j}$ denote the projection into the subspace in which
all the constraints in the $i$'th layer \emph{inside} the pyramid
have \emph{exactly} $j$ violations, and $\Prs{i}{j}$ denotes the
projection into the subspace in which the constraints of the $i$'th
layer \emph{outside} the pyramids have $j$ or less violations.

The core idea is that due to the pyramid structure, the support of
the constraints of pyramids at layer $i$ is included in the support
of the constraints of the pyramids at layer $i+1$, the layer beneath
it. Therefore we can ``pull back'' the $\Ppr{i'}{j'}$ operators
across the $\Prs{i}{j}$ operators as long as $i'<i$, and so
\begin{align}
\label{eq:pyr-decomp}
  \Pl{g}\cdots\Pl{1} = \sum_{j_1, \ldots, j_g} 
    (\Ppr{g}{j_g}\cdots\Ppr{1}{j_1}) \cdot 
    (\Prs{g}{\ell-j_g}\!\!\!\cdots\Prs{1}{\ell-j_1}) \ .
\end{align}
This is a central equation that will be useful for us in what to
follow.

\subsection{The exponential decay for the $2$ layers $\ell=0$ case}
\label{sec:2expo}

\begin{figure}
  \center \includegraphics[scale=0.7]{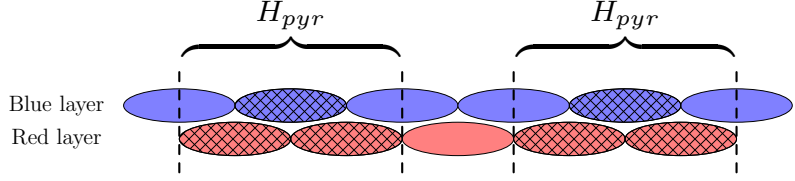} 
  \caption{The pyramids and associated $H_{pyr}$ spaces in a two-layers
  system \label{fig:pyramids-2layers}}
\end{figure}

To introduce the behavior of the exponential decay, we start with
the simpler case of two layers, which we call ``blue'' and ``red''.
The structure in this case is illustrated in
\Fig{fig:pyramids-2layers}.

We define $\Pi_{red}$ ($\Pi_{blue}$) to be the $\ell=0$ projections
from \Sec{sec:Pi}. This means that $\Pi_{red}$ projects into the 
tensor product of the zero (accepting) subspaces of all the terms in
the red  layer, and similarly the $P_{blue}$ for the blue layer.  

We may now write $\Pi_{red}$ and $\Pi_{blue}$ in terms of violations
inside and outside the pyramids as defined in \Eq{eq:inside-out}.
Notice that the $\ell=0$ case is particularly simple because
$\Pi_{red}$, $\Pi_{blue}$ can be written as products.  Take for
example $\Pi_{blue}$. It can the be written as the product
$\Pi_{blue} = (\Id-Q_1)\cdot(\Id-Q_2)\cdots$ with $Q_i$ being the
blue constraints. It is therefore clear that we can write
\begin{align}
  \Pi_{blue} &= \Delta_{blue} R_{blue} \ , 
\end{align}
where
\begin{align}
  \Delta_{blue} &= \text{terms inside the pyramids} \ , \\
  R_{blue} &= \text{terms outside the pyramids} \ .
\end{align}
Similarly, we define $\Pi_{red} = \Delta_{red}R_{red}$. 

As discussed in the previous section, because of the pyramids
structure, the support of $R_{red}$ and $\Delta_{blue}$ are
non-intersecting (See \Fig{fig:pyramids-2layers}) and therefore
\begin{align}
\label{eq:pullback}
  \Pi_{red}\Pi_{blue} = \Delta_{red}\Delta_{blue} R_{red} R_{blue} \ .
\end{align}

We now prove the exponential decay behavior.  Let us first
coarse-grain the $XY$ decomposition by gathering together all
sectors with the same number of $Y$ spaces. In other words, for
every integer $0\le s\le M$, define a projection
\begin{align}
  P_s \EqDef \sum_{|\nu|=s} P_\nu \ .
\end{align}
Then this is still a valid decomposition as the $P_s$ are orthogonal
to each other and $\sum_{s=0}^{m}P_s = \Id$.  The exponential decay
lemma states that if we apply
this decomposition to some state \emph{after} applying the
$\Pi_{blue}$ and $\Pi_{red}$ projections, then we can upper bound
the weight of the $s$ sector in terms of $\theta^s$.
\begin{lem}[Exponential-decay lemma for $\ell=0$]
\label{lem:expdecay2layers}
 
  Let $\ket{\psi}$ be an arbitrary (normalized) state, and consider
  the following normalized state  
  \begin{align}
    \ket{\Omega} \EqDef \frac{1}{x} \Pi_{red}\Pi_{blue}\ket{\psi} \ ,
  \end{align}
  and its coarse grained $XY$ decomposition
  \begin{align}
    \ket{\Omega} = \sum_s P_s \ket{\Omega} \EqDef \sum_s \lambda_s
      \ket{\Omega_s} \ .
  \end{align}
  Then there exist weights $\{\eta_s\}$ such that $\sum_s \eta^2_s
  \le 1$, and 
  \begin{align}
    \lambda_s \le
      \frac{1}{x}\theta^s\eta_s \ .
  \end{align}
\end{lem}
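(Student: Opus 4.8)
The plan is to peel the ``outside'' projections $R_{red},R_{blue}$ off into a single norm--non-increasing factor, factorize the remaining ``inside'' operator $\Delta_{red}\Delta_{blue}$ into one piece per pyramid, and then on each pyramid invoke the bound $\theta$ of \Def{def:theta} blockwise with respect to the $XY$ decomposition of that pyramid.

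Concretely, I would first use the pull-back identity \Eq{eq:pullback} to write $x\ket{\Omega}=\Pi_{red}\Pi_{blue}\ket{\psi}=\Delta_{red}\Delta_{blue}\ket{\chi}$, where $\ket{\chi}\EqDef R_{red}R_{blue}\ket{\psi}$. Each of $R_{red},R_{blue}$ is a product of commuting projections (constraints in one layer are non-intersecting), hence has operator norm at most $1$, so $\norm{\chi}\le 1$. Next, because constraints from different pyramids act on disjoint qubits and within a layer are non-intersecting, $\Delta_{red}\Delta_{blue}$ factorizes over the pyramids as $\prod_{p}D_{p}$, where $D_{p}$ acts only on $H_{pyr}$ for the pyramid $p$ and is a product in which the complement $\Id-Q$ of each constraint $Q$ of pyramid $p$ appears exactly once (the red constraints of $p$, then its apex).

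The local step is then to show each $D_{p}$ is block diagonal for $H_{pyr}=\bigl(\bigoplus_{j}X_{j}\bigr)\oplus Y$ with small blocks. By \Eq{eq:commutation1} every factor $\Id-Q$ commutes with $P_{X_{j}}$ and $P_{Y}$, so $D_{p}$ preserves each $X_{j}$ and $Y$. On an $X_{j}$ block every pyramid constraint acts as a fixed scalar $0$ or $1$, so $D_{p}$ is $0$ or $\Id$ there, of norm $\le 1$. On the $Y$ block, $D_{p}|_{Y}=P_{Y}D_{p}P_{Y}$ is precisely an operator of the form appearing in \Def{def:theta} (each pyramid projection, here its complement, appearing once), so $\norm{D_{p}|_{Y}}\le\theta$. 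Therefore, on a sector $\nu$ of the $XY$ decomposition --- a tensor product over the pyramids of one $X_{j}$ or $Y$ block --- the operator $\prod_{p}D_{p}$ maps sector $\nu$ into itself, and since exactly $\abs{\nu}$ of the blocks are $Y$ blocks (each contributing a factor $\le\theta$) and the rest contribute a factor $\le 1$, it has norm at most $\theta^{\abs{\nu}}$.

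Finally I would assemble the bound. Expanding $\ket{\chi}=\sum_{\nu}\mu_{\nu}\ket{\chi_{\nu}}$ in the $XY$ decomposition (so $\sum_{\nu}\abs{\mu_{\nu}}^{2}=\norm{\chi}^{2}\le 1$), the vectors $\prod_{p}D_{p}\ket{\chi_{\nu}}$ lie in distinct sectors and are therefore mutually orthogonal, so $P_{s}\,\Delta_{red}\Delta_{blue}\ket{\chi}=\sum_{\abs{\nu}=s}\mu_{\nu}\prod_{p}D_{p}\ket{\chi_{\nu}}$, and by Pythagoras together with the per-sector norm bound, $\norm{P_{s}\,\Delta_{red}\Delta_{blue}\ket{\chi}}^{2}\le\theta^{2s}\sum_{\abs{\nu}=s}\abs{\mu_{\nu}}^{2}$. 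Since $x\lambda_{s}=\norm{P_{s}(x\ket{\Omega})}=\norm{P_{s}\,\Delta_{red}\Delta_{blue}\ket{\chi}}$, taking $\eta_{s}\EqDef\bigl(\sum_{\abs{\nu}=s}\abs{\mu_{\nu}}^{2}\bigr)^{1/2}=\norm{P_{s}\ket{\chi}}$ gives $\lambda_{s}\le\frac{1}{x}\theta^{s}\eta_{s}$ and $\sum_{s}\eta_{s}^{2}=\norm{\chi}^{2}\le 1$, as claimed. The one step that requires care --- and the real content of the argument --- is the factorization of the global operator $\Delta_{red}\Delta_{blue}$ into per-pyramid block-diagonal pieces whose $Y$ blocks are exactly the operators controlled by \Def{def:theta}; once that is in place the remainder is orthogonality of sectors and submultiplicativity of the operator norm.
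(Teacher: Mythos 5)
Your proof is correct and takes essentially the same route as the paper: both use the pull-back identity to isolate $\Delta_{red}\Delta_{blue}$ acting on $\ket{\Phi}=R_{red}R_{blue}\ket{\psi}$, factor over pyramids, invoke Definition~\ref{def:theta} on the $Y$ blocks, and end up with $\eta_s = \norm{P_s\ket{\Phi}}$. The only cosmetic difference is that you exploit block-diagonality of $\Delta_{red}\Delta_{blue}$ to get $\theta^{|\nu|}$ directly, whereas the paper works with the Hermitian sandwich $P_\nu\Delta_{blue}\Delta_{red}\Delta_{blue}P_\nu$ and splits each $Y$-pyramid contribution into two products (handling the repeated $\Id-Q_{blue}$) to obtain the equivalent $\theta^{2|\nu|}$.
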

  
\begin{proof}
  To prove this claim, we take one step backwards,
  and write $\ket{\Omega}$ in terms of the fine-grained $XY$
  decomposition: $\ket{\Omega} = \sum_\nu \lambda_\nu
  \ket{\Omega_\nu}$. Then 
  \begin{align}
    \lambda_\nu^2 &= \bra{\Omega} P_\nu \ket{\Omega} \\
      &=\frac{1}{x^2} \bra{\psi}\Pi_{blue}\Pi_{red}\ 
        P_\nu \ \Pi_{red}\Pi_{blue}\ket{\psi} \ .
  \end{align}
  We now use \Eq{eq:pullback} and write
  \begin{align}
    \Pi_{blue}\Pi_{red}\ P_\nu \ \Pi_{red}\Pi_{blue} =
      R_{blue}R_{red}\Delta_{blue}\Delta_{red}\ P_\nu \
      \Delta_{red}\Delta_{blue}R_{red}R_{blue}  \ ,
  \end{align}
  and as $P_\nu$ commutes with $\Delta_{red}, \Delta_{blue}$,
  this is equal to
  \begin{align}
    R_{blue}R_{red}P_\nu \ \Delta_{blue}
    \Delta_{red}\Delta_{blue}\ P_\nu R_{red}R_{blue}  \ ,       
  \end{align}
  It follows that
  \begin{align}
    \lambda_\nu^2 \le \frac{1}{x^2} 
      \norm{P_\nu \Delta_{blue} \Delta_{red}\Delta_{blue}P_\nu}
      \cdot\norm{P_\nu \ket{\Phi}}^2 \ ,
  \end{align}
  with
  \begin{align}
    \ket{\Phi} \EqDef R_{red}R_{blue}\ket{\psi} \ .
  \end{align}
  
  Let us estimate $\norm{P_\nu \Delta_{blue}
  \Delta_{red}\Delta_{blue}P_\nu}$. Every operator in the product
  factors into a product of operators over every pyramid. Consider a
  pyramid site which is projected (by $P_{\nu}$) into a $Y$
  subspace. For brevity, call it a $Y$ pyramid.  Let $Q_{blue}$ be
  the blue constraint (the pyramid's top) and $Q_1, \ldots, Q_N$ the
  red constraints. We have 
  \begin{align}
    &P_Y \cdot Q_{blue}\cdot Q_1\cdots Q_N \cdot Q_{blue} \cdot P_Y \\
     &= (P_Y \cdot Q_{blue} \cdot Q_1\cdots Q_N \cdot P_Y) \cdot 
       (P_Y\cdot Q_N \cdots Q_1 \cdot Q_{blue} \cdot P_Y)
  \end{align}
  where we have used \Eq{eq:commutation1}. From
  \Eq{eq:theta}, its norm is smaller or equal to $\theta^2$, and
  since there are $|\nu|$ such $Y$ sites, we deduce that
  \begin{align}
    \norm{P_\nu \Delta_{blue}
      \Delta_{red}\Delta_{blue}P_\nu} \le \theta^{|2\nu|} \ .
  \end{align}
  All together, this leads to     
  \begin{align}
     \lambda_\nu^2 &\le \frac{1}{x^2} \theta^{2|\nu|} 
       \bra{\Phi} P_\nu\ket{\Phi} \ ,
  \end{align}
  and summing over all $\nu$ with $|\nu|=s$, we obtain
  \begin{align}
     \lambda_s^2 =\sum_{|\nu|=s}\lambda_\nu^2 
       \le \frac{1}{x^2} \theta^{2s} \eta_s^2
        \ ,
  \end{align}
  where we have defined
  \begin{align}
    \eta_s^2 \EqDef \bra{\Phi} P_s \ket{\Phi} 
      = \sum_{|\nu|=s} \bra{\Phi} P_\nu \ket{\Phi}\ .
  \end{align}
      \end{proof}

One can also make a similar statement for more than two layers; 
the proof follows very similar lines. Here we do not state 
this lemma since the following result implies it as a special case.

\subsection{Generalizing to many layers and $\ell>0$}
\label{sec:gen-decay}

In the general case we consider $g$ layers of constraints and $\ell$
might be larger than $0$. The projections $\Pi_{red}, \Pi_{blue}$
are replaced by the $\Pl{i}$ which project into the subspace of
$\ell$ or less violations in the $i$'th layer (see \Sec{sec:Pi}).
This allows us to derive an exponentially decaying bound on the
similarly defined coefficients $\lambda_s$, except now the bound will
contain some combinatorial factors depending on $\ell$. 
\begin{lem}[Exponential decay lemma for general $\ell$]
  \label{lem:gen-decay} Consider a $\kProj$ system with $g$ layers
  and $M$ projections, drawn from a finite set that is characterized
  by a parameter $0<\theta<1$. Let $0\le \ell\le M$ be an integer
  and let $\ket{\psi}$ be an arbitrary (normalized) state. Consider
  the following normalized state 
  \begin{align}
    \ket{\Omega} \EqDef \frac{1}{x} \Pl{g}\cdots\Pl{1}\ket{\psi} \ ,
  \end{align}
  and its coarse grained $XY$ decomposition
  \begin{align}
    \ket{\Omega} = \sum_s P_s \ket{\Omega} \EqDef \sum_s \lambda_s
      \ket{\Omega_s} \ .
  \end{align}
  Then there exist weights $\{\eta_s\}$ such that
  $\sum_s \eta^2_s \le 1$, and for every $s\ge \ell$, 
  \begin{align}
  \label{eq:genexpo}
    \lambda_s \le \frac{1}{x} k^{g^2\ell}
      \left(\frac{\ell+1}{\ell!}\right)^g
       s^{g\ell} \theta^s \eta_s \ .
  \end{align}
\end{lem}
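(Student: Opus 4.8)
The plan is to follow the architecture of the proof of \Lem{lem:expdecay2layers}, upgrading its two ingredients — the $\theta$-decay estimate and the combinatorial bookkeeping — from two layers with $\ell=0$ to $g$ layers with general $\ell$. As there, I would first pass to the fine-grained $XY$-decomposition $\ket\Omega=\sum_\nu\lambda_\nu\ket{\Omega_\nu}$, bound each $\lambda_\nu=\norm{P_\nu\ket\Omega}$ for $\nu$ with $|\nu|=s$, and only at the end re-sum over $\{\nu:|\nu|=s\}$, taking $\eta_s:=\norm{P_s\ket\psi}$ so that $\sum_s\eta_s^2=1$ holds automatically. To begin, I would open up $\Pl{g}\cdots\Pl{1}$ via the pyramid identity \Eq{eq:pyr-decomp}, $\Pl{g}\cdots\Pl{1}=\sum_{\B j}\Delta_{\B j}R_{\B j}$, with $\Delta_{\B j}=\Ppr{g}{j_g}\cdots\Ppr{1}{j_1}$ supported inside the pyramids, $R_{\B j}=\Prs{g}{\ell-j_g}\cdots\Prs{1}{\ell-j_1}$ supported outside, and $\B j$ running over $\{0,\dots,\ell\}^g$; the two factors commute with each other, and each commutes with $P_\nu$ (by \Eq{eq:commutation1} for $\Delta_{\B j}$ and by disjointness of supports for $R_{\B j}$).

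Next I would expand each $\Delta_{\B j}$ completely: writing $\Ppr{i}{j_i}$ as a sum, over which $j_i$-element subset of the layer-$i$ pyramid constraints is violated, of the corresponding product of constraints and complements, $\Delta_{\B j}$ becomes $\sum_\sigma T_\sigma$, where $\sigma$ runs over all assignments of ``violated/satisfied'' to the pyramid constraints that leave exactly $j_i$ violated constraints in each layer $i$, and where $T_\sigma=\prod_p T_\sigma^{(p)}$ factors over the pyramids with each $T_\sigma^{(p)}$ being precisely a product $Q_0Q_1\cdots Q_N$ of the kind governed by \Def{def:theta} (every projection of pyramid $p$, or its complement, appearing exactly once). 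A triangle inequality over $(\B j,\sigma)$ then gives $\lambda_\nu\le\frac1x\sum_{\B j,\sigma}\norm{P_\nu T_\sigma R_{\B j}\ket\psi}$, and for a single term I would show $\norm{P_\nu T_\sigma R_{\B j}\ket\psi}\le\theta^{\,s}\norm{P_\nu\ket\psi}$: since $T_\sigma$ and $R_{\B j}$ commute with $P_\nu$ and $P_\nu^2=P_\nu$, we have $P_\nu T_\sigma R_{\B j}\ket\psi=(P_\nu T_\sigma P_\nu)\,R_{\B j}\,P_\nu\ket\psi$, so, using $\norm{R_{\B j}}\le1$, it is enough to bound $\norm{P_\nu T_\sigma P_\nu}$ (dropping the $R_{\B j}$ here is also what lets $\eta_s$ be taken independent of the state). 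This operator factors as $\prod_p P_{\nu_p}T_\sigma^{(p)}P_{\nu_p}$ over the tensor-independent pyramid spaces: a pyramid on which $\nu$ picks an $X$-subspace contributes norm $\le1$ (on $X_m$ every constraint acts as a scalar, so $T_\sigma^{(p)}$ restricts to the identity for $\sigma$ compatible with $\nu$ and to $0$ otherwise), while a pyramid on which $\nu$ picks the $Y$-subspace contributes $\norm{P_Y T_\sigma^{(p)}P_Y}\le\theta$ directly by \Eq{eq:theta}. Multiplying over the $s$ $Y$-pyramids gives $\norm{P_\nu T_\sigma P_\nu}\le\theta^{|\nu|}=\theta^{\,s}$.

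It remains to count terms. For fixed $\nu$ with $|\nu|=s$, an assignment $\sigma$ with a nonvanishing contribution must agree with the $X$-subspaces recorded by $\nu$, hence is fully determined on the $X$-pyramids; the only freedom is on the $s$ $Y$-pyramids, where for each of the $g$ layers $\sigma$ selects at most $\ell$ violated constraints among at most $s\,k^{g}$ candidates (any layer meets a pyramid in at most $k^{g}$ constraints, by the pyramid construction, $k$-locality and the bounded-degree assumption). For $s\ge\ell$ this is at most $(\ell+1)\binom{sk^{g}}{\ell}\le\frac{\ell+1}{\ell!}\,s^{\ell}k^{g\ell}$ per layer, hence at most $C_s:=k^{g^2\ell}\left(\frac{\ell+1}{\ell!}\right)^{g}s^{g\ell}$ assignments in all ($\B j$ being then determined by $\sigma$). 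Therefore $\lambda_\nu\le\frac1x\,C_s\,\theta^{\,s}\norm{P_\nu\ket\psi}$; squaring, summing over $\{\nu:|\nu|=s\}$, and using $\sum_{|\nu|=s}\norm{P_\nu\ket\psi}^2=\norm{P_s\ket\psi}^2=\eta_s^2$ yields exactly \Eq{eq:genexpo}.

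The step I expect to need the most care is pinning the $\theta$-exponent to precisely $s$: this relies on the structural fact — inherited from the pyramid construction and the commutation relations of \Sec{sec:XY} — that each $T_\sigma^{(p)}$ is \emph{literally} one of the products bounded by \Def{def:theta}, with every pyramid projection appearing once, so \Eq{eq:theta} applies with no loss; working instead with $T_\sigma^\dagger T_\sigma$ or with partial products would force one to reintroduce the idempotency of the single-layer factors to recover the same exponent. The other delicate point is organizing the sum over the super-polynomially many assignments $\sigma$ so that only the $Y$-pyramids enter the count — the $X$-pyramids being frozen by $\nu$ — which is what keeps the combinatorial factor polynomial in $s$ rather than in the system size, while still retaining the $\norm{P_\nu\ket\psi}$ that reassembles into $\eta_s$.
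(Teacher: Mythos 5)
Your proposal is correct and follows essentially the same route as the paper's Appendix~A proof: expand $\Pl{g}\cdots\Pl{1}$ via \Eq{eq:pyr-decomp}, further expand each $\Ppr{i}{j_i}$ into a sum over violation patterns $\sigma$, use $\norm{P_Y Q_0\cdots Q_N P_Y}\le\theta$ on the $Y$-pyramids and frozenness on the $X$-pyramids to get $\theta^{|\nu|}$, count the surviving $\sigma$ to get the combinatorial prefactor, and re-sum over $\{\nu:|\nu|=s\}$. The one small deviation is a clean-up: by observing that $R_{\B j}$ commutes with $P_\nu$ and has norm $\le1$, you absorb it directly into $\norm{P_\nu\ket\psi}$ and take $\eta_s=\norm{P_s\ket\psi}$, whereas the paper keeps the $(\ell+1)^g$ vectors $\ket{\Phi_j}=R_{\B j}\ket\psi$ and applies Cauchy--Schwarz over $\B j$ before defining $\eta_s^2$ as their average weight in $P_s$ — both yield the identical constant in \Eq{eq:genexpo}.
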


The proof of the above claim is more involved than the 2-layers,
$\ell=0$ case, and will therefore be given in \App{sec:Pgen-decay}.
The main difficulty here is the fact that when $\ell>0$, the
projections $\Pl{i}$ are no longer a simple product of projections
as in the $\ell=0$ case. This can already be seen in
\Eq{eq:inside-out} that contains $\ell+1$ instead of the one term
that we find when we represent $\Pi_{red}$ or $\Pi_{blue}$ (the
$\ell=0$ case). Instead, they can be thought of as huge sum over
similar products of projections, where each such product projects
into a certain possible configuration of $\ell$ or less violations.
This complicates the analysis, but other than that, the proof
follows the same outline of the 2-layers, $\ell=0$ case.

\section{The detectability lemma for two layers and $\ell=0$}
\label{sec:2detect}

In this section we prove the detectability lemma for the special
case of two layers and $\ell=0$. The proof is considerably
simpler than the general proof,  yet it demonstrates the ideas of
the general case.

The general setup is similar to the one in \Sec{sec:2expo}. We
consider a $\kProj$ system with $\epsilon_0>0$ that can be arranged
in two layers. The first layer is called the ``blue layer''
and the second layer is the ``red layer'' (see
\Fig{fig:pyramids-2layers}). We define $\Pi_{red}$ as the projection
that projects into the accepting (zero) space of the red, and
similarly $\Pi_{blue}$. In addition, we assume that the constraints
are drawn from a finite family of constraints with a parameter
$0<\theta<1$ (see \Sec{sec:XY}). Then the 2-layers, $\ell=0$
detectability lemma is:
\begin{lem}[The detectability lemma for two layers and $\ell=0$]
\label{lem:detect-2layers}
  \ \\
  There exists a function $f(k)$ such that for every normalized
  state $\ket{\psi}$,
  \begin{align}
    \max\Big\{ \norm{ (\Id-\Pi_{red})\ket{\psi}}^2,
      \norm{ (\Id-\Pi_{blue})\ket{\psi}}^2\Big\} \ge
      \frac{1}{8}\Delta^2(0) \ ,
  \end{align}
  where 
  \begin{align}
  \label{eq:Delta0}
    \Delta^2(0) \EqDef 1 - \frac{1}{(\epsilon_0/f)
              \frac{(1-\theta^2)^3}{\theta^2} + 1} \ .
  \end{align}
\end{lem}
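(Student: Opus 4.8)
The plan is to sandwich the quantity $1-x^2$, where $x\EqDef\norm{\Pi_{red}\Pi_{blue}\ket{\psi}}$, between a lower bound forced by $\epsilon_0>0$ together with the exponential‑decay lemma, and an upper bound coming from elementary perturbation estimates on $\ket{\psi}$; the two bounds then meet to give the claim.

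For the lower bound I would assume $x>0$ (if $x=0$ then $1-x^2=1\ge\Delta^2(0)$ is immediate) and work with the normalized state $\ket{\Omega}\EqDef\frac1x\Pi_{red}\Pi_{blue}\ket{\psi}$. Since $\Pi_{red}\ket{\Omega}=\ket{\Omega}$, the state $\ket{\Omega}$ lies in the accepting space of every red constraint, so $\epsilon_0\le\bra{\Omega}H\ket{\Omega}=\sum_{i\ \text{blue}}\bra{\Omega}Q_i\ket{\Omega}$. The central observation I would establish is that a violated \emph{blue apex} is confined to the $Y$ subspace of its own pyramid: fixing an $XY$ decomposition in which the blue constraint $Q\EqDef Q_i$ is the apex of a pyramid $p$ (with non‑commuting projection $P_Y^{(p)}$), I would write $\ket{\Omega}=\frac1x\Delta_{red}\Delta_{blue}R_{red}R_{blue}\ket{\psi}$ via \Eq{eq:pullback}, use that $\Delta_{red},\Delta_{blue}$ factor over pyramids with the $p$‑factor of $\Delta_{blue}$ equal to $\Id-Q$, and use the commutation of $P_{X_j}^{(p)},P_Y^{(p)}$ with the pyramid's constraints (\Eq{eq:commutation1}) to check that $Q\,\Delta^{(p)}_{red}(\Id-Q)\,P_{X_j}^{(p)}=0$ for every commuting subspace $X_j$ --- on $X_j$ the operators $Q$ and $\Id-Q$ act as complementary scalars and kill the product. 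This gives $Q\ket{\Omega}=Q\,P_Y^{(p)}\ket{\Omega}$, hence $\bra{\Omega}Q\ket{\Omega}=\norm{Q\ket{\Omega}}^2\le\norm{P_Y^{(p)}\ket{\Omega}}^2$. Summing over all pyramids of the chosen decomposition, and using $\sum_p P_Y^{(p)}=\sum_s s\,P_s$ in the coarse‑grained $XY$ decomposition of $\ket{\Omega}$, yields $\sum_{p}\bra{\Omega}Q_p\ket{\Omega}\le\sum_s s\,\lambda_s^2$.

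Next I would feed in the exponential‑decay lemma (\Lem{lem:expdecay2layers}), which supplies weights $\eta_s$ with $\sum_s\eta_s^2\le1$ and $\lambda_s\le\frac1x\theta^s\eta_s$. From $1=\sum_s\lambda_s^2\le\frac1{x^2}\big(\eta_0^2+\theta^2\sum_{s\ge1}\eta_s^2\big)$ together with $\eta_0^2\le1-\sum_{s\ge1}\eta_s^2$ I obtain $\sum_{s\ge1}\eta_s^2\le\frac{1-x^2}{1-\theta^2}$; and bounding $s\theta^{2s}\le\sum_{s'\ge1}s'\theta^{2s'}=\frac{\theta^2}{(1-\theta^2)^2}$ I obtain $\sum_s s\,\lambda_s^2\le\frac1{x^2}\sum_{s\ge1}s\theta^{2s}\eta_s^2\le\frac{\theta^2(1-x^2)}{x^2(1-\theta^2)^3}$. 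Since only finitely many --- say $f=f(k)$ --- $XY$ decompositions are needed so that every blue constraint is a pyramid apex in at least one of them, summing the last display over these decompositions gives $\epsilon_0\le\sum_{i\ \text{blue}}\bra{\Omega}Q_i\ket{\Omega}\le f\cdot\frac{\theta^2(1-x^2)}{x^2(1-\theta^2)^3}$. Writing $a\EqDef\frac{\epsilon_0}{f}\cdot\frac{(1-\theta^2)^3}{\theta^2}$ this rearranges to $a\,x^2\le1-x^2$, i.e. $1-x^2\ge1-\frac1{a+1}=\Delta^2(0)$.

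For the upper bound I would use the Pythagorean identity $1-x^2=\norm{(\Id-\Pi_{blue})\ket{\psi}}^2+\norm{(\Id-\Pi_{red})\Pi_{blue}\ket{\psi}}^2$ followed by $\norm{(\Id-\Pi_{red})\Pi_{blue}\ket{\psi}}\le\norm{(\Id-\Pi_{red})\ket{\psi}}+\norm{(\Id-\Pi_{blue})\ket{\psi}}$, which yields $1-x^2\le8\max\{\norm{(\Id-\Pi_{red})\ket{\psi}}^2,\norm{(\Id-\Pi_{blue})\ket{\psi}}^2\}$ (a factor $5$ in fact suffices, so $8$ is comfortable). Combined with $1-x^2\ge\Delta^2(0)$ this proves the lemma. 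I expect the real obstacle to be the two middle steps: verifying that a violated blue apex lives on the $Y$ part of its pyramid --- this is where the pyramid structure and the pullback identity of \Eq{eq:pullback} do their work --- and then correctly repackaging the per‑pyramid bounds into $\sum_p\bra{\Omega}Q_p\ket{\Omega}\le\sum_s s\,\lambda_s^2$ and interlocking it with the exponential‑decay lemma; the perturbation bookkeeping at the end is routine.
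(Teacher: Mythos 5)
Your proposal is correct, and its substantive core --- the derivation of the inequality $\epsilon_0\le f\cdot\frac{\theta^2(1-x^2)}{x^2(1-\theta^2)^3}$, i.e.\ the analogue of \Lem{lem:aux2} --- tracks the paper's argument closely: you identify the same mechanism (a violated blue apex forces the corresponding pyramid into the $Y$ sector), and your packaging $\sum_p\bra{\Omega}Q_p\ket{\Omega}\le\sum_s s\lambda_s^2$ together with the bound $\sum_{s\ge1}\eta_s^2\le\frac{1-x^2}{1-\theta^2}$ is exactly what the paper establishes via \Cl{cl:s-energy} and \Cl{cl:etas}. Your per-pyramid identity $Q\ket{\Omega}=QP_Y^{(p)}\ket{\Omega}$, proved via \Eq{eq:pullback} and the $X_j$-commutation, is a clean equivalent of the paper's observation that $\bra{\Omega_\nu}E^{top}\ket{\Omega_\nu}\le|\nu|$.

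Where you genuinely depart from the paper is the final step, converting the bound $1-x^2\ge\Delta^2(0)$ into the statement about $\max\{\norm{(\Id-\Pi_{red})\ket{\psi}}^2,\norm{(\Id-\Pi_{blue})\ket{\psi}}^2\}$. The paper argues by contradiction: assuming both norms are below $\sqrt{\Delta^2(0)/8}$, it expands $\Pi_{blue}\Pi_{red}\Pi_{blue}$ into the identity plus six ``error'' terms with $(\Id-\Pi)$ factors on both sides and applies Cauchy--Schwartz to each. Your route is a direct Pythagorean decomposition $1-x^2=\norm{(\Id-\Pi_{blue})\ket{\psi}}^2+\norm{(\Id-\Pi_{red})\Pi_{blue}\ket{\psi}}^2$ followed by a triangle inequality on the second summand, giving $1-x^2\le 5\max\{a^2,b^2\}$. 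This is shorter, avoids the case-by-case enumeration, and as you note yields the sharper constant $1/5$ in place of $1/8$. Both arguments are sound; yours buys a slightly better constant and a more transparent inequality chain, while the paper's expansion-by-telescoping is the version that the authors later reuse almost verbatim in the $g$-layer proof of the general detectability lemma (\Lem{lem:detect}), so it scales more uniformly. Worth noting: your Pythagorean step relies on the specific product order $\Pi_{red}\Pi_{blue}$ (project by blue first, then red); make sure the order you pick when defining $x$ matches the order in the exponential decay lemma you invoke, as it does here.
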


To prove this lemma, we will actually prove the following auxiliary lemma. 
\begin{lem}
\label{lem:aux2}
  For every normalized state $\ket{\psi}$, 
  \begin{align}
    \norm{ \Pi_{red}\Pi_{blue}\ket{\psi}}^2 \le 1 - \Delta^2(0) \ .
  \end{align}
\end{lem}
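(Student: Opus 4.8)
The plan is to convert the stated bound into a lower bound on a normalized ground energy and then read that bound off from the $XY$-decomposition, relying on the exponential-decay lemma to keep all sums independent of the system size. Write $x \EqDef \norm{\Pi_{red}\Pi_{blue}\ket\psi}$; if $x=0$ there is nothing to prove, so assume $x>0$. A short manipulation shows $1-\Delta^2(0) = \big((\epsilon_0/f)(1-\theta^2)^3/\theta^2 + 1\big)^{-1}$, so \Lem{lem:aux2} is equivalent to
\begin{align}
\label{eq:plan-target}
  \epsilon_0 \;\le\; f\,\frac{\theta^2}{(1-\theta^2)^3}\cdot\frac{1-x^2}{x^2}\ .
\end{align}
Now $\ket\Omega \EqDef \frac1x\Pi_{red}\Pi_{blue}\ket\psi$ is a unit vector in the accepting space of the red layer, so every red constraint is satisfied on it and hence $\epsilon_0 \le \bra\Omega H\ket\Omega = \bra\Omega B\ket\Omega$, where $B \EqDef \sum_{i\in\mathrm{blue}}Q_i$ is the blue part of $H$. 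It therefore suffices to bound $\bra\Omega B\ket\Omega$ by the right-hand side of \Eq{eq:plan-target}.

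To recover the factor $1-x^2$ we pass to $\ket\xi \EqDef (\Id-\Pi_{red})\Pi_{blue}\ket\psi$. Since $\Pi_{blue}\ket\psi$ lies in the blue accepting space, $B\,\Pi_{blue}\ket\psi = 0$; together with the identity $x\ket\Omega = \Pi_{blue}\ket\psi - \ket\xi$ this gives $x^2\bra\Omega B\ket\Omega = \bra\xi B\ket\xi$, while $\norm\xi^2 = \norm{\Pi_{blue}\ket\psi}^2 - x^2 \le 1-x^2$. Hence \Eq{eq:plan-target} reduces to
\begin{align}
\label{eq:plan-target2}
  \bra\xi B\ket\xi \;\le\; f\,\frac{\theta^2}{(1-\theta^2)^3}\,\norm\xi^2 ,
\end{align}
i.e. to showing that the normalized state $\ket\xi/\norm\xi$ has blue energy at most a constant depending only on $k$ (through $\theta$ and $f$).

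For \Eq{eq:plan-target2}, note that every blue constraint is a pyramid apex in at least one of the constant number $f=f(k)$ of $XY$-decompositions, so it is enough to bound, for one decomposition, the contribution of its apex constraints $Q_p$ and then sum over the $f$ decompositions. Fix such a $Q_p$. Using \Eq{eq:pullback}, the relation $Q_p\Pi_{blue}\ket\psi = 0$, and the commutation relations, one brings $Q_p\ket\xi$ into the form $-\big(\prod_{p'\ne p}\Delta_{red,p'}\big)\,M_p\cdot(\text{factors acting outside the pyramids})\,\Pi_{blue}\ket\psi$, with $M_p \EqDef Q_p\,\Delta_{red,p}\,(\Id-Q_p)$ and $\Delta_{red,p}$ the part of $\Pi_{red}$ inside pyramid $p$. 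Evaluating on each commuting subspace $X_j$ — where $Q_p$ acts as a scalar in $\{0,1\}$, so $Q_p(\cdots)(\Id-Q_p)=0$ — shows $M_p = P_{Y,p}M_p P_{Y,p}$, and since $\Delta_{red,p}(\Id-Q_p)$ is a product of the complements of all pyramid-$p$ projections, each occurring exactly once, \Def{def:theta} gives $\norm{M_p}\le\theta$. The same argument, now using that $\Pi_{blue}\ket\psi$ lies in the kernel of every apex $Q_{p'}$, shows that each $\Delta_{red,p'}$ contracts the $Y$-subspace of pyramid $p'$ by a factor at most $\theta$.

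Finally one assembles the local estimates. Since $\Pi_{red}$, $\Pi_{blue}$ and every $Q_p$ commute with the sector projections $P_\nu$, the computation splits sector by sector: with $\ket{\xi_\nu} \EqDef P_\nu\ket\xi$, $\bra\xi B\ket\xi = \sum_\nu\bra{\xi_\nu}B\ket{\xi_\nu}$. In a sector with $s=|\nu|$ many $Y$-pyramids, $Q_p$ kills $\ket{\xi_\nu}$ unless $p$ is one of those, in which case $M_p$ supplies one factor $\theta$ and each of the other $s-1$ $Y$-pyramids supplies a further $\theta$ through its $\Delta_{red,p'}$; thus $\bra{\xi_\nu}B\ket{\xi_\nu} \lesssim s\,\theta^{2s}\,\norm{\Pi_{blue}P_\nu\ket\psi}^2$. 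The exponential-decay \Lem{lem:expdecay2layers}, applied inside sector $\nu$, gives $\norm{\Pi_{red}\Pi_{blue}P_\nu\ket\psi}^2 \le\theta^{2s}\norm{\Pi_{blue}P_\nu\ket\psi}^2$, hence $\norm{\xi_\nu}^2 \ge(1-\theta^{2s})\norm{\Pi_{blue}P_\nu\ket\psi}^2$; combining, $\bra{\xi_\nu}B\ket{\xi_\nu} \lesssim\frac{s\,\theta^{2s}}{1-\theta^{2s}}\norm{\xi_\nu}^2$, and summing over $\nu$ with the worst ratio over $s\ge1$ — which is bounded by $f\,\theta^2/(1-\theta^2)^3$ (the $\theta^2$-geometric series produced have denominators that are powers of $1-\theta^2$) — yields \Eq{eq:plan-target2}, and hence the lemma. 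The hard part is precisely this summation: a per-pyramid estimate such as $\norm{Q_p\ket\xi}^2 \le \theta^2\norm{P_{Y,p}\Pi_{blue}\ket\psi}^2$ is useless, because a blue-accepting state can sit in the $Y$-subspace of arbitrarily many pyramids at once, giving only a bound of order $\theta^2$ times the number of pyramids; it is the exponential-decay lemma — equivalently, the geometric suppression of $Y$-components by the $\Delta_{red,p'}$ over the other pyramids — that makes the total independent of the system size, while the comparison with $\norm\xi^2$ is what produces the correct vanishing as $x\to1$, i.e. the small-$\epsilon_0$ behaviour of $\Delta^2(0)$.
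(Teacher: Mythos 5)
Your argument is correct and reaches the same bound, but the route is genuinely different from the paper's and is worth contrasting. Both proofs share the skeleton: set $x=\norm{\Pi_{red}\Pi_{blue}\ket\psi}$, define $\ket\Omega=\frac1x\Pi_{red}\Pi_{blue}\ket\psi$, observe $\epsilon_0\le\bra\Omega E_{blue}\ket\Omega\le f\bra\Omega E^{top}\ket\Omega$, and close by showing $\bra\Omega E^{top}\ket\Omega\le\frac{\theta^2}{(1-\theta^2)^3}\cdot\frac{1-x^2}{x^2}$. The paper gets there by writing $\bra\Omega E^{top}\ket\Omega=\sum_s\lambda_s^2\bra{\Omega_s}E^{top}\ket{\Omega_s}\le\sum_s s\lambda_s^2$, invoking the decay $\lambda_s\le\frac1x\theta^s\eta_s$ (\Lem{lem:expdecay2layers}), and then adding a second, independent estimate (\Cl{cl:etas}, $\sum_{s\ge1}\eta_s^2\le(1-x^2)/(1-\theta^2)$) whose sole job is to manufacture the factor $1-x^2$. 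You instead pull the $1-x^2$ out at the very start by switching to the complementary vector $\ket\xi=(\Id-\Pi_{red})\Pi_{blue}\ket\psi$: the identities $\bra\xi B\ket\xi=x^2\bra\Omega B\ket\Omega$ and $\norm\xi^2\le1-x^2$ reduce everything to a scale-free ratio $\bra\xi B\ket\xi/\norm\xi^2$, which you then bound sector by sector. This makes the small-$\epsilon_0$ behaviour of $\Delta^2(0)$ conceptually transparent and dispenses with \Cl{cl:etas} entirely, at the cost of needing the per-sector \emph{two-sided} comparison (energy upper bound and norm lower bound both against the same reference $\norm{\Pi_{blue}P_\nu\ket\psi}^2$).

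One point you should tighten: you cite \Lem{lem:expdecay2layers} for the inequality
$\norm{\Pi_{red}\Pi_{blue}P_\nu\ket\psi}^2\le\theta^{2s}\,\norm{\Pi_{blue}P_\nu\ket\psi}^2$,
but the lemma as stated compares against $\bra\Phi P_\nu\ket\Phi$ with $\ket\Phi=R_{red}R_{blue}\ket\psi$, which is a different (and a priori incomparable) reference. Applying the lemma to the normalized state $P_\nu\ket\psi/\norm{P_\nu\ket\psi}$ only gives the weaker $\norm{\Pi_{red}\Pi_{blue}P_\nu\ket\psi}^2\le\theta^{2s}\norm{P_\nu\ket\psi}^2$, which would spoil your ratio. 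The sharper statement you actually need \emph{is} true, but it is a variant rather than an instance of the lemma: writing $\ket{\chi_\nu}=\Pi_{blue}P_\nu\ket\psi$ and $\Pi_{red}=\Delta_{red}R_{red}$, one bounds $\norm{\Delta_{red}\ket{\chi_\nu}}^2=\bra{\chi_\nu}P_\nu\Delta_{red}P_\nu\ket{\chi_\nu}$ pyramid by pyramid, inserting the factors $(\Id-Q_{blue,p})$ — which act as the identity on $\ket{\chi_\nu}$ because it lies in the blue accepting space — so that each $Y$-pyramid contributes
\begin{align}
\bigl[P_Y(\Id-Q_{blue,p})\Delta_{red,p}P_Y\bigr]\cdot\bigl[P_Y\Delta_{red,p}(\Id-Q_{blue,p})P_Y\bigr],
\end{align}
a product of two operators of norm $\le\theta$ by \Def{def:theta}. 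This gives the $\theta^{2s}$ against the right reference. With that repair, and making the hidden constant in your ``$\lesssim$'' the explicit covering number $f(k)$ as in \Eq{eq:fksum}, your per-sector bound $\bra{\xi_\nu}B\ket{\xi_\nu}\le f\,\frac{s\theta^{2s}}{1-\theta^{2s}}\norm{\xi_\nu}^2$ and the elementary estimate $s\theta^{2s}\le\sum_{s'\ge1}s'\theta^{2s'}=\theta^2/(1-\theta^2)^2$ close the argument cleanly.
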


We will first show that \Lem{lem:detect-2layers} follows from
\Lem{lem:aux2}.

\begin{proof}[\ of \Lem{lem:detect-2layers} base on \Lem{lem:aux2}]

Assume \Lem{lem:aux2}, and assume by contradiction that both
$\norm{(1-\Pi_{red})\ket{\psi}}^2 < \frac{1}{8}\Delta^2(0)$ and
$\norm{(1-\Pi_{red})\ket{\psi}}^2 < \frac{1}{8}\Delta^2(0)$. Then we
write
\begin{align}
  \Pi_{blue}\Pi_{red}\Pi_{blue} = \Id + (\Pi_{blue}-\Id) + 
    \Pi_{blue}(\Pi_{red}-\Id) + \Pi_{blue}\Pi_{red}(\Pi_{blue}-\Id)
    \ ,
\end{align}
This way, except for the identity, every term on the RHS has a
$\Pi_{blue}-\Id$ or $\Pi_{red}-\Id$ on its right side. We want also
the left side to have such a term, so we continue in the same
fashion, and write:
\begin{align}
    \Pi_{blue}(\Pi_{red}-\Id) &= (\Pi_{blue}-\Id)(\Pi_{red}-\Id) +
      \Pi_{red}-\Id \ , \\
    \Pi_{blue}\Pi_{red}(\Pi_{blue}-\Id) &= (\Pi_{blue}-\Id)\Pi_{red}(\Pi_{blue}-\Id) +
      (\Pi_{red}-\Id)(\Pi_{blue}-\Id) + (\Pi_{blue}-\Id) \ .
\end{align}
All together we have
\begin{align}
  \Pi_{blue}\Pi_{red}\Pi_{blue} = \Id + \big[\text{$6$ terms with
  $(\Id-\Pi_{red})$ or $(\Id-\Pi_{red})$ on both sides.}\big]
\end{align}
When we ``sandwich'' the above equation with
$\bra{\psi}\cdot\ket{\psi}$, the absolute value of each of the $6$
terms will be smaller then $\Delta^2(0)/8$. This is due to the
Cauchy-Schwartz inequality and the assumption that
$\norm{(\Id-\Pi_{red})\ket{\psi}}$, $\norm{(\Id-\Pi_{blue})\ket{\psi}}
< \sqrt{\Delta^2(0)/8}$. Therefore,
\begin{align}
   \norm{ \Pi_{red}\Pi_{blue}\ket{\psi}}^2 &=
   \bra{\psi}\Pi_{blue}\Pi_{red}\Pi_{blue}\ket{\psi}
     > 1- \frac{6}{8}\Delta^2(0)
    > 1-\Delta^2(0) \ ,
\end{align}
which is a contradiction.

\end{proof}


We now proceed to prove \Lem{lem:aux2}.
\begin{proof}[\ of \Lem{lem:aux2}]
  Using the notation of \Sec{sec:2expo}, we define 
  \begin{align}
    \ket{\Omega} &\EqDef \frac{1}{x} \Pi_{red}\Pi_{blue}\ket{\psi} \ , \\
    x &\EqDef \norm{\Pi_{red}\Pi_{blue}\ket{\psi}} \ .
  \end{align}
  We wish to prove an upper bound for $x$.
   The idea of the proof is to estimate the total energy of
  $\bra{\Omega}E\ket{\Omega}$. This energy has no
  contributions from the red layer since $\ket{\Omega}$ has been 
  projected by $\Pi_{red}$, and so we may write:
  \begin{align}
    \epsilon_0 \le \bra{\Omega}
      \, E_{blue}\, \ket{\Omega} \ .
  \end{align}
  We will find an upper-bound for $\bra{\Omega} 
  \, E_{blue}\,\ket{\Omega}$ in terms of
  $\theta$, and this would give us an inequality for $x, \theta,
  \epsilon_0$. Inverting that inequality will give us the desired
  result.
  
  To estimate $E_{blue}$ we consider first one possible $XY$
  decomposition.  Let $E^{top}$ be the energy of all the blue
    constraints from the pyramids in this decomposition - the ``tops''
  of the pyramids. The main effort would be to find an upper-bound
  for $\bra{\Omega}E^{top}\ket{\Omega}$.  Once we do that, we can
  then repeat this process with other sets of pyramids (namely,
  other XY decompositions) until we cover all the blue constraints.
  All in all, there is a finite number $f(k)$ of $XY$ decompositions
  that are needed for that. Therefore,
  \begin{align}\label{eq:fksum}
    \epsilon_0 \le \bra{\Omega} E_{blue} \ket{\Omega} \le 
      f(k)\bra{\Omega} E^{top} \ket{\Omega}  \ . 
  \end{align}
  
  Hence, it remains to bound $\bra{\Omega} E^{top} \ket{\Omega}$. 
  We start by applying the fine- and coarse-grained $XY$
  decompositions  to $\ket{\Omega}$:
  \begin{align}
  \label{eq:2-Omega-XY}
    \ket{\Omega} = \sum_\nu \lambda_\nu \ket{\Omega_\nu} = \sum_s
    \lambda_s \ket{\Omega_s} \ .
  \end{align}
  Then as the $XY$ projections commute with the projections in
  $E^{top}$, we get
  \begin{align}\label{eq:2-Omega-cgXY}
    \bra{\Omega}E^{top}\ket{\Omega} = \sum_s \lambda_s^2
    \bra{\Omega_s} E^{top} \ket{\Omega_s}  \ .
  \end{align}
  
  \begin{claim}
  \label{cl:s-energy}
    \begin{align}
      \bra{\Omega_s} E^{top} \ket{\Omega_s}  \le s \ .
    \end{align}
  \end{claim}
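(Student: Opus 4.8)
The plan is to expand $E^{top}=\sum_p Q_p$ as the sum over the pyramid tops, where $Q_p$ is the (unique) blue constraint at the apex of pyramid $p$ in the fixed $XY$ decomposition, and then to argue that inside the state $\ket{\Omega_s}$ only the pyramids sitting in a $Y$ subspace can carry any energy — so the bound $s$ is simply the number of $Y$ sites.

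First I would rewrite $\ket{\Omega}$ using the pull‑back identity \Eq{eq:pullback}: $\ket{\Omega}=\frac1x\Delta_{red}\ket{\phi}$ with $\ket{\phi}\EqDef\Delta_{blue}R_{red}R_{blue}\ket{\psi}$. Since for $\ell=0$ the operator $\Delta_{blue}$ is precisely the product $\prod_p(\Id-Q_p)$ over the pyramid tops, we get $Q_p\ket{\phi}=0$ for every $p$. Next, using that each $Q_p$ commutes with all of the $XY$ projections (\Eq{eq:commutation1}) and that distinct $P_\nu$ are orthogonal, I would expand $\bra{\Omega_s}E^{top}\ket{\Omega_s}$ as the convex combination $\frac{1}{\lambda_s^2}\sum_{|\nu|=s}\lambda_\nu^2\,\bra{\Omega_\nu}E^{top}\ket{\Omega_\nu}$ (here $\sum_{|\nu|=s}\lambda_\nu^2=\lambda_s^2$). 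It therefore suffices to show $\bra{\Omega_\nu}E^{top}\ket{\Omega_\nu}\le|\nu|$ for every fine‑grained sector $\nu$ with $\lambda_\nu\neq0$.

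The core is then a per‑pyramid dichotomy. Fix such a $\nu$ and a pyramid $p$ whose slot $\nu_p$ is a commuting subspace $X_j$ rather than $Y$. On $X_j$ the top $Q_p$ takes a definite value $c\in\{0,1\}$, so $Q_p P^{(p)}_{X_j}=c\,P^{(p)}_{X_j}$ and hence $Q_p P_\nu=c\,P_\nu$. If $c=0$ then $Q_p P_\nu=0$, so pyramid $p$ contributes nothing to $\bra{\Omega_\nu}E^{top}\ket{\Omega_\nu}$. If $c=1$ then $P^{(p)}_{X_j}\ket{\phi}=Q_pP^{(p)}_{X_j}\ket{\phi}=P^{(p)}_{X_j}Q_p\ket{\phi}=0$ (using $[Q_p,P^{(p)}_{X_j}]=0$ and $Q_p\ket{\phi}=0$), whence $P_\nu\ket{\phi}=0$ and, since $P_\nu$ commutes with $\Delta_{red}$, also $P_\nu\ket{\Omega}=0$, i.e. $\lambda_\nu=0$ — so this case is vacuous. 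Thus in every nonvanishing sector only the $Y$‑pyramids can have nonzero $\bra{\Omega_\nu}Q_p\ket{\Omega_\nu}$; bounding each such term by $1$ (projector, unit vector) gives $\bra{\Omega_\nu}E^{top}\ket{\Omega_\nu}\le|\{p:\nu_p=Y\}|=|\nu|=s$, and plugging this into the convex combination yields $\bra{\Omega_s}E^{top}\ket{\Omega_s}\le s$.

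I do not expect a real obstacle; the only point needing care is the $c=1$ branch, where one must observe that a pyramid top "forced to be violated" on a commuting subspace is incompatible with $\ket{\Omega}$ having passed through $\Delta_{blue}$. This is exactly where we use that $Q_p$ commutes both with the $X/Y$ projections of its own pyramid and with $\Delta_{red}$ — a consequence of the fact that all the constraints involved in a given pyramid act only on that pyramid's qubits.
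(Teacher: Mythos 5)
Your proposal is correct and follows essentially the same approach as the paper: reduce to the fine-grained sectors, use the per-pyramid dichotomy on $X$ slots, and derive $\lambda_\nu=0$ in the $c=1$ branch from the $(\Id-Q_p)$ factor hiding in $\Delta_{blue}$. The only (cosmetic) difference is that you first factor $\ket{\Omega}=\tfrac1x\Delta_{red}\ket{\phi}$ and read off $Q_p\ket{\phi}=0$ directly, whereas the paper carries $\Delta_{red}$ along and slides $Q$ across it inside the $P_\nu$ sandwich via the $X$-sector commutation relation.
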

    
  \begin{proof}
    We will prove this claim on the fine-grained $XY$ decomposition,
    by showing that $\bra{\Omega_\nu} E^{top} \ket{\Omega_\nu}
    \le |\nu|$. 
    
    Essentially, the claim follows from the fact that only the $Y$
    sites can contribute energy. Indeed, consider an $X$ pyramid,
    and let $Q$ be its blue constraint. Then by definition, either
    $Q\ket{\Omega_\nu} = 0$ or $Q\ket{\Omega_\nu} =
    \ket{\Omega_\nu}$. If the site contributes non-zero energy, the
    latter must hold. But $\ket{\Omega_{\nu}} \propto
    \Pi_\nu\ket{\Omega} \propto P_\nu
    \Pi_{red}\Pi_{blue}\ket{\psi}$, and so we get
    \begin{align}
    \label{eq:X-sector}
      QP_\nu \Pi_{red}\Pi_{blue}\ket{\psi} 
        = \Pi_\nu \Pi_{red}\Pi_{blue}\ket{\psi} \ .
    \end{align}
    We we show that the RHS of the above equation must vanish.
    Indeed, by \Eq{eq:pullback}, the LHS of the equation can be
    written as
    \begin{align}
      Q P_\nu \Delta_{red}\Delta_{blue} R_{red}R_{blue}\ket{\psi} \ .
    \end{align}
    But as the pyramids' projections commute with $P_\nu$, we get
    \begin{align}
      Q P_\nu \Delta_{red}\Delta_{blue} = 
      P_\nu Q \Delta_{red}\Delta_{blue} P_\nu \ ,
    \end{align}
    and because in the $X$ subspaces the blue and red constraints
    commute, this is equal to 
    \begin{align}
      P_\nu  \Delta_{red}Q\Delta_{blue} P_\nu \ .
    \end{align}
    This expression must vanish since $\Delta_{blue}$ contains a
    $\Id-Q$ term. It follows that the RHS of \Eq{eq:X-sector} must
    vanish and this proves the claim.
  \end{proof}

  We can now use the above bound inside \Eq{eq:2-Omega-cgXY},
  together with the bound 
  \begin{align}
  \label{eq:lambda-s}
    \lambda_s^2 \le \frac{1}{x^2} \theta^{2s}\eta_s^2 \
  \end{align} 
  which follows from 
  the exponential decay lemma~\ref{lem:expdecay2layers}. 
  We get: 
  \begin{align}\label{eq:bound} 
    \bra{\Omega}E^{top}\ket{\Omega} = \sum_s s\lambda_s^2 \le 
    \sum_s \frac{1}{x^2} s\theta^{2s}\eta_s^2 \ .
  \end{align} 

  In principle, inserting this into \Eq{eq:fksum} we could simply
  bound every $\eta_s^2$ by $1$, and, rearranging, get a bound on
  $x^2$. However, this bound would be bad for very small
  $\epsilon_0$.  Luckily, we can derive a stronger bound on
  $\eta^2_s$ for $s\ge 1$: 
  \begin{claim}\label{cl:etas} For every $s\ge 1$ 
    \begin{align}
      \eta^2_s \le \frac{1-x^2}{1-\theta^2} \ .
      \end{align}
  \end{claim}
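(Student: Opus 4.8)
The plan is to recall that $\eta_s^2 = \bra{\Phi} P_s \ket{\Phi}$ where $\ket{\Phi} = R_{red}R_{blue}\ket{\psi}$ (from the proof of Lemma~\ref{lem:expdecay2layers}), so the quantity we must control is the weight of $\ket{\Phi}$ in the ``$s$ many $Y$'s'' sector. The key observation is that for $s \ge 1$ the sector $P_s$ is supported entirely on states that have at least one $Y$-pyramid, and a $Y$-pyramid can contribute energy; conversely, a state in the $s=0$ sector lies in a common eigenspace of all the pyramid projections (all $X$'s), and in particular we will show it must be an accepting state of $\Pi_{blue}$. So the plan is to extract from $1 - x^2 = 1 - \norm{\Pi_{red}\Pi_{blue}\ket{\psi}}^2$ a lower bound on $\sum_{s\ge 1}(1-\theta^{2s})\eta_s^2$, and then bound that sum below by $(1-\theta^2)\sum_{s\ge1}\eta_s^2 = (1-\theta^2)\cdot(\text{something} \ge \eta_s^2)$ since all terms are non-negative.

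Concretely, first I would write $x^2 = \norm{\Pi_{red}\Pi_{blue}\ket{\psi}}^2 = \sum_\nu \bra{\psi}\cdots P_\nu\cdots\ket{\psi}$ and use \Eq{eq:pullback} exactly as in the proof of Lemma~\ref{lem:expdecay2layers} to rewrite the $\nu$-term, but now without applying the crude norm bound: one gets $x^2 = \sum_\nu \bra{\Phi} P_\nu\, \Delta_{blue}\Delta_{red}\Delta_{blue}\, P_\nu \ket{\Phi}$. On a $Y$-pyramid the operator $P_Y Q_{blue} Q_1\cdots Q_N Q_{blue} P_Y$ has norm $\le \theta^2$ by \Eq{eq:theta}, while on an $X$-pyramid the corresponding factor $P_X \Delta_{blue}\Delta_{red}\Delta_{blue} P_X$ equals $P_X (\Id - Q_{blue})P_X$ (using that all pyramid projections commute inside an $X$ subspace and $(\Id-Q)^2 = \Id - Q$), a projection, so its norm is at most $1$. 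Hence for $|\nu| = s$, $\bra{\Phi} P_\nu \Delta_{blue}\Delta_{red}\Delta_{blue} P_\nu\ket{\Phi} \le \theta^{2s}\bra{\Phi}P_\nu\ket{\Phi}$, but more importantly for $s = 0$ it equals $\bra{\Phi}P_\nu (\prod_{\text{pyr}}(\Id - Q_{blue}^{\text{pyr}})) P_\nu \ket{\Phi}$, which is exactly $\bra{\Phi}P_\nu \Delta_{blue} P_\nu \ket{\Phi}$.

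The next step is the bookkeeping: $x^2 \le \theta^0\cdot(\text{the }s=0\text{ contribution}) + \sum_{s\ge1}\theta^{2s}\eta_s^2$, and the $s=0$ contribution is at most $\norm{\Delta_{blue}\Delta_{red}R_{red}R_{blue}\ket{\psi}}^2 = \norm{\Pi_{red}\Pi_{blue}\ket{\psi}}^2 \le x^2$ restricted appropriately — actually one wants the cleaner route: since $\sum_s \eta_s^2 = \norm{\ket{\Phi}}^2 \le 1$ (as $R_{red}, R_{blue}$ are projections, shown in Lemma~\ref{lem:expdecay2layers}), write $\eta_0^2 = \norm{\ket{\Phi}}^2 - \sum_{s\ge1}\eta_s^2 \le 1 - \sum_{s\ge1}\eta_s^2$, and combine with $x^2 \le \eta_0^2 + \sum_{s\ge1}\theta^{2s}\eta_s^2 \le \eta_0^2 + \theta^2\sum_{s\ge1}\eta_s^2$. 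This gives $x^2 \le 1 - (1-\theta^2)\sum_{s\ge1}\eta_s^2$, hence $\sum_{s\ge1}\eta_s^2 \le \frac{1-x^2}{1-\theta^2}$, and since every $\eta_s^2$ with $s\ge1$ is at most the sum, the claim follows.

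The main obstacle I anticipate is justifying the inequality $x^2 \le \eta_0^2 + \theta^2 \sum_{s\ge1}\eta_s^2$ rigorously: one must be careful that $x^2 = \sum_\nu \bra{\Phi}P_\nu \Delta_{blue}\Delta_{red}\Delta_{blue}P_\nu\ket{\Phi}$ is a sum of non-negative terms (it is, since each is $\norm{\Delta_{red}^{1/2}\Delta_{blue}P_\nu\ket{\Phi}}^2$ — though $\Delta_{blue}\Delta_{red}\Delta_{blue}$ being a product of three projections with the outer two equal is positive semidefinite, so this is fine), and that on the $s=0$ sectors the operator really collapses to the projection $\Delta_{blue}$ restricted there, giving a coefficient exactly $\le 1$ rather than needing the full strength $\theta^0 = 1$ with no loss. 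Everything else is a repetition of computations already carried out in the proof of Lemma~\ref{lem:expdecay2layers}.
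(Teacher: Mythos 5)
Your argument is correct and ultimately lands in the same place as the paper's, but you take a longer path to the key inequality $x^2 \le \eta_0^2 + \theta^2\sum_{s\ge1}\eta_s^2$. The paper simply sums the already-established bound $\lambda_s^2 \le \frac{1}{x^2}\theta^{2s}\eta_s^2$ over all $s$ and, using $\sum_s\lambda_s^2 = 1$ (normalization of $\ket{\Omega}$), gets $x^2 \le \sum_s\theta^{2s}\eta_s^2$ in one line; you instead re-derive this from scratch by inserting $\sum_\nu P_\nu = \Id$ into $x^2 = \bra{\Phi}\Delta_{blue}\Delta_{red}\Delta_{blue}\ket{\Phi}$ and bounding pyramid by pyramid, which reproves rather than reuses the exponential-decay lemma. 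Two small imprecisions in your write-up, neither of which damages the argument: on an $X$-pyramid the factor $P_X\Delta_{blue}\Delta_{red}\Delta_{blue}P_X$ is $P_X(\Id - Q_{blue})\prod_j(\Id - Q_j)P_X$ rather than $P_X(\Id - Q_{blue})P_X$ --- the red constraints do not disappear --- but since all projections commute on the $X$-subspace this is still a projection and the norm bound $\le 1$ survives; and your opening remark that a state in the $s=0$ sector ``must be an accepting state of $\Pi_{blue}$'' is false (it can equally well sit in a $Q_{blue}=1$ eigenspace), but you never actually invoke it. The closing bookkeeping ($\eta_0^2 \le 1 - \sum_{s\ge1}\eta_s^2$, rearranging to $\sum_{s\ge1}\eta_s^2 \le \frac{1-x^2}{1-\theta^2}$, then dropping to a single summand) matches the paper exactly.
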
 
  \begin{proof}
  Summing over \Eq{eq:lambda-s}, we get
  \begin{align}
    1 \le \frac{1}{x^2} \sum_{s=0}^m \theta^{2s} \eta_s^2 \ ,
  \end{align}
  which is equivalent to
  \begin{align}
    x^2 \le \eta_0^2 + \sum_{s=1}^m \theta^{2s} \eta_s^2 \le
      \eta_0^2 + \theta^2\sum_{s=1}^m \eta_s^2 \ .
  \end{align}
  But $\sum_{s=0}^m \eta_s^2 \le 1$, so
  $\eta_0^2 \le 1-\sum_{s=1}^m \eta_s^2$, and
  \begin{align}
    x^2 \le 1-\sum_{s=1}^m \eta_s^2 + \theta^2\sum_{s=1}^m \eta_s^2 
     = 1 - (1-\theta^2)\sum_{s=1}^m \eta_s^2 \ ,
  \end{align}
  which leads to
  \begin{align}
    \sum_{s=1}^m \eta_s^2 \le \frac{1-x^2}{1-\theta^2} \ ,
  \end{align}
  implying the desired bound.
\end{proof}

We can now finish the proof of \Lem{lem:aux2}. Following \Eq{eq:bound} and
\Eq{eq:fksum} we have:

  \begin{align}
    \epsilon_0 \le \bra{\Omega} E_{blue} \ket{\Omega} \le 
      f(k)\cdot \bra{\Omega} E^{top} \ket{\Omega} 
      \le f(k)\cdot\frac{1-x^2}{x^2}\cdot\frac{\theta^2}{(1-\theta^2)^3}
      \ ,
  \end{align}
  which yields 
  \begin{align}
    x^2 \le \frac{1}{(\epsilon_0/f)\cdot\frac{(1-\theta)^3}{\theta^2} + 1}
     = 1 - \Delta^2(0) \ .
  \end{align}
\end{proof}


\section{The (general) detectability lemma}
\label{sec:gen-detect}

The detectability lemma can be generalized for more than 2 layers
and for $\ell>0$.  This generalization gives us a more detailed
picture of the energy distribution. This is important when
$\epsilon_0$ is much bigger than $1$ but is still smaller than its
maximal value $M$. In such a case, the detectability lemma asserts
that not only there exists a layer in which some violations are
detectable - but that there must be a layer in which $\ell$ or more
violations are detectable. In other words, it forbids a situation in
which in all the layers the violations are of only few constraints
and there is $1/poly$ weight on very high violations (so as to not
violate the minimal energy constraint). For the lemma to hold, we
need to require that $\ell$ -- the number of violations -- does not
exceed some normalized version of the minimal energy $\epsilon_0$.

\begin{lem}[The general detectability lemma]
\label{lem:detect} 

  Consider a $\kProj$ system with $g$ layers and a ground energy
  $\epsilon_0>0$. Let $\Pgt{i}$ denote a projection into the space
  of more than $\ell$ violations in the $i$'th layer. Then there
  exist integer functions $r(\theta, k, g), f(k,g)>1$ such that for
  every $0\le \ell <\frac{1}{r}
  \left(\frac{\epsilon_0}{f} - \frac{1}{1-\theta}\right)$ and 
  every normalized state $\ket{\psi}$ there is at least one layer
  $i$ in which:
  \begin{align}
    \norm{\Pgt{i} \ket{\psi}}^2 \ge \frac{1}{(2g)^2}\Delta^2(\ell) \ .
  \end{align}
  $\Delta(\ell)$ is a function of $\ell, \epsilon_0, \theta,
  k, g$, and is given by
  \begin{align}
  \label{def:Delta}
    \Delta^2(\ell) = \left\{
      \begin{array}{lcl}
        1 - \frac{1}{(\epsilon_0/f)
          \frac{(1-\theta^2)^3}{\theta^2} + 1} &,& \ell=0 \\
        1 - \frac{1}{1-\theta}\cdot
          \frac{1}{(\epsilon_0/f) - r\ell} &,& \ell>0
      \end{array}
    \right. \ .
  \end{align}
\end{lem}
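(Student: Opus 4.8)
I would follow the architecture of the two-layer, $\ell=0$ proof (\Lem{lem:detect-2layers}): reduce the statement to the \emph{auxiliary lemma} that for every normalized $\ket{\psi}$, $\norm{\Pl{g}\cdots\Pl{1}\ket{\psi}}^2 \le 1-\Delta^2(\ell)$ with $\Delta^2(\ell)$ as in \Eq{def:Delta}, and then prove the auxiliary lemma by an energy estimate using \Lem{lem:gen-decay} in place of \Lem{lem:expdecay2layers}. For the reduction, assume for contradiction that $\norm{\Pgt{i}\ket{\psi}}^2 < \frac{1}{(2g)^2}\Delta^2(\ell)$ for every layer $i$, i.e.\ (as $\Pgt{i}=\Id-\Pl{i}$) $\norm{(\Id-\Pl{i})\ket{\psi}}<\frac{1}{2g}\Delta(\ell)$. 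Telescoping,
\begin{align}
  \Pl{g}\cdots\Pl{1}\ket{\psi} \;=\; \ket{\psi} - \sum_{i=1}^{g}\Pl{g}\cdots\Pl{i+1}\,\Pgt{i}\ket{\psi}\ ,
\end{align}
each summand of norm $\le\norm{\Pgt{i}\ket{\psi}}$. Since $\Pgt{i}$ is a projection, $\bra{\psi}\Pgt{i}\ket{\psi}=\norm{\Pgt{i}\ket{\psi}}^2$ is second order; with the (again telescoping) bound $\norm{(\Id-\Pl{g}\cdots\Pl{i+1})\ket{\psi}}<\tfrac12\Delta(\ell)$ and Cauchy--Schwarz this yields $\abs{\bra{\psi}\Pl{g}\cdots\Pl{i+1}\Pgt{i}\ket{\psi}} < \frac{1}{2g}\Delta^2(\ell)$. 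Expanding $\norm{\Pl{g}\cdots\Pl{1}\ket{\psi}}^2 = 1 - 2\,\mathrm{Re}\sum_i\bra{\psi}\Pl{g}\cdots\Pl{i+1}\Pgt{i}\ket{\psi} + \norm{\sum_i\Pl{g}\cdots\Pl{i+1}\Pgt{i}\ket{\psi}}^2$ and dropping the nonnegative last term gives $\norm{\Pl{g}\cdots\Pl{1}\ket{\psi}}^2 > 1-\Delta^2(\ell)$, contradicting the auxiliary lemma; hence some layer has $\norm{\Pgt{i}\ket{\psi}}^2\ge\frac{1}{(2g)^2}\Delta^2(\ell)$. This is the $g$-layer version of the expansion in \Lem{lem:detect-2layers} and I expect it to be routine.

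\textbf{The auxiliary lemma: setting up the energy estimate.} I would set $\ket{\Omega}\EqDef\frac1x\Pl{g}\cdots\Pl{1}\ket{\psi}$ and $x\EqDef\norm{\Pl{g}\cdots\Pl{1}\ket{\psi}}$. Then $\epsilon_0\le\bra{\Omega}E\ket{\Omega}=\sum_{i=1}^{g}\bra{\Omega}E_i\ket{\Omega}$, where $E_i$ is the energy of layer $i$, so it suffices to bound each term. For $i=g$ this is immediate: $\ket{\Omega}$ lies in the image of $\Pl{g}$, hence carries at most $\ell$ violations in layer $g$ and $\bra{\Omega}E_g\ket{\Omega}\le\ell$. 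For $i<g$, as for the blue layer in \Lem{lem:detect-2layers}, pass to the coarse-grained $XY$ decomposition $\ket{\Omega}=\sum_s\lambda_s\ket{\Omega_s}$ (whose projections commute with all constraints) and split $E_i=E_i^{\mathrm{in}}+E_i^{\mathrm{out}}$ according to whether a layer-$i$ constraint lies inside or outside the pyramids of a fixed $XY$ decomposition. The analogue of \Cl{cl:s-energy} is a per-sector estimate
\begin{align}
  \bra{\Omega_s}E_i^{\mathrm{in}}\ket{\Omega_s} \;\le\; \ell + c(k,g)\,s\ ,
\end{align}
coming from \Eq{eq:inside-out}--\Eq{eq:pyr-decomp}: inside an $X$-pyramid all constraints are simultaneously diagonal, so the layer-$i$ violations carried there are capped by the index $j_i\le\ell$ of the $\Ppr{i}{j_i}$ factor, while each of the $s$ many $Y$-pyramids holds at most $c(k,g)$ layer-$i$ constraints.

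\textbf{The auxiliary lemma: assembling the bound.} The plan is then to plug in $\lambda_s^2\le\frac1{x^2}\,k^{2g^2\ell}\big(\tfrac{\ell+1}{\ell!}\big)^{2g}s^{2g\ell}\theta^{2s}\eta_s^2$ for $s\ge\ell$ from \Lem{lem:gen-decay}, sum over the constantly many $XY$ decompositions needed so that every layer-$i$ constraint is covered (the ``$\mathrm{out}$'' contribution of one absorbed by the ``$\mathrm{in}$'' contributions of the others), and bound the resulting $\theta$-weighted sums with an analogue of \Cl{cl:etas} (controlling $\eta_s^2$ by $O(1-x^2)$ past the threshold) together with the convergence of $\sum_s s^{2g\ell+1}\theta^{2s}$; this gives $\sum_i\bra{\Omega}E_i\ket{\Omega}\le c_1(\theta,k,g)\,\ell + c_2(\theta,k,g)\,\frac{1-x^2}{x^2}$. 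Inverting $\epsilon_0\le c_1\ell + c_2\frac{1-x^2}{x^2}$ yields $x^2\le 1-\Delta^2(\ell)$ of the form \Eq{def:Delta}, with the integers $f(k,g)$ and $r(\theta,k,g)$ chosen to absorb $c_1,c_2$. For $\ell=0$ all the combinatorial factors are trivial and one recovers the two-layer computation and the first branch of \Eq{def:Delta}; for $\ell>0$ the cruder geometric estimate produces the second branch, and the hypothesis $\ell<\frac1r\big(\frac{\epsilon_0}{f}-\frac1{1-\theta}\big)$ is exactly the requirement $(\epsilon_0/f)-r\ell>\frac1{1-\theta}$, i.e.\ $0<\Delta^2(\ell)<1$.

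\textbf{Main obstacle.} The architecture is a copy of the two-layer proof, so the real work lies in the $\ell>0$ (and $g>2$) bookkeeping. For $\ell>0$ the $\Pl{i}$ are no longer single products $\prod(\Id-Q)$ but sums of $\ell+1$ products (\Eq{eq:inside-out}), so the clean fact ``$Q\ket{\Omega_\nu}=0$ inside every $X$-pyramid'' behind \Cl{cl:s-energy} degrades to ``at most $\ell$ violations survive inside the $X$-pyramids'', and --- more seriously --- the $\Ppr{i}{j_i}$ for different layers commute only inside $X$-subspaces, not globally, so the sector-by-sector energy accounting, the handling of the middle layers (neither apex nor last), and \Eq{eq:pyr-decomp} itself all require care. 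This is the same difficulty that pushes the proof of \Lem{lem:gen-decay} into an appendix, and it is what generates the combinatorial factors $k^{g^2\ell}$, $\big(\tfrac{\ell+1}{\ell!}\big)^{g}$ and $s^{g\ell}$; keeping them controlled --- in particular making the accumulated constants collapse into a single $\theta$-dependent integer $r$ and a $\theta$-independent integer $f$ --- is the technical crux.
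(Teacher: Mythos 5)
Your proposal matches the paper's architecture closely: the reduction to an auxiliary lemma ($\norm{\Pl{g}\cdots\Pl{1}\ket{\psi}}^2\le 1-\Delta^2(\ell)$) via telescoping and Cauchy--Schwarz, followed by an energy estimate on $\ket{\Omega}=\frac1x\Pl{g}\cdots\Pl{1}\ket{\psi}$ that uses the coarse-grained $XY$ decomposition, the per-sector claim $\bra{\Omega_s}E^{top}\ket{\Omega_s}\le\ell+s$, the decay bound from \Lem{lem:gen-decay}, and the choice of $r(\theta,k,g)$ to make the tail $\sum_{s>r\ell} s^{2g\ell+1}\theta^{2s}$ small; all of this is what the paper does (\Lem{lem:aux}, \Lem{lem:layer}, \Cl{cl:energy}, \App{sec:r}). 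One organizational divergence is worth flagging. You propose to bound each $E_i^{\mathrm{in}}$ (for every layer $i$) simultaneously from a fixed family of $XY$ decompositions built with layer-$1$ apexes, covering non-apex layers by their ``inside'' pieces. The paper instead bounds only the pyramid-top energy $E^{top}$ (layer $1$) from each $XY$ decomposition, covers layer $1$ with $f_1(k,g)$ decompositions, and then handles layers $2,\dots,g$ \emph{recursively}: it replaces $g\to g-1$ and $\ket{\psi}\to\Pl{1}\ket{\psi}$, so layer $i$ becomes the apex of a freshly built pyramid structure on the remaining layers. That recursion sidesteps a real subtlety in your version: for $i>1$ the layer-$i$ constraints are not pyramid tops, and it is not obvious that constantly many $XY$ decompositions with layer-$1$ apexes can cover every layer-$i$ constraint as an ``inside'' constraint, nor that the pull-back argument giving the $\ell$ cap on $X$-sector violations goes through for such mid-pyramid terms as cleanly as it does for the apex. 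Your bound $\bra{\Omega_s}E_i^{\mathrm{in}}\ket{\Omega_s}\le\ell+c(k,g)s$ is plausible, but the paper's recursion is the cleaner route and is what you would want to adopt if you wrote the argument out in full.
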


The proof of the general detectability lemma is deduced from the 
exponential decay lemma for general $\ell$, using similar reasoning
to how the simpler detectability lemma is deduced from the $\ell=0$
exponential decay lemma. However, the technical details are much
more involved due to the same combinatorial factors that appear when
moving from $\ell=0$ to $\ell>0$ in the exponential decay lemmas.
The full proof in given in \App{sec:Pdetect}.

\ignore{
\begin{proof} {\bf Sketch:}
As in the 2-layers, $\ell=0$ case, this lemma is proved by proving
that for every state $\ket{\psi}$,
$\norm{\Pl{g}\cdots\Pl{1}\ket{\psi}}^2 \le 1-\Delta^2(\ell)$. The
proof of this fact again follows the same lines as the simple case.
We upper bound the energy of the normalized state $\ket{\Omega}
\EqDef \frac{1}{x}\Pl{g}\cdots\Pl{1}\ket{\psi}$ by a function of
$(x, \theta, k, g, \ell)$. The same energy is lower-bounded by
$\epsilon_0$, hence by inverting the inequality we obtain an upper
bound for $x^2$. To find the upper bound, the energy of each layer
is estimated by covering it with a finite number of $XY$
decomposition and estimating $E^{top}$ for each decomposition. This
is where we use the general exponential decay of \Sec{sec:gen-decay}.
\end{proof} }

\section{Relation of the simple detectability lemma and Kitaev's lemma}
\label{sec:Kitaev}

The $\ell=0$ detectability lemma for the two layers can be seen as
the converse of a special case of Kitaev's geometrical lemma, 
crucial in his proof of the quantum Cook-Levin theorem
\cite{ref:Kit02}.

\begin{lem}[Kitaev's lemma (see \Ref{ref:Kit02}]
  Given finite-dimensional operators $P\ge 0, Q\ge 0$ 
  with null eigenspaces, then
  \begin{align}
  \label{eq:kitaev}
    P + Q \ge \min\Big\{\Delta(P), \Delta(Q)\Big\}\cdot(1 - \cos\alpha) \ ,
  \end{align}
  where $\Delta(O)> 0$ is the smallest nonzero eigenvalue of $O$,
  and $\alpha$ the angle between the null spaces of $P$ and $Q$.
\end{lem}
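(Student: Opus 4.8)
The plan is to strip the statement down to the geometry of a pair of subspaces. First I would replace $P$ and $Q$ by scaled projectors. Let $\Pi_P$ and $\Pi_Q$ be the projections onto the supports of $P$ and $Q$ (the orthogonal complements of their null spaces). Since every nonzero eigenvalue of $P$ is at least $\Delta(P)$, the spectral theorem gives $P\ge\Delta(P)\,\Pi_P$ and likewise $Q\ge\Delta(Q)\,\Pi_Q$, so
\begin{align}
  P+Q \;\ge\; \Delta(P)\,\Pi_P + \Delta(Q)\,\Pi_Q \;\ge\; \min\{\Delta(P),\Delta(Q)\}\cdot\bigl(\Pi_P+\Pi_Q\bigr)\ .
\end{align}
It therefore suffices to prove $\Pi_P+\Pi_Q \ge (1-\cos\alpha)\,\Id$. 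Writing $A=\Id-\Pi_P$ and $B=\Id-\Pi_Q$ for the projections onto the two null spaces $\mathcal{K}=\ker P$ and $\mathcal{L}=\ker Q$, and recalling that $\cos\alpha=\norm{AB}$ is by definition the cosine of the smallest principal angle between $\mathcal{K}$ and $\mathcal{L}$, the desired inequality is equivalent to the bound $A+B\le(1+\cos\alpha)\,\Id$.

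The key tool is Jordan's lemma (equivalently, the CS decomposition of a pair of subspaces): the finite-dimensional Hilbert space decomposes as an orthogonal direct sum of subspaces that are simultaneously invariant under $A$ and $B$ and each have dimension one or two. I would then check $A+B\le(1+\cos\alpha)\,\Id$ block by block. On a one-dimensional block, $A$ and $B$ act as scalars in $\{0,1\}$, so $A+B$ acts there as a scalar in $\{0,1,2\}$; the value $2$ occurs only when the block lies in $\mathcal{K}\cap\mathcal{L}$, in which case $\cos\alpha=1$ and the bound reads $2\le 2$. On a two-dimensional block, $\mathcal{K}$ and $\mathcal{L}$ restrict to two lines meeting at some angle $\theta_j$, and an explicit $2\times 2$ diagonalization shows that $A+B$ has eigenvalues $1\pm\cos\theta_j$; since $\theta_j$ is at least the smallest principal angle $\alpha$, the larger one, $1+\cos\theta_j$, is at most $1+\cos\alpha$. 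Taking the maximum of the eigenvalues over all blocks gives $A+B\le(1+\cos\alpha)\,\Id$, hence $\Pi_P+\Pi_Q\ge(1-\cos\alpha)\,\Id$, and combining with the first display yields $P+Q\ge\min\{\Delta(P),\Delta(Q)\}\,(1-\cos\alpha)\,\Id$, as claimed.

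The only step with real content is the simultaneous block reduction of the two projectors; once that is in hand, the proof is a one-line spectral estimate plus an explicit $2\times 2$ computation. I expect the main (minor) obstacle to be purely bookkeeping: being careful that the degenerate one-dimensional blocks — those contained in $\mathcal{K}$ alone, in $\mathcal{L}$ alone, in $\mathcal{K}\cap\mathcal{L}$, or in the common orthogonal complement — all respect the inequality, and that the identification of $\cos\alpha$ with $\max_j\cos\theta_j$ (together with the possible value $1$ contributed by $\mathcal{K}\cap\mathcal{L}$) is consistent with the definition $\cos\alpha=\norm{AB}$. A purely variational proof avoiding Jordan's lemma is possible in principle but messier, since the naive bounds on the lengths of the projections of a unit vector onto $\mathcal{K}$ and $\mathcal{L}$ are not tight; the pairwise subspace geometry really is doing the work here.
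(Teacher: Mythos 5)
The paper does not actually prove this lemma; it states it and cites Kitaev's QMA-completeness work \cite{ref:Kit02} as the source, so there is no in-paper argument to compare against. Judged on its own terms, your proof is correct. The reduction $P+Q\ge\min\{\Delta(P),\Delta(Q)\}(\Pi_P+\Pi_Q)$ is a clean use of the spectral theorem, and the equivalence between $\Pi_P+\Pi_Q\ge(1-\cos\alpha)\Id$ and $A+B\le(1+\cos\alpha)\Id$ for the complementary projectors is immediate. Jordan's lemma is exactly the right tool: on a two-dimensional invariant block carrying an angle $\theta_j$, the matrix $A+B$ has trace $2$ and determinant $\sin^2\theta_j$, hence eigenvalues $1\pm\cos\theta_j$, and since $\cos\alpha=\norm{AB}=\max_j\cos\theta_j$ (with the value $1$ absorbed into the case $\mathcal{K}\cap\mathcal{L}\neq\{0\}$), the block-by-block bound assembles into $A+B\le(1+\cos\alpha)\Id$. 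Your bookkeeping of the degenerate one-dimensional blocks, including the vacuous $(1,1)$ case where $\cos\alpha=1$ and the inequality becomes trivial, is also right.

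One remark on relation to the original source: Kitaev's own argument (Lemma 14.4 in Kitaev--Shen--Vyalyi) is variational rather than block-diagonal — it estimates $\bra{\eta}(\Id-\Pi_P)\ket{\eta}+\bra{\eta}(\Id-\Pi_Q)\ket{\eta}$ directly for an arbitrary unit vector $\ket{\eta}$ by projecting onto the two null spaces and using the angle bound. Your Jordan-lemma route gives the same inequality and is arguably tidier, since the two-projection geometry is made fully explicit and the tight constant $1-\cos\alpha$ falls out of a $2\times2$ eigenvalue computation rather than a chain of Cauchy--Schwarz estimates. Either way the content is identical, and both approaches make clear that $1-\cos\alpha=2\sin^2(\alpha/2)$ is the sharp constant, attained on the critical two-dimensional block.

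One tiny presentational note: it is worth stating explicitly at the outset that the lemma is vacuous (both sides collapse as intended) unless $\mathcal{K}\cap\mathcal{L}=\{0\}$; you effectively say this in passing via the $(1,1)$ block, but flagging it up front removes any worry about whether $\cos\alpha=1$ breaks the argument.
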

Therefore if $\Delta(P), \Delta(Q)$ are fixed, by
lower-bounding $\alpha$, we can lower-bound the minimal energy of
$P+Q$. 

The 2-layers, $\ell=0$ detectability lemma can be seen as the
converse of this statement that holds in special case. In such case
let the $Q,P$ operators be the $\Pi_{red}$ and $\Pi_{blue}$
projections from \Sec{sec:2detect}. Then the detectability lemma can
be used to lower-bound $\alpha$. Indeed, for every state
$\ket{\psi}$, \Lem{lem:aux2} asserts that
\begin{align}
  \bra{\psi}\Pi_{blue} \Pi_{red} \Pi_{blue}\ket{\psi} 
    \le 1-\Delta^2(0) \ .
\end{align}
Then the angle $\alpha$ is given by
\begin{align}
  \cos\alpha = \min_{\substack{\ket{\psi}\in H_P\\ \norm{\psi}=1}}
    \bra{\psi}Q\ket{\psi}
    = \min_{\substack{\norm{\psi}=1}}
    \bra{\psi}\Pi_{blue} \Pi_{red}\Pi_{blue} \ket{\psi} 
      \le 1-\Delta^2(0) \ ,
\end{align}
where $H_P$ is the null space of $P$. 
Therefore, $1-\cos\alpha \ge \Delta^2(\ell=0,\epsilon_0)$, and combining
it with \Eq{eq:kitaev}, we get
\begin{align}
  \Delta^2(\ell=0,\epsilon_0) \le 1-\cos\alpha \le \epsilon_0 \ .
\end{align}
Moreover, looking at \Eq{def:Delta}, we see that in the limit
$\epsilon_0\to 0$, 
\begin{align}
  \left(\frac{1-\theta^2}{\theta}\right)^2
    (1-\theta)\frac{\epsilon_0}{f} \le 1-\cos\alpha \le \epsilon_0 \ .
\end{align}

\section{The quantum gap amplification lemma}
\label{sec:gap}
Below we describe first the well known classical setting 
of gap amplification using walks on expander graphs
(for completeness 
we also provide a proof in the appendix). We then 
define and prove a quantum analogue of this lemma, using 
the machinery we have developed so far. 

\subsection{The classical amplification lemma on Expanders}
\label{sec:classic-gap}

We consider a $d$-regular expander graph $G=(V,E)$ with $n=|V|$
vertices and second largest eigenvalue $0<\lambda(G)<1$. With every
node of $G$ we associate a variable that takes values in a finite
alphabet $\Sigma$.  Every edge is associated with a local constraint
on the two values of the nodes in the edge. We refer to the set of
constraints as a constraint system $\mathcal{C}$. 

Let $\sigma$ denote an assignment of the variables. We define
$\UNSAT_\sigma(G)$ to be the fraction of unsatisfied edges for that
under that assignment: 
\begin{align}
  \UNSAT_\sigma(G) = \frac{\text{\# of unsatisfied edges}}{|E|} \ .
\end{align}

In the amplification lemma, we define a new constraint system on $G$
using the notion of a $t$-walk. A $t$-walk on a graph $G$ is a
sequence of $t+1$ adjacent vertices, corresponding to a path of $t$
steps on $G$, starting at the vertex $v_0$ and ending at $v_t$. We
denote the edges along the path by $\Be = (e_1, e_2, \ldots, e_t)$.
The new constraint system is defined as follows. Consider all
possible $t$-walks on $G$, and for each $t$-walk $\Be=(e_1, \ldots,
e_t)$ we define a constraint that is satisfied if and only if all
the constraints along the path are satisfied. Notice that the new
constraints are less local than the original constraints, since they
are defined on up to $t+1$ vertices. Moreover, the new constraint
system can no longer be thought of as a ``constraint-graph'' since
its constraints are no longer defined on edges but on sets of $t+1$
nodes. Rather, it is a constraint ``hyper-graph''. With some abuse
of notation, we will call the new constraint system $G^t$.

The $\UNSAT$ of $G^t$ is defined by
\begin{align}
  \UNSAT_\sigma(G^t) = \frac{\text{\# of unsatisfied
  $t$-walks}}{\text{total \# of $t$-walks}} \ .
\end{align}
It seems plausible that $\UNSAT_\sigma(G^t)$ would be significantly 
larger than $\ge \UNSAT_\sigma(G)$, since is one edge is unsatisfied in 
$G$, it would appear in many $t$-walks in $G^t$. If the constraints
in $G^t$  
were chosen by choosing $t$ edges in $G$ independently,
then we would have expected an amplification factor of $t$. 
The fact that we consider walks on an expander means that 
the behavior is very similar to the completely random case. 
The amplification lemma thus shows that that by moving from $G$ to $G^t$, the
$\UNSAT$ is ``amplified'' by a factor of $t$, provided that
$\UNSAT(G)$ is not too close to $1$.

\begin{lem}[The classical amplification lemma]\label{lem:classicalGA} 
  Let $G=(V,E)$ be an expander graph with second largest eigenvalue
  $0<\lambda<1$, and let $\mathcal{C}$ be a constraint system on it
  using an alphabet $\Sigma$. Let $G^t$ denote the $t$-walk
  constraint system that was defined above. 
  Define 
  \begin{align}
  \label{def:c}
    c(\lambda) \EqDef \frac{1}{2+\frac{2}{1-\lambda}}.  
  \end{align}
Then for every
  assignment $\sigma$, 
  \begin{align}
    \UNSAT_\sigma(G^t) \ge \left\{
      \begin{array}{lcl}
        t\cdot c(\lambda)\cdot \UNSAT_\sigma(G) &,& 
          \UNSAT_\sigma(G) \le \frac{1}{t} \\
        c(\lambda)  &,& 
          \UNSAT_\sigma(G) \ge \frac{1}{t} \ .
      \end{array}  \right.
  \end{align}
\end{lem}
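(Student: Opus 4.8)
First I would translate $\UNSAT_\sigma(G^t)$ into a hitting probability for a random walk and then run a second-moment argument. Fix the assignment $\sigma$, let $F\subseteq E$ be the set of edges it violates, and set $\delta\EqDef\UNSAT_\sigma(G)=|F|/|E|$. A uniformly random $t$-walk is generated by picking $v_0$ uniformly in $V$ and then taking $t$ steps, each along a uniformly random incident edge; because $G$ is $d$-regular, the uniform distribution on $V$ is stationary for this walk. Let $A$ be the normalized adjacency (walk) operator of $G$, so that $A\B{1}=\B{1}$ for the all-ones vector $\B{1}$, and $A$ has operator norm at most $\lambda$ on the orthogonal complement $\B{1}^{\perp}$. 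A $t$-walk is unsatisfied in $G^t$ iff at least one of its $t$ steps traverses an edge of $F$; writing $X_j\in\{0,1\}$ for the indicator that the $j$-th step traverses an $F$-edge and $N\EqDef X_1+\cdots+X_t$, we have $\UNSAT_\sigma(G^t)=\Pr[N\ge1]$. Since $\Av N\le\sqrt{\Av[N^2]}\cdot\sqrt{\Pr[N\ge1]}$ by Cauchy--Schwarz, it suffices to lower bound $\Av N$ and upper bound $\Av[N^2]$.

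For the first moment, stationarity makes the edge traversed at each step uniform over $E$, so $\Av[X_j]=\delta$ and $\Av N=t\delta$. For the second moment the diagonal contributes $\sum_j\Av[X_j]=t\delta$, so the real work is the cross terms $\Av[X_jX_k]$ with $j<k$. Conditioning on the vertex $v_j$ reached after $j$ steps and invoking the Markov property, $\Av[X_jX_k]=\langle p_j,\,q_{k-j}\rangle$, where $p_j(w)=\Pr[v_j=w,\ X_j=1]$ and $q_m(w)$ is the probability that a length-$m$ walk started at $w$ traverses an $F$-edge on its last step. By stationarity $p_j=\tfrac1n\B{f}$ with $\B{f}(w)\EqDef\deg_F(w)/d$, and a short computation gives $q_m=A^{m-1}\B{f}$. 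This is where expansion enters: decomposing $\B{f}=\delta\B{1}+\B{f}'$ with $\B{f}'\perp\B{1}$, we get $q_m=\delta\B{1}+A^{m-1}\B{f}'$ with $\norm{A^{m-1}\B{f}'}\le\lambda^{m-1}\norm{\B{f}'}$, while the elementary inequality $(\deg_F(w)/d)^2\le\deg_F(w)/d$ gives $\norm{\B{f}'}^2\le\norm{\B{f}}^2-\delta^2 n\le\delta n$. Writing $p_j=\delta\Bu+\tfrac1n\B{f}'$ and expanding the inner product (the cross terms with $\B{1}$ vanish since $\B{f}'\perp\B{1}$), this yields $\Av[X_jX_k]\le\delta^2+\delta\,\lambda^{\,k-j-1}$.

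Summing over $j<k$, using $\sum_{g\ge1}\lambda^{g-1}=1/(1-\lambda)$ and $2\binom{t}{2}\delta^2\le(\Av N)^2$, gives $\Av[N^2]\le(\Av N)^2+t\delta\bigl(1+\tfrac{2}{1-\lambda}\bigr)$, and therefore
\[
  \Pr[N\ge1]\ \ge\ \frac{(\Av N)^2}{\Av[N^2]}\ \ge\ \frac{t\delta}{\,t\delta+1+\frac{2}{1-\lambda}\,}\,.
\]
If $\UNSAT_\sigma(G)=\delta\le1/t$, the denominator is at most $2+\tfrac{2}{1-\lambda}=1/c(\lambda)$, so $\Pr[N\ge1]\ge c(\lambda)\,t\delta=c(\lambda)\,t\,\UNSAT_\sigma(G)$; if $\delta\ge1/t$, the right-hand side is increasing in $t\delta$ and hence at least its value at $t\delta=1$, which is $c(\lambda)$. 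These are exactly the two cases in the statement.

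The step I expect to be the main obstacle is the cross-term bound $\Av[X_jX_k]\le\delta^2+\delta\lambda^{k-j-1}$: this is where the spectral gap of the expander must be converted into a geometric decay of the correlation between ``step $j$ is bad'' and ``step $k$ is bad''. The subtle point there is controlling $\norm{\B{f}'}$ by $\sqrt{\delta n}$ (not just $\sqrt n$) via $(\deg_F/d)^2\le\deg_F/d$; this is what keeps the correlation bound proportional to $\delta$ rather than to $\sqrt\delta$, and hence what makes the final ratio linear in $t\,\UNSAT_\sigma(G)$ as required.
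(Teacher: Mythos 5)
Your proof is correct and follows essentially the same second-moment strategy as the paper: $\UNSAT_\sigma(G^t)=\Pr[N\ge 1]\ge (\Av N)^2/\Av[N^2]$, with the diagonal giving $t\delta$ and the cross terms controlled by $\Av[X_jX_k]\le\delta^2+\delta\lambda^{k-j-1}$, then the identical case split at $t\delta=1$. The one place you go beyond the paper is that you actually derive the cross-correlation bound via the spectral decomposition of $\B{f}$ and the estimate $\norm{\B{f}'}^2\le\delta n$, whereas the paper states that inequality and defers its proof to Dinur's reference.
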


The proof is provided in Appendix \ref{app:classicalGA}.  

\subsection{The quantum amplification lemma}
\label{sec:Pgap}

The setting of the quantum amplification lemma is a natural
generalization of the classical setting. We consider a $d$-regular
expander graph $G=(V,E)$ with a second-largest eigenvalue
$0<\lambda(G)<1$. On top of $G$ we define a $\kProj$ system as
follows. We identify every vertex with a qudit of dimension $q$.
Every edge $e\in E$ is identified with a projection $Q_e$ on the two
qudits that are associated with the vertices of the edge. This
defines $\kProj$ system with $k=2\log(q)$ and a Hamiltonian
\begin{align}
  H = \sum_{e\in E} Q_e \ .
\end{align}
For any state $\ket{\psi}$, we define the quantum $\UNSAT$ of the
system to be the average energy of the edges:
\begin{align}
  \QUNSAT_\psi(G) \EqDef \frac{1}{|E|}\bra{\psi} H\ket{\psi}
    = \frac{1}{|E|}\sum_{e\in E} \bra{\psi}Q_e\ket{\psi} \ .
\end{align}

To define a new -- ``amplified'' -- constraint system, we use a
construction similar to the classical case. We consider all possible
$t$-walks ($t$ is fixed) $\Be = (e_1, \ldots, e_t)$ and for each
such walk, we define a $t\log(q)$-local projection $Q_\Be$ as
follows. We take the intersection of all the accepting spaces along
the path and define it to be the accepting space of $Q_\Be$. In
other words, $Q_\Be$ projects into the orthogonal complement of that
space. We refer to the new system as $G^t$, and define
\begin{align}
  \QUNSAT_\psi(G^t) &\EqDef \frac{\sum_\Be \bra{\psi} Q_\Be
    \ket{\psi}} {\text{\# of $t$-walks}} \ , \\
  \QUNSAT(G^t) &\EqDef \min_{\psi} \QUNSAT_\psi(G^t) \ .  
\end{align}

As in the classical case, the quantum amplification lemma shows how
$\QUNSAT(G^t)$ is amplified with respect to $\QUNSAT(G)$. The
amplification is linear in $t$ when $\QUNSAT(G)$ is far enough from
$1$, and then becomes saturated, just like in the classical case. 

\begin{lem}[The quantum amplification lemma]
  Consider a $\kProj$ system on an expander graph $G=(V,E)$ with a
  second largest eigenvalue $0<\lambda<1$ as defined above. Then
  \begin{align}
    \QUNSAT(G^t) \ge c(\lambda)\cdot K(q,d, \theta) 
      \cdot\min\Big\{ t\cdot\QUNSAT(G), 1 \Big\} \ ,
  \end{align}
  Where $K(q, d, \theta)$ is independent of the
  graph size and $c(\lambda)$ is given by \Eq{def:c}.
\end{lem}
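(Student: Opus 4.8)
The plan is to lower-bound $\QUNSAT_\psi(G^t)$ for an arbitrary normalized state $\ket{\psi}$ and then minimize over $\psi$, following the three-step outline announced in the introduction: reduce the amplification of the hyper-constraint system $G^t$ to that of a single layer, use commutativity inside that layer to pass to a classical problem, and invoke the detectability lemma to guarantee a layer whose violation distribution is amenable to the classical amplification lemma. We may assume $\epsilon_0>0$ (otherwise $\QUNSAT(G)=0$ and there is nothing to prove) and write $M=|E|$, so $\QUNSAT(G)=\epsilon_0/M$. For the reduction, fix a $t$-walk $\Be$ and a layer index $i$, and let $\Pi^{(i)}_{\Be}$ be the projection onto the subspace in which every edge of $\Be$ lying in layer $i$ is satisfied. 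The accepting space of $Q_{\Be}$ is the intersection of the accepting spaces of all $e\in\Be$, so it is contained in the range of $\Pi^{(i)}_{\Be}$; hence the support of $Q_{\Be}$ contains the support of $\Id-\Pi^{(i)}_{\Be}$, which for projections gives $Q_{\Be}\ge\Id-\Pi^{(i)}_{\Be}$. Therefore, for any single layer $i^\ast$ (to be fixed below, depending on $\ket{\psi}$),
\begin{align}
  \QUNSAT_\psi(G^t)\ \ge\ \frac{1}{\#\{\Be\}}\sum_{\Be}\bra{\psi}\big(\Id-\Pi^{(i^\ast)}_{\Be}\big)\ket{\psi}\ .
\end{align}

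The constraints of layer $i^\ast$ act on pairwise-disjoint qudits and commute; measuring them on $\ket{\psi}$ induces a classical distribution on the set $S$ of violated layer-$i^\ast$ edges, and decomposing $\ket{\psi}=\sum_S\mu_S\ket{\psi_S}$ into the corresponding sectors we have $\bra{\psi_S}(\Id-\Pi^{(i^\ast)}_{\Be})\ket{\psi_S}$ equal to $1$ when $\Be$ meets $S$ and $0$ otherwise, since $\Pi^{(i^\ast)}_{\Be}$ is built from these commuting constraints. Averaging over $\Be$ and applying the combinatorial content of \Lem{lem:classicalGA} — that for any edge set $S$ the fraction of $t$-walks meeting $S$ is at least $c(\lambda)\min\{t|S|/M,1\}$ — yields, for every integer $\ell\ge0$,
\begin{align}
  \QUNSAT_\psi(G^t)\ \ge\ c(\lambda)\sum_S|\mu_S|^2\min\Big\{\frac{t|S|}{M},1\Big\}\ \ge\ c(\lambda)\,\norm{\Pgt{i^\ast}\ket{\psi}}^2\,\min\Big\{\frac{t(\ell+1)}{M},1\Big\}\ ,
\end{align}
using $\norm{\Pgt{i^\ast}\ket{\psi}}^2=\sum_{|S|>\ell}|\mu_S|^2$ and $|S|\ge\ell+1$ on those sectors.

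Now the detectability lemma supplies the layer $i^\ast$ and dictates how large $\ell$ may be. If $\epsilon_0$ is large enough that the admissible range of \Lem{lem:detect} contains an integer $\ell\ge1$, pick the largest such $\ell$ with $\Delta^2(\ell)\ge\tfrac12$; from the formula for $\Delta^2(\ell)$ this forces $\ell+1=\Theta(\epsilon_0)$ (constants depending on $\theta,k,g$), and \Lem{lem:detect} gives a layer $i^\ast$ with $\norm{\Pgt{i^\ast}\ket{\psi}}^2\ge\frac{1}{(2g)^2}\Delta^2(\ell)\ge\frac{1}{2(2g)^2}$, so the previous display gives $\QUNSAT_\psi(G^t)\ge c(\lambda)\,K\,\min\{t\epsilon_0/M,1\}$ for a constant $K$. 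If $\epsilon_0$ lies below that (constant) threshold, take $\ell=0$: the $\ell=0$ detectability lemma (\Lem{lem:detect-2layers} for two layers, and its $g$-layer analogue extracted in the same way from the $\ell=0$ exponential-decay bound) gives $\norm{\Pgt{i^\ast}\ket{\psi}}^2\ge\frac{1}{(2g)^2}\Delta^2(0)$, and since $\epsilon_0$ is bounded the formula for $\Delta^2(0)$ gives $\Delta^2(0)\ge\gamma\epsilon_0$ for a constant $\gamma(\theta,k,g)$, whence $\QUNSAT_\psi(G^t)\ge c(\lambda)\frac{\gamma\epsilon_0}{(2g)^2}\min\{t/M,1\}$. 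In either case, after absorbing the discrepancy between $\min\{t\epsilon_0/M,1\}$, $\min\{t(\ell+1)/M,1\}$ and $\epsilon_0\min\{t/M,1\}$ into the constant, the bound reads $\QUNSAT_\psi(G^t)\ge c(\lambda)K(q,d,\theta)\min\{t\QUNSAT(G),1\}$ (recall $g$ and $k=2\log q$ are determined by $d,q$); taking the smallest constant obtained and minimizing over $\ket{\psi}$ proves the lemma.

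The genuinely delicate step is the last one: reconciling the $\min$-terms so that the amplification comes out as $\min\{t\QUNSAT(G),1\}$ uniformly in $t$. Because \Lem{lem:detect} only licenses a violation threshold $\ell$ bounded by a normalized version of $\epsilon_0$, small $\epsilon_0$ forces us into the $\ell=0$ branch, where the classical step delivers $\min\{t/M,1\}$ rather than $\min\{t(\ell+1)/M,1\}$; matching this to the target once $t$ is large enough that the bound should saturate at a constant is where the bookkeeping is heaviest, and it uses the quantitative strength of the detectability lemma — not merely its qualitative statement (e.g. $\Delta^2(0)=\Theta(\epsilon_0)$, and, for $g>2$, the combinatorial factors of \Lem{lem:detect}) — together with a separate treatment of the very-long-walk regime $t\gg M$, where one exploits that a long walk on a connected expander meets every edge while $\epsilon_0>0$ leaves no state in the common accepting space of all constraints, so such walks contribute $\bra{\psi}Q_{\Be}\ket{\psi}=1$.
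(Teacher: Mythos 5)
Your argument matches the paper's own proof in structure and essential detail: reduce $\QUNSAT_\psi(G^t)$ to the amplification of a single layer $i^*$, diagonalize that layer's commuting constraints so the classical expander amplification lemma applies sector by sector, and then use the detectability lemma (in its $\ell=0$ form for small $\epsilon_0$ and its $\ell>0$ form otherwise) to guarantee a layer $i^*$ and a threshold $\ell$ that make the classical bound bite. The only differences are presentational — you use the operator inequality $Q_\Be \ge \Id - \Pi^{(i^*)}_\Be$ where the paper introduces the auxiliary system $G_i$ with null non-layer constraints, and you decompose directly by the violated-edge set $S$ where the paper first decomposes by the violation count $j$ and only then by the sector $\nu$ inside a sub-claim; both choices lead to the same bound $c(\lambda)\norm{\Pgt{i^*}\ket{\psi}}^2 \min\{t(\ell+1)/M,1\}$. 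One caveat on your closing remark: it is not true that every long $t$-walk on a connected expander meets every edge (that holds only for typical walks, and only once $t$ exceeds the cover time), so the proposed fix for the $t\gg M$ regime would not go through as stated; the paper instead handles the saturated regime in the high-energy case by observing that when $|E|/t < \ell+1$ the entire detectability weight sits in the sector $B$ where the classical bound already yields the constant $c(\lambda)$, and its low-energy case implicitly works in the range $t\lesssim |E|$, so neither argument actually relies on a covering claim.
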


\begin{proof}

  By definition, $\QUNSAT(G)=\epsilon_0/|E|$ where $\epsilon_0$ is
  the ground energy of $G$. Let $\ket{\psi}$ be a state for which
  $\QUNSAT(G^t) = \QUNSAT_\psi(G^t)$. 
  
  We first notice that our $\kProj$ system can be written with at
  most $g = 2d$ layers. We choose a layer $i$ and expand
  $\ket{\psi}$ in terms of its violations in that layer:
  \begin{align}
    \ket{\psi} = \sum_{j=0}^{|E|} \alpha_j\ket{\psi_j} \ .
  \end{align}
  Here $\ket{\psi_j}$ is the projection of $\ket{\psi}$ to the space
  with $j$ violations in the $i$'th layer. Thus $\ket{\psi}$ is a
  superposition of states in which the number of violated
  constraints of the $i$'th level have a well-defined value. 
  \ignore{In each such state there are exactly $j$ places with
  violations and the rest are satisfied.} 
  
  We consider an auxiliary $\kProj$ system $G_i$ which has same
  underlying graph $G$ and the same constraints of the $i$'th layer
  - but the rest of the constraints are null - i.e. they are always
  satisfied. It is clear that for every state $\ket{\psi}$,
  $\QUNSAT_\psi(G^t) \ge \QUNSAT_\psi(G_i^t)$. Moreover, as all the
  projections in $G_i^t$ commute within themselves and with the
  original projections of the $i$'th layer, we have
  \begin{align}
  \label{eq:i-basis}
    \QUNSAT_\psi(G_i^t) = \sum_j \alpha_j^2 \cdot
      \QUNSAT_{\psi_j}(G_i^t) \ .
  \end{align}
  We will now show:
  \begin{claim}
    \begin{align}
    \label{eq:Qbound}
      \QUNSAT_{\psi_j}(G_i^t) \ge \left\{
        \begin{array}{lcl}
          t\cdot c(\lambda)\cdot \frac{j}{|E|} &,& 
            \text{for $j\le\frac{|E|}{t}$} \\
          c(\lambda) &,& 
            \text{for $j>\frac{|E|}{t}$} 
        \end{array}\right. \ .
    \end{align}
  \end{claim}

  \begin{proof}
    This follows from the classical amplification lemma.  We expand
    $\ket{\psi_j}$ as a superposition $\ket{\psi_j} = \sum_\nu
    \beta_\nu \ket{\psi_{\nu}}$, where $\ket{\psi_{\nu}}$ has a
    well-defined value ($1$ or $0$, namely violating or not) at each
    edge of $G_i$, with the total number of violations being exactly
    $j$. Moreover, it is easy to see that as the projection into the
    state $\ket{\psi_\nu}$ commutes with the projections of $G_i$,
    then
    \begin{align}
      \QUNSAT_{\psi_j}(G_i) &= \sum_\nu \beta_\nu^2\cdot
        \QUNSAT_{\psi_{\nu}}(G_i^t) \ , \\
      \QUNSAT_{\psi_j}(G_i^t) &= \sum_\nu \beta_\nu^2\cdot
        \QUNSAT_{\psi_{\nu}}(G_i^t) \ ,
    \end{align}
    hence it is sufficient to prove
    \Eq{eq:Qbound} for
    $\QUNSAT_{\psi_{\nu}}(G_i^t)$. This, however, follows directly
    from the classical amplification lemma since under the state
    $\ket{\psi_{\nu}}$ the constraints of $G_i$ have a
    well-defined, classical values. We can therefore treat the situation 
    as a
    classical system $G_c$ with some assignment $\sigma$ and
    $\UNSAT_\sigma(G_c) = j/|E|$. According to
    the classical amplification lemma, if $j/|E| \le 1/t
    \Leftrightarrow j \le |E|/t$ then $\UNSAT_\sigma(G^t_c) \ge
    t\cdot c(\lambda)\cdot \frac{j}{|E|}$, otherwise,
    $\UNSAT_\sigma(G^t_c) \ge c(\lambda)$. But as everything is
    classical for $G_i$ and $G_i^t$ in the $\nu$
    sector then,
    \begin{align}
      \UNSAT_\sigma(G^t_c) = \QUNSAT_{\psi_{\nu}}(G_i^t)
    \end{align}
    and this proves the claim.
  \end{proof}
    
  Let us now use this claim to estimate the amplification. Combining
  \Eq{eq:Qbound} with \Eq{eq:i-basis}, we find
  \begin{align}
  \label{eq:UNSAT-Gt}  
    \QUNSAT(G^t) &= \QUNSAT_\psi(G^t) \ge \QUNSAT_\psi(G_i^t) \\
      &\ge t\frac{c(\lambda)}{|E|}\left( \alpha_1^2 + 2\alpha_2^2 + 3\alpha_3^2
      + \ldots + \frac{|E|}{t}\alpha_{|E|/t}^2 \right) +
       c(\lambda)\left( \alpha^2_{|E|/t+1} + \ldots+
       \alpha^2_{|E|}\right) \ . 
  \end{align}
  Therefore, as $\QUNSAT(G)=\frac{\epsilon_0}{|E|}$, the
  amplification ratio we are looking for is
  \begin{align}
    \label{eq:main-amp}
    \frac{\QUNSAT(G^t)}{\QUNSAT(G)} \ge
      t\frac{c(\lambda)}{\epsilon_0}\left( \alpha_1^2 + 2\alpha_2^2 + 3\alpha_3^2
      + \ldots + \frac{|E|}{t}\alpha_{|E|/t}^2 \right) +
        c(\lambda)\cdot\frac{|E|}{\epsilon_0}
          \left( \alpha^2_{|E|/t+1} + \ldots+
       \alpha^2_{|E|}\right)
  \end{align}
  The above equation is central and can be derived for \emph{any}
  layer (namely, for any $i$).  However, without additional
  information, it cannot be used to show amplification of
  $\QUNSAT(G^t)$. The reason is that the weights $\alpha_j^2$ can
  theoretically conspire in such a way that no amplification would
  occur. For example, $1/poly(|E|)$ of the weight can be
  concentrated on $\alpha^2_{|E|}$ and the rest on $\alpha^2_0$, and
  then there is no amplification since in these two sectors there is
  no amplification (one is completely satisfied and the other is
  completely saturated). Fortunately, we can use the detectability
  lemma to rule out the possibility that this sort of non-amplifying
  distribution appears \emph{simultaneously} in all layers.

  The idea is to consider two possible cases: $\frac{\epsilon_0}{f}
  - \frac{4}{1-\theta} \le 2r$ (the low-energy case) and
  $\frac{\epsilon_0}{f} - \frac{4}{1-\theta} > 2r$ (the high-energy
  case). For the former we use the $\ell=0$. In the former, we use
  the $\ell>0$ detectability lemma.  Let us start with the low
  energy case.
  
  %
  \subsection{The low energy case: $\frac{\epsilon_0}{f} -
    \frac{4}{1-\theta} \le 2r$}
  
  Here we estimate the amplification using the $\ell=0$
  detectability. Specifically,  \Lem{lem:detect} ensures us that
  there a layer $i$ in which,
  \begin{align}
    \alpha_1^2 + \alpha_2^2 + \ldots + \alpha_{|E|}^2 \ge
    \frac{1}{(2g)^2}\Delta^2(0) \ .
  \end{align}
  On the other hand, it is easy to see that \Eq{eq:main-amp} implies
  \begin{align}
    \frac{\QUNSAT(G^t)}{\QUNSAT(G)} &\ge
      t\frac{c(\lambda)}{\epsilon_0}\left( \alpha_1^2 + \alpha_2^2 + \alpha_3^2
      + \ldots + \alpha_{|E|}^2 \right) \ .
  \end{align}
  Therefore, 
  \begin{align}
    \label{eq:ell0-amp}
    \frac{\QUNSAT(G^t)}{\QUNSAT(G)} 
      \ge  t\cdot c(\lambda)\cdot (2g)^{-2}\cdot
      \frac{\Delta^2(0)}{\epsilon_0} \ .
  \end{align}
  
  Let us now lower bound the expression
  $\frac{\Delta^2(0)}{\epsilon_0}$. $\Delta^2(0)$ is a continuous
  function of $\epsilon_0$ that is bounded between $0$ and $1$ for
  $\epsilon_0\ge 0$. We have to worry about to things: (i) if
  $\epsilon_0$ becomes too large, the ratio might become small, and
  (ii) as $\epsilon_0\to 0$, also $\Delta^2(0) \to 0$. The first
  worry is taken cared by fact that in the low-energy case
  $\epsilon_0$ is upper bounded by $\frac{\epsilon_0}{f} -
  \frac{4}{1-\theta} \le 2r$. The second one is taken cared by
  noticing the approach of $\Delta^2(0)$ to $0$ as $\epsilon_0$ is
  linear in $\epsilon_0$ (see \Eq{def:Delta}). Therefore as
  $\epsilon_0\to 0$, the ratio approaches some positive constant. 
  All in all, we conclude that in the low-energy case,
  \begin{align}
    \frac{\QUNSAT(G^t)}{\QUNSAT(G)} \ge t\cdot c(\lambda)\cdot
      K_1(q,d, \theta) \ .
  \end{align}

  %
  \subsection{The high energy case: $\frac{\epsilon_0}{f} -
    \frac{4}{1-\theta} \ge 2r$}
    
  In the high-energy case, we use the detectability lemma with a
  particular $\ell$ to show the amplification. We choose $\ell$ as
  large as possible so that $(\ell+1)/\epsilon_0$ will be lower-bounded by
  a positive function of $q,d,\theta$. Specifically, 
  the high energy condition
  implies $\frac{\epsilon_0}{f} - \frac{2}{1-\theta} \ge 2r$, and
  so we choose\footnote{Note that by assumption
  $\epsilon_0$ is larger than $2r$, which can only happen when $|E|$
  -- the total number of constraints in the system -- satisfies
  $|E|>2r$, therefore the $\ell$ we choose makes sense.}
  \begin{align}
  \label{eq:choose-ell}
    \ell = \left\lfloor \frac{1}{r}\left(\frac{\epsilon_0}{f} -
    \frac{2}{1-\theta}\right)\right\rfloor \ge 2 \ .
  \end{align}
  Then on one hand, 
  \begin{align}
    \ell < \frac{1}{r}\left(\frac{\epsilon_0}{f} -
    \frac{2}{1-\theta}\right) \ ,
  \end{align}
  and so $(1-\theta)\left(\frac{\epsilon_0}{f} -r\ell\right) > 2$,
  yielding a finite detectability in \Lem{lem:detect}:
  \begin{align}
    \Delta^2(\ell) > 1-\frac{1}{2} &= \frac{1}{2} \\
    &\Downarrow \\
    \alpha_{\ell+1}^2 + \alpha_{\ell+2}^2 + \ldots + \alpha_{|E|}^2
      &\ge \frac{1}{(2g)^2}\Delta^2(\ell) \ge \frac{1}{8g^2} \ .
  \end{align}
  On the other hand, \Eq{eq:choose-ell} also implies 
  \begin{align}
    \ell+1 \ge \frac{1}{r}\left(\frac{\epsilon_0}{f} -
    \frac{2}{1-\theta}\right) \ ,
  \end{align}
  and so
  \begin{align}
  \label{eq:mid}
    \frac{\ell+1}{\epsilon_0} \ge \frac{1}{r}\left(\frac{1}{f} -
    \frac{2}{\epsilon_0(1-\theta)}\right) \ .
  \end{align}
  But $\frac{\epsilon_0}{f} - \frac{4}{1-\theta} > 0$, therefore
  \begin{align}
  \label{eq:ell1}
    \frac{\ell+1}{\epsilon_0} \ge \frac{1}{2fr} \ .  
  \end{align}
  
  Let us now return to \Eq{eq:main-amp}. By omitting all the
  $\alpha_i^2$ terms with $i\le \ell$, we obtain
  \begin{align}
    \frac{\QUNSAT(G^t)}{\QUNSAT(G)} \ge
      t\frac{c(\lambda)}{\epsilon_0}(\ell+1)
        \left( \alpha^2_{\ell+1} + \ldots  \alpha_{|E|/t}^2 \right) +
        c(\lambda)\cdot\frac{|E|}{\epsilon_0}
          \left( \alpha^2_{|E|/t+1} + \ldots + \alpha^2_{|E|}\right)
  \end{align}
  Define
  \begin{align}
    A &= \alpha_{\ell+1}^2 + \ldots + \alpha_{|E|/t}^2 \ , \\
    B &= \alpha_{|E|/t+1}^2 + \ldots + \alpha_{|E|}^2 \ .
  \end{align}
  Then $A+B \ge \frac{1}{8g^2}$ and
  \begin{align}
      \frac{\QUNSAT(G^t)}{\QUNSAT(G)} \ge
      t\frac{c(\lambda)}{\epsilon_0}(\ell+1) A + 
        c(\lambda)\cdot\frac{|E|}{\epsilon_0} B \ .
  \end{align}
  If $A\ge \frac{1}{16g^2}$ then from the above equation and by
  \Eq{eq:ell1},
  \begin{align}
    \frac{\QUNSAT(G^t)}{\QUNSAT(G)} \ge
    t\cdot c(\lambda)\cdot\frac{\ell+1}{\epsilon_0} A 
    \ge t\cdot c(\lambda)\cdot\frac{1}{2fr}\cdot\frac{1}{16g^2}
    \EqDef  t\cdot c(\lambda)\cdot K_2(q,d,\theta) \ .
  \end{align}
  If, on the other hand, $B\ge \frac{1}{16g^2}$ then we can use
  \Eq{eq:UNSAT-Gt} to conclude that
  \begin{align}
    \QUNSAT(G^t) \ge c(\lambda) \cdot \frac{1}{16g^2} \EqDef
    c(\lambda)\cdot K_3(d) \ .
  \end{align}

  Combining all these 3 results, it is straightforward to define a
  function $K(q,d,\theta)$ such that
  \begin{align}
        \QUNSAT(G^t) \ge c(\lambda)\cdot K(q,d, \theta) 
      \cdot\min\Big\{ t\cdot\QUNSAT(G), 1 \Big\} \ .
  \end{align}
\end{proof}

\section{Discussion regarding quantum PCP}\label{sec:qpcp}
We discuss here quantum PCP in the context of gap 
amplification.  
To this end we define what we
mean by a quantum PCP theorem. 

The classical PCP theorem can be viewed as a strong characterization
of the $\NP$ class. One way to state it is by first defining the
class $\PCP[r,q]$. This is the class of all languages $L$ for which
there is a polynomial verifier that uses $\orderof{r}$ random bits
and has the following properties. It reads an instance $x$ and has
an oracle access to $\orderof{q}$ bits of some proof $\pi$. If $x\in
L$ there is a witness for which the verifier accepts with
probability 1. Otherwise, for every proof, the acceptance
probability is smaller than $1/2$. The PCP theorem then states that
$\NP = \PCP[\log(n), 1]$. 

To state the quantum PCP conjecture we first recall  the
quantum analogous of the $\NP$ class - the $\QMA$ class.
\begin{deff}[The class $\QMA$]
  A language $L$ is in $\QMA$ if there exists a quantum polynomial
  verifier $V$ and a polynomial $p(\cdot)$ such that
  \begin{itemize}
    \item If $x\in L$, there exists a witness $\ket{\xi}\in
      \Qbit^{\otimes p(|x|)}$ such that $\Pr[\text{$V(x,\ket{\xi})$
        accepts}] \ge 2/3$

    \item If $x\notin L$, then for every $\ket{\xi}\in
      \Qbit^{\otimes p(|x|)}$ we have $\Pr[\text{$V(x,\ket{\xi})$
        accepts}] \le 1/3$
  \end{itemize}
\end{deff}

Therefore, a natural definition for a  $\QPCP$ class is 
\begin{deff}[{The class $\QPCP[q]$}]
  A language $L$ is in $\QPCP[q]$ if there exists a quantum
  polynomial verifier $V$ and a polynomial $p(\cdot)$ with the
  following properties. $V$ receives as input a classical string $x$
  and a state $\ket{\xi}\in \Qbit^{\otimes p(|x|)}$. However, it has
  only access to $\orderof{q}$ random qubits from $\ket{\psi}$.
  In other words, it has access only to a a density matrix $\rho$
  which is the tracing out of all but the $\orderof{q}$ random qubits
  in $\ket{\xi}\bra{\xi}$. The random 
choice of the qubits is performed according 
to a probability distribution which is computed by the quantum verifier. 
We denote its action on $(x,\ket{\xi})$ by
  $V(x,\ket{\xi})$.

  Then the condition for $L$ to be in $\QPCP[q]$ is that 
  \begin{itemize}
    \item If $x\in L$, there exists a witness $\ket{\xi}\in
      \Qbit^{\otimes p(|x|)}$ such that $\Pr[\text{$V(x,\ket{\xi})$
        accepts}] \ge 2/3$.

    \item If $x\notin L$, then for every $\ket{\xi}\in
      \Qbit^{\otimes p(|x|)}$ we have $\Pr[\text{$V(x,\ket{\xi})$
        accepts}] \le 1/3$.
  \end{itemize}
\end{deff}
Notice that we did not give the quantum verifier any random bits,
since it is quantum and can generate randomness by itself. The above
definition can have various variants; for example, we might require
that the probability distribution, which defines which qubits the
verifier sees, is uniform. We do not dwell on the differences
between these definitions; they are subtle, and at this stage the
subject is not understood well enough (to us) in order to determine
the best definition. 

A quantum PCP theorem would read: 
\begin{conj}[Quantum PCP]
\label{con:qpcp}
  \begin{align}
    \QPCP[1] = \QMA \ .
  \end{align}
\end{conj}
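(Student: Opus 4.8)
The statement is an equality of classes, so I would prove the two inclusions separately. The inclusion $\QPCP[1]\subseteq\QMA$ is the easy direction: a $\QMA$ verifier holds the whole witness $\ket{\xi}$, so it can compute the same probability distribution that the $\QPCP$ verifier uses to select its $\orderof{1}$ qubits, select them, discard the rest, and simulate $V$ on the resulting reduced state; completeness and soundness are inherited verbatim. All the content lies in $\QMA\subseteq\QPCP[1]$. By Kitaev's theorem \cite{ref:Kit02}, every $L\in\QMA$ reduces in polynomial time to a $\kProj$ instance with the promise $\epsilon_0=0$ or $\epsilon_0\ge 1/\poly(n)$. Hence it suffices to give a polynomial-time map from such an instance to an $\orderof{1}$-local $\QSAT$ instance on a constant-degree interaction graph whose normalized ground energy $\QUNSAT$ is either $0$ or at least a universal constant $c$. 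Given such a map, the $\QPCP[1]$ verifier picks $\orderof{1}$ constraints uniformly, measures them on $\ket{\xi}$, accepts iff none is violated, and (by $\orderof{1}$-fold repetition) separates $\QUNSAT=0$ from $\QUNSAT\ge c$ with acceptance probabilities $2/3$ versus $1/3$.

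To build the gap-amplifying map I would emulate the three-step structure of Dinur's proof \cite{ref:Din07}, iterated $\orderof{\log n}$ times so that the gap grows from $1/\poly(n)$ to a constant while the size grows only by a $\poly(n)$ factor. One iteration consists of: (i) a \emph{quantum degree-reduction/expanderization} step that replaces each high-degree qudit by a small cluster of qudits linked by consistency (equality) constraints on an expander, turning the instance into a bounded-degree $\kProj$ system whose interaction graph is an expander, at the cost of changing $\QUNSAT$ by at most a constant factor; (ii) the \emph{quantum gap amplification lemma} of \Sec{sec:gap}, which for a suitable constant $t$ multiplies $\QUNSAT$ by $\Omega(t)$ (saturating at the constant $c(\lambda)\cdot K(q,d,\theta)$) while raising the locality of each new constraint to $t\log q$ and the degree to $\orderof{d^t}$; (iii) a \emph{quantum alphabet/locality reduction} step --- a quantum analogue of Dinur's assignment tester, i.e.\ composition with an inner PCP --- that brings the locality back to $\orderof{1}$ while preserving a constant fraction of the gap. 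Taking $t$ large enough that the $\Omega(t)$ factor of (ii) beats the constant losses of (i) and (iii) yields a net constant-factor amplification per iteration. The detectability lemma (\Lem{lem:detect}) is exactly what makes step (ii) a genuine amplification rather than a formality: it rules out the adversarial energy distribution that would simultaneously stall amplification in every layer.

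The hard part --- and the reason the statement is posed as a conjecture --- is that steps (i) and (iii) have no known quantum analogue. Quantum assignment testing in particular seems to collide with no-cloning: the classical construction spreads a purported assignment redundantly over many gadget registers and checks local consistency, but a quantum witness cannot be copied into such registers, so it is unclear how to \emph{locally} certify a global low-energy state. (As this paper notes, the obstruction may not be absolute --- a \emph{doubly} exponential-size quantum PCP can already be proved, suggesting no-cloning can be bypassed at some cost.) Two further gaps must be closed before the gap amplification lemma can even be invoked: it is established only under the assumptions that the interaction graph is already an expander and that every local term is a projection; the former is discharged by step (i), while the latter requires extending the $XY$-decomposition and the detectability lemma to general local terms (an open problem listed above). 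Concretely, then, this ``proof'' reduces Conjecture~\ref{con:qpcp} to three missing ingredients: (a) a quantum degree-reduction lemma, (b) a no-cloning-resistant quantum assignment tester, and (c) the extension of the detectability and gap-amplification machinery beyond projections --- with (a) and (b) being the genuinely open, and I expect principal, obstacles.
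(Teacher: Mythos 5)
This statement is labelled a conjecture in the paper and no proof is given or claimed anywhere in it, so there is nothing in the paper against which to check a proof. You correctly recognize this, and your ``research program'' is essentially a faithful, slightly elaborated restatement of the paper's own discussion in Section \ref{sec:qpcp}: reduce $\QMA\subseteq\QPCP[1]$ to Conjecture \ref{con:qgap} via Kitaev's $\kProj$-completeness, then hope to establish Conjecture \ref{con:qgap} by iterating a Dinur-style cycle of degree reduction, gap amplification, and assignment testing, with the paper supplying only the middle step (Section \ref{sec:gap}), and only under the restrictions that the interaction graph is an expander and the local terms are projections. Your list of the three missing ingredients, and your identification of no-cloning as the main obstacle to a quantum assignment tester (tempered by the paper's doubly-exponential PCP remark), also track the paper's own open-problems discussion. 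In short: you have not proved the conjecture (no one has), but you have correctly summarized the approach the paper advocates and correctly located the gaps that make it a conjecture rather than a theorem.

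One minor addition worth noting for completeness: the paper implicitly treats the one-sided-error variant $\QMA^1$ (ground energy exactly $0$ in the yes case), and flags in a footnote that the distinction between $\QMA$ and $\QMA^1$ is being glossed over; a fully rigorous formulation of your easy direction $\QPCP[1]\subseteq\QMA$ and of the reduction from Kitaev's theorem should be careful about whether the completeness parameter is $1$ or merely $2/3$, since the $\kProj$ reduction naturally lands in the one-sided regime.
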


An essentially equivalent way of formulating the quantum PCP theorem
is in terms of local Hamiltonians: is it possible to efficiently
transform any $\kProj$ system with $1/\poly$ promise gap into a
$\kProj$ system with constant promise gap. In the classical world,
this corresponds to the inapproximability of max-3SAT.

Recently, Dinur gave a beautiful new proof of the classical PCP
theorem \cite{ref:Din07}, which works directly in this setting. She
starts with a classical $\SAT$ system with a $1/\poly$ promise gap
and successively amplifies the gap by repeated doubling. This
doubling is accomplished by gap amplification followed by alphabet
reduction and degree reduction to control the size and locality. 

It is tempting to try to apply Dinur's proof to the quantum case,
with $\kProj$ replacing the $k$-$\SAT$ problem.  As mentioned in
\Sec{sec:gap}, the quantum UNSAT is the ground energy of the system
divided by the number of constraints; it is $\QMA$
complete\footnote{in this discussion we omit the important subtle
distinction between the notions of $\QMA$ and that of $\QMA^1$,
namely, $\QMA$ with one sided errors. This will be explained in a
later version.} to decide between the cases when it is zero or
larger than some threshold (called the promise gap) which is inverse
polynomial.  A quantum version of Dinur's approach would state that
this hardness holds even when the promise gap is constant.
Formally, this is stated as
\begin{conj}[Quantum PCP by gap amplification]
\label{con:qgap}
  There exists an efficient classical transformation that takes a
  $\kProj$ system with a promise gap of $1/\poly$ and transforms it
  into a new $\kProj$ system with a constant promise gap such that
  the original system has a zero ground energy iff the new system
  has a zero ground energy.
\end{conj}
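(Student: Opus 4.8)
The plan is to emulate Dinur's proof of the classical PCP theorem \cite{ref:Din07} in the quantum setting, with the quantum amplification lemma of \Sec{sec:Pgap} playing the role of Dinur's gap amplification step. Dinur's proof applies three operations, iterated $\orderof{\log n}$ times: \emph{preprocessing} (turn the constraint graph into a constant-degree expander at the cost of a constant factor in the normalized gap), \emph{gap amplification} (multiply the normalized gap by a constant factor, at the cost of making each constraint act on $t+1$ vertices), and \emph{alphabet/locality reduction} (bring the locality and local dimension back to fixed constants, again losing only a constant factor). Since amplification gains a constant $>1$ and the other two steps lose only constants, one fixes $t$ and the gadget parameters so that the product of the three constants is some $\gamma>1$; then each round maps a $\kProj$ system with normalized ground energy $\delta$ to one with normalized ground energy $\ge\min\{\gamma\delta,\,c\}$ for a fixed $c>0$, and maps zero to zero. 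Starting from the given $1/\poly(n)$-gap instance and iterating $\orderof{\log n}$ times pushes the gap to a constant, with each round multiplying the instance size by a fixed polynomial, so the whole transformation is a polynomial-time classical reduction --- which is exactly Conjecture~\ref{con:qgap}. The perfect-completeness bookkeeping (zero ground energy $\leftrightarrow$ zero ground energy) is easy for each step: for amplification it is immediate from the definition of $Q_\Be$ as the projection onto the complement of the intersection of accepting spaces, and for the gadget steps it holds by construction.

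First I would construct the quantum preprocessing step. Replace each qudit $v$ of degree $d_v$ by a ``cloud'' of $d_v$ fresh qudits arranged as a constant-degree expander, put ``equality'' projections on the cloud edges (projections onto the complement of the two-qudit agreement subspace), route each original constraint incident to $v$ to a distinct cloud qudit, and finally superimpose a constant-degree expander on the whole vertex set; the result is a constant-degree expanding $\kProj$ system. Completeness is clear (replace every single-qudit register of a ground state by its repetition-encoded version). Soundness --- if the original ground energy is $\ge\epsilon_0$ then the new normalized ground energy is $\ge\Omega(\epsilon_0)$ --- is the real content: a low-energy state of the new system need not lie exactly in the repetition-code subspace, and one must argue that the equality layers ``pull it back'' onto that subspace losing only a constant factor. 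This is the first place where the no-cloning theorem looks like an obstacle: one cannot force $d_v$ copies of an arbitrary qudit, so the equality constraints must be read as enforcing a repetition-code subspace, and one needs a spectral-gap lower bound for the repetition-code Hamiltonian on an expander (here \Lem{lem:detect} applied to the equality layers, or a direct expander-mixing estimate, should suffice).

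The hard part will be the alphabet/locality reduction step. After amplification each constraint is a $(t+1)\log q$-local projection $Q_\Be$ obtained by intersecting accepting spaces along a $t$-walk; to iterate, each $Q_\Be$ must be rewritten as a sum of $k$-local projections on a slightly larger qudit set so that the new local Hamiltonian has zero ground energy iff $Q_\Be$ is satisfiable and, when it is not, normalized ground energy $\ge\Omega(1)$. Classically this is an assignment tester (PCP of proximity), an object whose construction is itself essentially a PCP-via-composition argument; quantumly no such gadget is known, and its construction would be the main mathematical content of a full quantum PCP proof. A plausible attack is to take the fixed-size verification circuit for $Q_\Be$ and apply Kitaev's circuit-to-Hamiltonian construction \cite{ref:Kit02} to obtain a local Hamiltonian gadget, then invoke the detectability lemma to show that violations of the gadget Hamiltonian lower-bound violations of $Q_\Be$; the crux is to make the gadget's normalized gap a constant independent of $n$ and to control the ``proximity'' property (a state nearly satisfying the gadget is near a global satisfying assignment), neither of which is currently available. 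I therefore expect this step to be the genuine bottleneck; given the three local operations, assembling the $\orderof{\log n}$-round iteration and verifying the polynomial size bound and the perfect-completeness invariant is routine, exactly as in the classical case.
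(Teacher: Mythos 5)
Conjecture~\ref{con:qgap} is stated in the paper precisely as a conjecture, not a theorem: the paper neither proves it nor claims to. The quantum gap amplification lemma of \Sec{sec:Pgap} is offered only as the quantum analogue of \emph{one} of Dinur's three operations, and both the Introduction's open-problems discussion and \Sec{sec:qpcp} explicitly flag the other two --- degree reduction (expanderization) and alphabet/locality reduction (assignment testing) --- as major open problems. Your outline therefore reproduces exactly the route the paper itself sketches, and you correctly and candidly identify the same missing ingredients the paper does; but what you have written is a plan, not a proof, and it should not be presented as establishing the conjecture.

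On the substance of the outline, two concrete gaps prevent it from closing. First, the amplification constant in the quantum lemma is $c(\lambda)\cdot K(q,d,\theta)$, which depends on the qudit dimension $q$, the degree $d$, and the constraint-family parameter $\theta$; after one amplification round the constraints become $(t+1)\log q$-local over a blown-up alphabet with a denser interaction hypergraph, so $q$, $d$, $\theta$ all change and the constant $K$ is not under control. Fixing a single $t$ at the outset yields a self-improving iteration only if every round resets these to absolute constants, which is precisely the job of the alphabet/locality-reduction step --- and you correctly note that no quantum analogue of an assignment tester or PCP of proximity is known, nor does the detectability lemma supply the needed ``proximity'' (decodability) statement; it gives detectability of violations, not closeness to a ground state. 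Second, the soundness of your cloud-plus-equality preprocessing gadget requires a constant lower bound on the normalized ground energy of the repetition-code Hamiltonian when the original system is unsatisfiable, and the paper explicitly observes (in its discussion of perturbation gadgets) that no current construction achieves the required gap-preservation parameters; this is the degree-reduction open problem restated. With both hard steps unfilled, the conjecture remains open, exactly as the paper says.
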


By the $\QMA$-completeness of the $k$-$\QSAT$ problem, it is easy to
deduce that Conjecture~\ref{con:qpcp} follows from
Conjecture~\ref{con:qgap}.  Our quantum gap amplification lemma can
be seen as a step towards emulating Dinur's approach in the quantum
setting. 

We mention that it has been speculated that a quantum version of the
PCP theorem is impossible to achieve, at least along the lines of
Dinur's proof: Dinur's proof relies heavily on copying the values of
the nodes in the graph, whereas in the quantum setting such a
copying is impossible due to the no-cloning theorem, which asserts
that there is no unitary transformation that copies an unknown
state.  This argument seems problematic to formalize.  One of the
reasons is that the argument assumes that the transformation on the
Hamiltonian which amplifies the gap must be unitary.  However, there
is no such requirement on the Hamiltonian map.  In fact, we were
able to use this observation, and derive a quantum PCP theorem,
albeit with a \emph{doubly} exponential long proof, by a
straightforward discretization of the problem.  The resultant map on
Hamiltonians, and consequently on the eigenstates, is non-unitary
(not even a unitary embedding). On the other hand, it is not even
clear that unitary PCP transformations are ruled out.

We pose as an open problem to reduce the doubly exponential proof to
a singly exponential long proof quantum PCP; such a result would be
the quantum analogue of the early classical PCP results, in which
the proofs were of exponential size \cite{ref:Aro08}.

\section{Acknowledgments}
\label{sec:Acknowledgements}

We are grateful to Matt Hastings and Tobias Osborne for exciting and
inspiring discussions about the quantum gap amplification lemma, the
possibility of a quantum PCP theorem, and possible avenues to prove
(or disprove) it.  We are also grateful to Michael Ben-Or, Avinatan
Hassidim, and Barbara Terhal for useful discussions and comments.
Finally we thank Elad Eban for useful \LaTeX tricks.

%
%

\appendix

\section{Proving the exponential decay in the general case}
\label{sec:Pgen-decay}

In this section we prove the exponential decay in the general case,
which is stated in \Lem{lem:gen-decay} in \Sec{sec:gen-decay}. The
proof follows essentially the path of the 2-layers, $\ell=0$ case.
We start by proving the decay in the fine-grained $XY$
decomposition. Consider then a given $XY$ decomposition and some
sector $\nu$ with $|\nu|\ge \ell$.
\begin{claim}
\label{cl:weight} There exist $(\ell+1)^g$ states $\ket{\Phi_j},
  j=1, \ldots, (\ell+1)^g$ with $\norm{\Phi_j}\le 1$, such that the
  weight of every $XY$ sector $\nu$ with $|\nu|\ge\ell$, is
  bounded by
  \begin{align}
  \label{eq:main-bound}    
    \lambda_\nu^2 
      \le \frac{1}{x^2}\frac{(\ell+1)^g}{(\ell!)^{2g}}
       \left(|\nu|^{g\ell} k^{g^2\ell}\theta^{|\nu|}\right)^2
       \sum_{j=1}^{(\ell+1)^g} \norm{P_\nu \ket{\Phi_j}}^2 \ .
  \end{align}
  
\end{claim}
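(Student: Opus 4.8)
The plan is to prove Claim~\ref{cl:weight} by following the skeleton of the 2-layers, $\ell=0$ argument (Lemma~\ref{lem:expdecay2layers}), replacing the single product $\Pi_{red}\Pi_{blue}$ by the long product $\Pl{g}\cdots\Pl{1}$, and carefully tracking the combinatorial explosion caused by the $(\ell+1)$-term decomposition of each $\Pl{i}$ in \Eq{eq:inside-out}. The starting point is \Eq{eq:pyr-decomp}, which writes $\Pl{g}\cdots\Pl{1}$ as a sum over tuples $(j_1,\dots,j_g)$ with $0\le j_i\le\ell$ of products $(\Ppr{g}{j_g}\cdots\Ppr{1}{j_1})\cdot(\Prs{g}{\ell-j_g}\cdots\Prs{1}{\ell-j_1})$. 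This is exactly the analogue of the factorization $\Pi_{red}\Pi_{blue}=\Delta_{red}\Delta_{blue}R_{red}R_{blue}$, except there are now at most $(\ell+1)^g$ terms. The ``inside'' operators $\Ppr{i'}{j'}$ commute past $P_\nu$ and past the ``outside'' operators $\Prs{i}{j}$ for $i'<i$ by the pyramid support containment already used to derive \Eq{eq:pyr-decomp}; this is the step that lets us move all $P_\nu$'s to sandwich the ``$\Delta$-block'' and isolate a residual state built only from ``$R$-blocks''.

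Concretely, I would first fix the $XY$ decomposition and write $\lambda_\nu^2 = \tfrac{1}{x^2}\bra{\psi}(\Pl{1}\cdots\Pl{g})P_\nu(\Pl{g}\cdots\Pl{1})\ket{\psi}$, expand each $\Pl{i}$ via \Eq{eq:pyr-decomp}, and obtain a double sum over tuples $\mathbf{j}=(j_1,\dots,j_g)$ and $\mathbf{j}'=(j_1',\dots,j_g')$ of cross terms. Using Cauchy--Schwarz on the $(\ell+1)^g\times(\ell+1)^g$ matrix of inner products (which produces the $(\ell+1)^g$ prefactor and lets us index the states $\ket{\Phi_j}$ by tuples $\mathbf{j}$, with $j$ running over $1,\dots,(\ell+1)^g$), it suffices to bound, for each fixed tuple $\mathbf{j}$,
\begin{align}
  \norm{P_\nu (\Ppr{1}{j_1}\cdots\Ppr{g}{j_g})(\cdots)(\Ppr{g}{j_g}\cdots\Ppr{1}{j_1})P_\nu}
\end{align}
times $\norm{P_\nu\ket{\Phi_{\mathbf{j}}}}^2$, where $\ket{\Phi_{\mathbf{j}}}$ is the corresponding $R$-block applied to $\ket{\psi}$, with $\norm{\Phi_{\mathbf{j}}}\le 1$ since each $R$-block is a product of projections. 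The operator-norm factor again factorizes over pyramid sites, and on an $X$ site it is $\le 1$ trivially; the whole game is the $Y$ sites.

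The main obstacle is the per-pyramid-$Y$-site estimate. In the $\ell=0$ case each $\Ppr{blue}{0}$ restricted to a pyramid was simply $\Id-Q_{blue}$, so $P_Y\Delta\cdots\Delta P_Y$ was a product of pyramid projections (or complements) sandwiched by $P_Y$, and \Eq{eq:theta} gave $\theta$ per layer, hence $\theta^{2|\nu|}$. When $\ell>0$, $\Ppr{i}{j}$ restricted to a pyramid is the projection onto \emph{exactly} $j$ violations among the (at most $k^{\text{depth}}$-many, hence $\le k^g$) layer-$i$ constraints inside that pyramid, which is itself a sum over $\binom{\#\text{constraints}}{j}\le k^{g\ell}$ sign patterns of products of $Q$'s and $(\Id-Q)$'s. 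Expanding all $g$ layers inside one $Y$-pyramid gives at most $k^{g^2\ell}$ such ``monomials'', and each monomial, being a product in which every pyramid projection or its complement appears (possibly with repetitions), still has $P_Y(\cdot)P_Y$-norm $\le\theta$ by \Eq{eq:theta} after pairing as in the $\ell=0$ proof. The subtlety is that the decomposition $\Ppr{i}{j}=\sum(\text{monomials})$ is a sum of \emph{orthogonal} projections only across different $j$, but within fixed $\mathbf{j}$ the $g$ factors from different layers are not individually projections onto single patterns unless we further refine; one must either (i) absorb the pattern-count $k^{g\ell}$ per layer into the prefactor and use triangle inequality on the norm — costing $(k^{g\ell})^{g}=k^{g^2\ell}$ — then apply \Eq{eq:theta} monomial-by-monomial to get $\theta^{|\nu|}$ per side, or (ii) exploit that the patterns at a site are orthogonal to save a square root. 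I would take route (i) for safety, which cleanly yields the factor $k^{g^2\ell}\theta^{|\nu|}$ per side. Finally, the polynomial factor $|\nu|^{g\ell}$ and the $1/(\ell!)^{2g}$: the number of ways the $j_i$ violations inside pyramids can be distributed over the $|\nu|$ pyramid sites is at most $\binom{|\nu|}{j_i}\le |\nu|^{j_i}/j_i!\le |\nu|^{\ell}/\ell!$ per layer (an overcount, but harmless), contributing $|\nu|^{g\ell}/(\ell!)^g$; squaring through the Cauchy--Schwarz gives the $(\ell!)^{2g}$ in the denominator and $|\nu|^{g\ell}$ (after one more use of $|\nu|^{g\ell}\cdot|\nu|^{g\ell}$ being re-absorbed — more precisely, one keeps a single power $|\nu|^{g\ell}$ on each side and the displayed bound squares it). Assembling: the $(\ell+1)^g$ from Cauchy--Schwarz, the $1/(\ell!)^{2g}$ and $|\nu|^{g\ell}$ and $k^{g^2\ell}$ and $\theta^{|\nu|}$ from the per-tuple estimate, and the residual $\sum_j\norm{P_\nu\ket{\Phi_j}}^2$ from the $R$-block states, gives exactly \Eq{eq:main-bound}. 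The routine bookkeeping — verifying $\norm{\Phi_j}\le 1$, checking the commutation moves are licensed by the pyramid containment, and confirming the binomial overcounts — I would relegate to the appendix proof proper; the conceptual core is the monomial expansion of $\Ppr{i}{j}$ inside a $Y$-pyramid together with repeated application of \Eq{eq:theta}.
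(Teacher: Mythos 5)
Your proposal follows essentially the same route as the paper's proof: expand $\Pl{g}\cdots\Pl{1}$ via \Eq{eq:pyr-decomp}, commute $P_\nu$ through the $\Delta$-block to isolate the operator norm $\norm{P_\nu(\Ppr{g}{j_g}\cdots\Ppr{1}{j_1})P_\nu}$, bound it by a monomial expansion with surviving-term count $\frac{1}{(\ell!)^g}|\nu|^{g\ell}k^{g^2\ell}$ and \Eq{eq:theta} giving $\theta^{|\nu|}$, then Cauchy--Schwarz over the $(\ell+1)^g$ tuples and define $\ket{\Phi_j}$ as the $R$-block images of $\ket{\psi}$. The paper organizes the Cauchy--Schwarz slightly differently (it first applies the triangle inequality to $\norm{P_\nu \Pl{g}\cdots\Pl{1}\ket{\psi}}$ and then CS to the resulting single sum, rather than your double-sum-of-cross-terms version, which is equivalent) and does the monomial count globally as $\binom{|\nu|k^g}{\ell}$ per layer rather than your two-step per-pyramid-then-distribute count — note also that within a monomial each pyramid constraint or its complement appears exactly once (one factor per layer, each constraint in exactly one layer), so your aside ``possibly with repetitions'' is unnecessary and \Eq{eq:theta} applies directly — but these are cosmetic differences and the conceptual core matches.
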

  
\begin{proof}
  By definition,
  \begin{align}
    \lambda^2_\nu &= \bra{\Omega} P_\nu\ket{\Omega} \\
      &= \frac{1}{x^2} \bra{\psi}\Pl{1}\cdots\Pl{g} 
        \cdot P_\nu\cdot
        \Pl{g}\cdots\Pl{1}\ket{\psi} \\
      &=\frac{1}{x^2} \norm{P_\nu\cdot
        \Pl{g}\cdots\Pl{1}\ket{\psi}}^2 \ .
  \end{align}
  
  Let us estimate $\norm{P_\nu\cdot \Pl{g}\cdots\Pl{1}\ket{\psi}}$.
  Using \Eq{eq:theta}, we find
  \begin{align}
    \norm{P_\nu \Pl{g}\cdots\Pl{1}\ket{\psi}} \le
      \sum_{j_1, \ldots, j_g} 
      \norm{P_\nu (\Ppr{g}{j_g}\cdots \Ppr{1}{j_1}) P_\nu} \cdot
      \norm{P_\nu(\Prs{g}{\ell-j_g}\cdots \Prs{1}{\ell-j_1})
      \ket{\psi}} \ .
  \end{align}
  
  We will upper-bound $\norm{P_\nu (\Ppr{g}{j_g}\cdots
  \Ppr{1}{j_g}) P_\nu}$. Every projection $\Ppr{i}{j}$ can be
  written as a sum of products of the form $Q\cdot Q \cdot (\Id-Q)
  \cdot \ldots$ that work on the projections of the $i$'th layer
  that are inside the pyramid, such that there are exactly $j$
  projections of the form $Q$ and the rest is of the form $\Id-Q$ -
  corresponding to exactly $j$ violations. 
  
  The product $\Ppr{g}{j_g}\cdots \Ppr{1}{j_g}$, therefore, contains
  a huge number of such products. However, when we ``sandwich'' it
  between two $P_\nu$ projections, only few survive - those that are
  compatible with the $X$ portion of $P_\nu$. Let us estimate how
  many survive in a given layer. The $X$ part is completely fixed,
  and therefore we have to choose from all the $Y$ projections at
  most $\ell$ violations. There are $|\nu|$ $Y$ sites and at each
  site there are at most $k^g$ constraints, so overall, for $\ell
  \le |\nu|$, the number of surviving constraints in a single layer
  is bounded by
  \begin{align}
    \binom{|\nu|k^g}{\ell} 
      \le \frac{1}{\ell!} (|\nu|k^g)^\ell \ .
  \end{align}
  Considering all $g$ layers, the total number of surviving terms is
  therefore bounded by $\left(\frac{1}{\ell!} |\nu|^\ell
  k^{g\ell}\right)^g$. The norm of each term is bounded by
  $\theta^{|\nu|}$ as there are $|\nu|$ $Y$ sites. Therefore, the
  overall norm is bounded by
  \begin{align}
    \norm{P_\nu (\Ppr{g}{j_g}\cdots \Ppr{1}{j_1}) P_\nu} 
        \le \frac{1}{(\ell!)^g}|\nu|^{g\ell} k^{g^2\ell}\theta^{|\nu|} \ .
  \end{align}

  Thus far, we got
  \begin{align}
    \norm{P_\nu \Pl{g}\cdots\Pl{1}\ket{\psi}} \le
     \frac{1}{(\ell!)^g}
     |\nu|^{g\ell} k^{g^2\ell}\theta^{|\nu|} 
     \sum_{j_1, \ldots, j_g} 
       \norm{P_\nu(\Prs{g}{\ell-j_g}\cdots \Prs{1}{\ell-j_g})
         \ket{\psi}} \ .
  \end{align}
  There are $(\ell+1)^g$ terms in that sum, and so using standard
  Cauchy-Schwartz argument we get
  \begin{align}
    \norm{P_\nu \Pl{g}\cdots\Pl{1}\ket{\psi}}^2 \le
     \frac{(\ell+1)^g}{(\ell!)^{2g}}
     \left(|\nu|^{g\ell} k^{g^2\ell}\theta^{|\nu|}\right)^2
     \sum_{j_1, \ldots, j_g}
       \norm{P_\nu(\Prs{g}{\ell-j_g}\cdots \Prs{1}{\ell-j_g})
         \ket{\psi}}^2 \ .
  \end{align}
  Finally, grouping all the indices $(j_1, \ldots, j_g)$ into one
  big index $j$, and defining the un-normalized states
  \begin{align}
    \ket{\Phi_j} \EqDef \Prs{g}{\ell-j_g}\cdots \Prs{1}{\ell-j_1}
         \ket{\psi} \ ,
  \end{align}
  whose norm is smaller than or equal to $1$, we get that for
  $|\nu|\ge \ell$, 
  \begin{align}
    \lambda_\nu^2 = \bra{\Omega} P_\nu \ket{\Omega} 
    \le \frac{1}{x^2}\frac{(\ell+1)^g}{(\ell!)^{2g}}
     \left(|\nu|^{g\ell} k^{g^2\ell}\theta^{|\nu|}\right)^2
     \sum_{j=1}^{(\ell+1)^g} \norm{P_\nu \ket{\Phi_j}}^2 \ .
  \end{align}
  
\end{proof}

To prove \Lem{lem:gen-decay}, pass to the coarse grained $XY$
decomposition by grouping together all the $XY$ sectors with the
same number of $Y$'s. 
Then 
\begin{align}
\label{eq:pre-gen-decay}
  \lambda_s^2 = \sum_{|\nu|=s} \lambda^2_\nu \le 
    \frac{1}{x^2}\frac{(\ell+1)^g}{(\ell!)^{2g}}
     \left(s^{g\ell} k^{g^2\ell}\theta^{s}\right)^2
     \sum_{j=1}^{(\ell+1)^g} \sum_{|\nu|=s} 
     \norm{P_\nu \ket{\Phi_j}}^2 \ .
\end{align}
Defining
\begin{align}
  \eta_s^2 \EqDef \frac{1}{(\ell+1)^g} 
    \sum_{j=1}^{(\ell+1)^g}
    \sum_{|\nu|=s} \norm{P_\nu \ket{\Phi_j}}^2
    = \frac{1}{(\ell+1)^g} 
    \sum_{j=1}^{(\ell+1)^g}
    \norm{P_s \ket{\Phi_j}}^2 \ ,
\end{align}
we find that $\sum_s \eta^2_s \le 1$ (recall that $\norm{\Phi_j} \le
1$) and by \Eq{eq:pre-gen-decay}, for every $s\ge \ell$, 
\begin{align}
  \lambda_s \le \frac{1}{x}k^{g^2\ell}
    \left(\frac{\ell+1}{\ell!}\right)^g
    s^{g\ell} \theta^s \eta_s\ .
\end{align}

\section{Proving general detectability lemma, \Lem{lem:detect}}
\label{sec:Pdetect}

To prove this lemma, we will prove the following auxiliary lemma
\begin{lem}
\label{lem:aux} Let $\Pl{i} = \Id - \Pgt{i}$ denote the
  projection into $\ell$ or less violations in the $i$'th layer as
  in \Sec{sec:gen-decay}. Then 
  \begin{align}
  \label{eq:x2}
    \norm{\Pl{g}\cdots\Pl{1}\ket{\psi}}^2 \le 1-\Delta^2(\ell) \ ,
  \end{align}
  where $\Delta(\ell)$ is defined in \Lem{lem:detect}, and in the
  $\ell>0$ case we assume that $(\epsilon_0/f) - r\ell >
  \frac{1}{1-\theta}$.
\end{lem}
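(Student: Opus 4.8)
The plan is to generalize the two-layer, $\ell=0$ argument (\Lem{lem:aux2}) step by step, using the general exponential decay lemma (\Lem{lem:gen-decay}) in place of the simple one. Set $x\EqDef\norm{\Pl{g}\cdots\Pl{1}\ket{\psi}}$ and $\ket{\Omega}\EqDef\tfrac1x\Pl{g}\cdots\Pl{1}\ket{\psi}$; the goal is $x^2\le 1-\Delta^2(\ell)$. As before, the key inequality comes from sandwiching the ground energy: from below $\epsilon_0\le\bra{\Omega}H\ket{\Omega}=\sum_{i=1}^g\bra{\Omega}E_i\ket{\Omega}$ (with $E_i$ the energy of layer $i$), and from above a bound in terms of $x,\theta,k,g,\ell$; combining and inverting yields the claim with $f=f(k,g)$ and $r=r(\theta,k,g)$ chosen to swallow the constants. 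The hypothesis $(\epsilon_0/f)-r\ell>\tfrac1{1-\theta}$ is exactly what makes $1-\Delta^2(\ell)=\tfrac1{(1-\theta)((\epsilon_0/f)-r\ell)}$ strictly below $1$; outside that range the statement is vacuous.

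To bound $\bra{\Omega}E_i\ket{\Omega}$ for a fixed $i$, cover layer $i$'s constraints by the constant number $f(k,g)$ of $XY$ decompositions of \Sec{sec:Pi}, so that $\bra{\Omega}E_i\ket{\Omega}\le f(k,g)\max_d\bra{\Omega}E_i^{(d)}\ket{\Omega}$, where $E_i^{(d)}$ is the energy of the layer-$i$ constraints lying inside the pyramids of the $d$'th decomposition. For a fixed $d$, pass to the coarse-grained $XY$ decomposition $\ket{\Omega}=\sum_s\lambda_s\ket{\Omega_s}$ and use that the $XY$-projections commute with all constraints to get $\bra{\Omega}E_i^{(d)}\ket{\Omega}=\sum_s\lambda_s^2\bra{\Omega_s}E_i^{(d)}\ket{\Omega_s}$. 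The per-sector bound is the analogue of \Cl{cl:s-energy}: $\bra{\Omega_s}E_i^{(d)}\ket{\Omega_s}\le\ell+c_0\,s$ for a constant $c_0=c_0(k,g)$. The $Y$-pyramids contribute $\le c_0 s$ just as before (at most $s$ of them, each carrying a bounded number of layer-$i$ constraints). For the $X$-pyramids, invoke the pull-back \Eq{eq:pyr-decomp} to put all inside-projections $\Ppr{i'}{\cdot}$ to the left of all outside-projections $\Prs{i'}{\cdot}$, commute $P_\nu$ through the inside-projections (they commute with every pyramid constraint), and note that inside any $X$ subspace all pyramid constraints commute and take definite values; hence the number $a$ of violated layer-$i$ constraints in the $X$-pyramids is a fixed integer in the sector, and $P_\nu\Ppr{i}{j}P_\nu=0$ unless $a\le j$ — but only $j\le\ell$ occur in \Eq{eq:inside-out}, so either the sector vanishes or $a\le\ell$. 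This is exactly where the $\ell=0$ argument, in which the $\Id-Q$ factor forced $a=0$, is replaced by ``$a\le\ell$''.

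Inserting this, $\bra{\Omega}E_i^{(d)}\ket{\Omega}\le\ell+c_0\sum_s s\lambda_s^2$, so $\epsilon_0\le gf(k,g)\big(\ell+c_0\sum_s s\lambda_s^2\big)$, and it remains to control $\sum_s s\lambda_s^2$. Split at $s=\ell$: the $s<\ell$ part is $\le\ell$ (since $\sum_s\lambda_s^2=1$), and for $s\ge\ell$ substitute \Lem{lem:gen-decay}, $\lambda_s\le\tfrac1x k^{g^2\ell}\big(\tfrac{\ell+1}{\ell!}\big)^g s^{g\ell}\theta^s\eta_s$ with $\sum_s\eta_s^2\le1$, giving $\sum_{s\ge\ell}s\lambda_s^2\le\tfrac1{x^2}\big(k^{g^2\ell}(\tfrac{\ell+1}{\ell!})^g\big)^2\sum_{s\ge\ell}s^{2g\ell+1}\theta^{2s}\eta_s^2$. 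Here the weak estimate $\sum_s\eta_s^2\le1$ is too lossy (it would leave an $\ell$-dependent constant multiplying $1/x^2$); instead I would establish a sharper tail estimate $\sum_{s\ge\ell}\theta^{2s}\eta_s^2\lesssim(1-x^2)+(\text{small }\ell\text{-controlled term})$ in the spirit of \Cl{cl:etas}, by summing the decay inequality and using the $s<\ell$ part of $\{\lambda_s\}$ to absorb the deficit exactly as in the $\ell=0$ case, and then folding in the residual polynomial-in-$s$ and combinatorial prefactors carefully. Applying $x^2\le1$ to the leftover $\ell$-terms, this should rearrange to $\epsilon_0\le fr\ell+\tfrac f{(1-\theta)x^2}$, i.e.\ $x^2\le1-\Delta^2(\ell)$. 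The $\ell=0$ case is the $g$-layer analogue of \Lem{lem:aux2} and, using the exact \Cl{cl:etas} bound, reproduces the sharper $\Delta^2(0)$ whose vanishing as $\epsilon_0\to0$ is only linear, which accounts for the two-part formula for $\Delta^2$.

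The hard part, I expect, is this last step: threading the combinatorial prefactors $k^{g^2\ell}\big(\tfrac{\ell+1}{\ell!}\big)^g$ and the weights $s^{g\ell}$ of \Lem{lem:gen-decay} through the summation against $\theta^{2s}\eta_s^2$ so that the final bound on $\epsilon_0$ has the correct functional dependence on $\epsilon_0$ and $\ell$ — in particular so that $f$ may be taken to depend only on $k,g$ and $r$ only on $\theta,k,g$. The per-sector claim for $\ell>0$ (verifying that the ``$a\le\ell$'' bound survives the expansion \Eq{eq:pyr-decomp}) and the separate bookkeeping of the $\ell=0$ versus $\ell>0$ cases are the other technical points.
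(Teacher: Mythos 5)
Your overall strategy — sandwich $\epsilon_0$ between the energy of $\ket{\Omega}=\frac1x\Pl{g}\cdots\Pl{1}\ket{\psi}$ from below and an $x$-dependent upper bound from above, obtained via the coarse-grained $XY$ decomposition, a per-sector energy bound, and the general exponential decay lemma — matches the paper's. Two points merit comment, one minor and one substantial.

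\textbf{Minor structural variation.} You propose to bound $\bra{\Omega}E_i\ket{\Omega}$ for \emph{every} layer $i$ from the \emph{same} pyramid construction, pushing $\Ppr{i}{\cdot}$ across $\Prs{i'}{\cdot}$ via \Eq{eq:pyr-decomp} to show that the $X$-pyramid contribution is $\le\ell$. This can be made to work, but your $Y$-pyramid contribution becomes $c_0\,s$ rather than $s$ because a pyramid contains several constraints from layers $i>1$, and your covering argument is awkward because non-apex constraints are not the natural parameter of an $XY$ decomposition. The paper avoids both issues by re-running the pyramid construction for each layer in turn: to bound $E_2$ it treats layer $2$ as the new top layer acting on $\Pl{1}\ket{\psi}$ with $g{-}1$ layers, so each pyramid top carries exactly one constraint of the layer in question and the per-sector bound is exactly $\bra{\Omega_s}E^{top}\ket{\Omega_s}\le\ell+s$. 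This is cleaner but your version would still produce the right shape of bound, with $c_0$ absorbed into $r$.

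\textbf{The real gap: absorbing the combinatorial prefactor.} This is exactly where you flag uncertainty, and your proposed fix is not the right one. The exponential decay lemma gives $\lambda_s^2\le\frac1{x^2}\bigl(k^{g^2\ell}(\tfrac{\ell+1}{\ell!})^g s^{g\ell}\bigr)^2\theta^{2s}\eta_s^2$; the squared prefactor grows like $e^{O(\ell)}$ in $\ell$ and polynomially in $s$. A \Cl{cl:etas}-type refinement of the $\eta_s$'s cannot kill this prefactor: \Cl{cl:etas} only converts $\sum_{s\ge1}\eta_s^2$ into $\frac{1-x^2}{1-\theta^2}$ and does nothing about the $e^{O(\ell)}s^{2g\ell+1}$ growth, so if you split the sum at $s=\ell$ you are left with an $\ell$-dependent constant multiplying $1/x^2$, and the final $f,r$ would not be $\ell$-independent. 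The paper's mechanism is different and you should adopt it: bound $\eta_s\le1$ and split at $s=r\ell$, choosing $r=r(\theta,k,g)>3$ large enough that for all $s>r\ell$ one has $\bigl(k^{g^2\ell}(\tfrac{\ell+1}{\ell!})^g\bigr)^2 s^{2g\ell+1}\theta^{2s}\le\theta^s$ — this is a one-page elementary calculation (App.~\ref{sec:r}), and works because the extra factor $\theta^{s}\le\theta^{r\ell}$ available in this range decays geometrically in $\ell$, which is enough to dominate the $e^{O(\ell)}$ prefactor once $r$ is large. The head of the sum is then bounded trivially: $\sum_{s\le r\ell}s\lambda_s^2\le r\ell\sum_s\lambda_s^2=r\ell$. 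This yields $\bra{\Omega}E^{top}\ket{\Omega}\le(r+1)\ell+\frac1{x^2}\frac{\theta^{r\ell+1}}{1-\theta}$, which after covering by $f_1(k,g)$ decompositions and summing over $g$ layers gives $\epsilon_0\le f\bigl(r\ell+\frac1{x^2}\frac1{1-\theta}\bigr)$ and hence the claimed bound on $x^2$. Your observation that the $\ell=0$ case requires the \Cl{cl:etas} argument (to get a bound linear in $\epsilon_0$ as $\epsilon_0\to0$) is correct and is indeed why $\Delta^2(\cdot)$ has a two-part formula.
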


The proof of \Lem{lem:aux} would be given later in \Sec{sec:Paux}.
Based on it, we can prove \Lem{lem:detect} as follows
\begin{proof}[\ of \Lem{lem:detect}]

  Given the state $\ket{\psi}$ and an integer $\ell\ge 0$, 
  assume that \Eq{eq:x2} holds and yet for every layer, 
  \begin{align}
  \label{eq:assumption}
    \norm{\Pgt{i}\ket{\psi}}^2 < \frac{1}{(2g)^2}\Delta^2(\ell)  \ .
  \end{align}
  For brevity, we denote
  \begin{align}
  \label{def:x}
    x \EqDef \norm{\Pl{g}\cdots\Pl{1}\ket{\psi}} \ .
  \end{align}
  Then
  \begin{align}
  \label{eq:x2-product}
    x^2 &= \bra{\psi} \Pl{1}\cdots
      \Pl{g-1}\Pl{g}\Pl{g-1}\cdots\Pl{1}\ket{\psi}  \ .
  \end{align}
    Every product of $N$ operators can be written as:
  \begin{align}
  \label{eq:expand}
    O_1\cdots O_N &= \Id + (O_1-\Id) + O_1(O_2-\Id) + O_1O_2(O_3-\Id)
      \\
      &+ \ldots + (O_1\cdots O_{N-1})\cdot(O_N-\Id) \ .
  \end{align}
  Expanding \Eq{eq:x2-product} this way, we get
  \begin{align}
    x^2 &= 1 + \bra{\psi}(\Pl{1} - \Id)\ket{\psi} + 
      \bra{\psi}\Pl{1}\big(\Pl{2} - \Id\big)\ket{\psi} +
      \bra{\psi}\Pl{1}\Pl{2}\big(\Pl{3} - \Id\big)\ket{\psi} +
      \ldots
  \end{align}
  The RHS of the above equation contains $2g-1$ terms of the form
  $\bra{\psi} \Pl{1}\cdots\Pl{i}\big(\Pl{i+1}-\Id\big)\ket{\psi}$.
  Let us estimate their magnitude. By an expansion similar to
  \Eq{eq:expand}, we write
  \begin{align}
    \Pl{1}\cdots\Pl{i} = (\Id-\Pl{i}) + (\Id-\Pl{i-1})\Pl{i} +
      \ldots \ .
  \end{align}
  Therefore $\bra{\psi}
  \Pl{1}\cdots\Pl{i}\big(\Pl{i+1}-\Id\big)\ket{\psi}$ can be written
  as a sum of at most $2g$ terms, each of them is an inner product
  of $\bra{\psi}(\Id-\Pl{j})$ times some projections, times
  $(\Id-\Pl{i})\ket{\psi}$. By our assumption, the norm of the ket
  and bra is smaller than $\Delta(\ell)/(2g)$ and as the norms of
  the projections are smaller than or equal to unity we find
  \begin{align}
    |\bra{\psi}
        \Pl{1}\cdots\Pl{i}\big(\Pl{i+1}-\Id\big)\ket{\psi}| \le 2g
        \frac{\Delta^2(\ell)}{(2g)^2} = \frac{\Delta^2(\ell)}{2g} \ .
  \end{align}
  Therefore, overall, 
  \begin{align}
    x^2 \le 1 + (2g-1)\frac{\Delta^2(\ell)}{2g} < 1 + \Delta^2(\ell)
    \ ,
  \end{align}
  contradicting \Eq{eq:x2}.
\end{proof}

We now turn to the proof of \Lem{lem:aux}. The outline of the proof
is very similar to the simple case of 2-layers, $\ell=0$, and was
discussed in \Sec{sec:gen-detect}. The main goal of the proof is to
estimate the energy of the normalized state
$\frac{1}{x}\Pl{g}\cdots\Pl{1}\ket{\psi}$, which The has
contributions from all layers. For every layer we will find a crude
upper bound of its energy as a function of $x$ (as well as of $\ell,
k, g, \theta$). Summing all these bounds together, we will get an
upper bound to the total energy. This energy is lower bounded by
$\epsilon_0$, the ground energy of the system, and this gives us an
inequality. We then reverse it and extract an upper bound for $x$.

We start by using the $XY$ decomposition to upper bound the energy
of the first layer.

%
\subsection{Estimating the energy of the first layer}

Consider then an $XY$ decomposition, and let $E^{top}$ denote the
energy of all the constraints of the first layer (the top layer in
\Fig{fig:pyramids}) that belong to the pyramids of the
decomposition. We define $\ket{\Omega}$ to be the following
normalized state: 
\begin{align}
\label{def:Omega}
  \ket{\Omega} &\EqDef \frac{1}{x}\Pl{g}\cdots\Pl{1}\ket{\psi} \ .
\end{align}
The entire section will be dedicated to proving the following lemma:
\begin{lem}
\label{lem:layer}
  For $\ell=0$,
  \begin{align}
  \label{eq:E1-ell0}
    \bra{\Omega}E^{top}\ket{\Omega} 
      \le  \frac{1-x^2}{x^2}
          \frac{\theta^2}{(1-\theta^2)^3} \ ,
  \end{align}
  and for $\ell>0$ there is a positive function function $r(\theta,
  k, g)$ (independent of $\ket{\Omega}$) such that
  \begin{align}
  \label{eq:E1-general}
    \bra{\Omega}E^{top}\ket{\Omega}
      \le r\ell + \frac{1}{x^2} \frac{1}{1-\theta} \ .
  \end{align}
\end{lem}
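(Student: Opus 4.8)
\medskip
\noindent\textbf{Proof plan for \Lem{lem:layer}.}
For both values of $\ell$ I would run the three moves used for the $2$-layer $\ell=0$ case in \Sec{sec:2detect}, feeding in \Lem{lem:gen-decay} in place of \Lem{lem:expdecay2layers}. Every constraint $Q$ sitting at a pyramid top commutes with all the $XY$ projections $P_\nu$ by \Eq{eq:commutation1}; hence, writing $\ket{\Omega}=\sum_s\lambda_s\ket{\Omega_s}$ in the coarse-grained $XY$ decomposition, $\bra{\Omega}E^{top}\ket{\Omega}=\sum_s\lambda_s^2\,\bra{\Omega_s}E^{top}\ket{\Omega_s}$. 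So it is enough to (i) bound the top-layer energy inside each sector, and (ii) insert the decay estimate $\lambda_s^2\le x^{-2}\big(k^{g^2\ell}(\tfrac{\ell+1}{\ell!})^g\big)^2 s^{2g\ell}\theta^{2s}\eta_s^2$, $\sum_s\eta_s^2\le1$, which \Lem{lem:gen-decay} supplies for $s\ge\ell$.

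For (i), fix $\nu$ with $\lambda_\nu\ne0$ and split the pyramid tops into those over a $Y$ pyramid of $\nu$ (at most $|\nu|$ of them, each contributing $\le1$) and those over an $X$ pyramid, where the top $Q$ has a definite value $0$ or $1$ in the sector. I claim at most $\ell$ of the latter have value $1$: expand $P_\nu\Pl{g}\cdots\Pl{1}\ket{\psi}$ by \Eq{eq:pyr-decomp}, push $P_\nu$ to the left of every $\Ppr{i}{j}$ (all of which commute with it), and use that a pyramid's constraints all commute inside its $X$ subspaces, so $P_\nu$ freezes the apex values on the $X$ pyramids; in each summand the factor $\Ppr{1}{j_1}$ with $j_1\le\ell$ projects the apex qubits onto exactly $j_1$ violated apexes, so the summand survives only if at most $j_1\le\ell$ of the frozen apex values equal $1$. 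For $\ell=0$ this reproduces \Cl{cl:s-energy} ($\Ppr{1}{0}$ carries an $(\Id-Q)$ factor for every apex, so all $X$-tops are satisfied). Averaging over $|\nu|=s$ gives $\bra{\Omega_s}E^{top}\ket{\Omega_s}\le s$ when $\ell=0$, and $\le s+\ell$ when $\ell>0$.

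For $\ell=0$ the rest is literally \Sec{sec:2detect}: from the $\ell=0$ case of \Lem{lem:gen-decay}, $\lambda_s^2\le x^{-2}\theta^{2s}\eta_s^2$, together with $\sum_s\eta_s^2\le1$ and $\sum_s\lambda_s^2=1$, one gets $\sum_{s\ge1}\eta_s^2\le(1-x^2)/(1-\theta^2)$ exactly as in \Cl{cl:etas}, whence $\bra{\Omega}E^{top}\ket{\Omega}=\sum_s s\lambda_s^2\le x^{-2}\tfrac{1-x^2}{1-\theta^2}\sum_{s\ge1}s\theta^{2s}=\tfrac{1-x^2}{x^2}\cdot\tfrac{\theta^2}{(1-\theta^2)^3}$, which is \Eq{eq:E1-ell0}.

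For $\ell>0$ I would write $\bra{\Omega}E^{top}\ket{\Omega}\le\ell+\sum_s s\lambda_s^2$ and split $\sum_s s\lambda_s^2$ at a threshold $s_0=c\ell$: the part $s\le s_0$ is $\le s_0$, while the part $s>s_0$ is $\le x^{-2}\big(k^{g^2\ell}(\tfrac{\ell+1}{\ell!})^g\big)^2\max_{s>s_0}\!\big(s^{2g\ell+1}\theta^{2s}\big)$ by $\sum_s\eta_s^2\le1$. The crux is to pick $c=c(\theta,k,g)$, independent of $\ell$, large enough that this last quantity is $\le x^{-2}/(1-\theta)$ for every $\ell\ge1$. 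Once $c\ell$ is past the mode $\tfrac{2g\ell+1}{2\ln(1/\theta)}$ of $s\mapsto s^{2g\ell+1}\theta^{2s}$ the maximum sits near $s=c\ell$, and using $\ell^\ell/\ell!\le e^\ell$ the product with the prefactor is at most $c\ell(\ell+1)^{2g}B(c)^{2g\ell}$ with $B(c)=k^{g}ce\,\theta^{c/g}$; since $c\,\theta^{c/g}\to0$, one can force $B(c)$ as small as one likes, and then $c\ell(\ell+1)^{2g}B(c)^{2g\ell}$ is decreasing in $\ell$ (once $B(c)<1/4$), hence maximized at $\ell=1$, where it is at most $(2k^ge)^{2g}c^{2g+1}\theta^{2c}\to0$; choosing $c$ accordingly makes it $\le 1/(1-\theta)$. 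Collecting terms, $\bra{\Omega}E^{top}\ket{\Omega}\le\ell+c\ell+x^{-2}/(1-\theta)=(c+1)\ell+x^{-2}/(1-\theta)$, so $r:=c+1$ gives \Eq{eq:E1-general}. The main obstacle is exactly this calibration of $s_0$: the combinatorial prefactor $k^{g^2\ell}(\tfrac{\ell+1}{\ell!})^g$ grows super-exponentially in $\ell$, and one must check that the $1/\ell!$ together with the geometric tail $\theta^{2s}$ summed over $s>c\ell$ (rather than all $s\ge\ell$) wins, collapsing the $\ell$-dependence of the bound to a slope that no longer sees $\ell$.
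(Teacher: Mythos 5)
Your proposal is correct and follows essentially the same route as the paper's Appendix B/C argument: commute $E^{top}$ through the coarse-grained $XY$ decomposition, prove the per-sector bound $\bra{\Omega_s}E^{top}\ket{\Omega_s}\le s+\ell$ by the same ``at most $\ell$ of the frozen $X$-apexes can be violated, else $P_\nu\ket{\Omega}=0$'' contradiction, then feed in \Lem{lem:gen-decay} with $\eta_s\le 1$ and split the sum at $s_0=r\ell$. The only cosmetic deviation is in the tail: the paper calibrates $r$ so that $\lambda_s^2 s\le x^{-2}\theta^s$ for $s>r\ell$ and sums the geometric series, whereas you bound the tail by its maximum over $s>c\ell$ — both hinge on the same Stirling estimate and the fact that $c\,\theta^{c/g}\to 0$ as $c\to\infty$.
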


\noindent\textbf{Proof:}\\ 

The $\ell=0$ case was essentially already proved in the 2-layers
case in \Sec{sec:2detect} (specifically, see \Eq{eq:Delta0}). The
difference between the 2-layers case and the $g$-layers case are
semantic and therefore we will only consider the $\ell>0$ case. 

Consider the coarse- and fine-grained $XY$ decomposition of
$\ket{\Omega}$,
\begin{align}
  \ket{\Omega} = \sum_\nu \lambda_\nu \ket{\Omega_\nu} 
   = \sum_s \lambda_s\ket{\Omega_\nu} \ .
\end{align}
Since $E^{top}$ a sum of the inverses of pyramid projections from the
first layer, it must commute with the $XY$ projections $P_\nu$.
Therefore,
\begin{align}
  \bra{\Omega} E^{top}\ket{\Omega} 
    = \sum_s \lambda_s^2
      \bra{\Omega_s}E^{top}\ket{\Omega_s}  \ .
\end{align}
Our first claim is
\begin{claim}
\label{cl:energy}
  For every $s$ with non-zero weight $\lambda_s$,
  \begin{align}
    \bra{\Omega_s} E^{top}\ket{\Omega_s} \le \ell + s \ .
  \end{align}
\end{claim}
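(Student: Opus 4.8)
My plan is to first reduce to the fine-grained $XY$ decomposition and then bound one sector at a time. Since $E^{top}$ is a sum over the pyramids of the fixed decomposition of their apices (the unique layer-$1$ constraint inside each pyramid), and every apex commutes with all the $XY$ projections $P_\nu$ by \Eq{eq:commutation1} (and trivially across distinct, disjoint pyramids), the vectors $\{\ket{\Omega_\nu}\}_{|\nu|=s}$ are mutually orthogonal and $E^{top}$ is block-diagonal with respect to them, so $\bra{\Omega_s}E^{top}\ket{\Omega_s}$ is a convex combination of the numbers $\bra{\Omega_\nu}E^{top}\ket{\Omega_\nu}$ over $|\nu|=s$. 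It therefore suffices to show $\bra{\Omega_\nu}E^{top}\ket{\Omega_\nu}\le \ell+|\nu|$ for every $\nu$ with $\lambda_\nu\neq 0$. Writing $E^{top}=\sum_p Q_p$, a pyramid $p$ that $\nu$ puts in the $Y$ (non-commuting) subspace contributes at most $\norm{Q_p}=1$, for a total of $|\nu|$; a pyramid $p$ that $\nu$ puts in a commuting subspace $X_{i_p}$ has its apex $Q_p$ equal to $0$ or $\Id$ there, so it contributes $0$ or $1$. Letting $V_\nu$ be the set of such ``violated'' $X$-pyramids, we get $\bra{\Omega_\nu}E^{top}\ket{\Omega_\nu}\le |\nu|+|V_\nu|$, and the whole claim comes down to proving that $|V_\nu|\le\ell$ whenever $\lambda_\nu\neq 0$.

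The heart of the argument is this last bound, which I would prove in contrapositive form: if $|V_\nu|>\ell$ then $P_\nu\,\Pl{g}\cdots\Pl{1}=0$ as an operator, hence $\lambda_\nu=0$. Using the pyramid decomposition \Eq{eq:pyr-decomp} it is enough to show that $P_\nu(\Ppr{g}{j_g}\cdots\Ppr{1}{j_1})=0$ for each tuple $(j_1,\ldots,j_g)$ in that sum, and by \Eq{eq:inside-out} each $j_m\le\ell$, so in particular $j_1<|V_\nu|$. Because each constraint lies inside a single pyramid, the qubits of the $X$-pyramids of $\nu$ are disjoint from those of its $Y$-pyramids, and every inside-pyramid counting projector factors as $\Ppr{m}{j_m}=\sum_{a+b=j_m}A_{m,a}\otimes B_{m,b}$, with $A_{m,a}$ projecting onto exactly $a$ layer-$m$ violations among the $X$-pyramids and $B_{m,b}$ onto exactly $b$ among the $Y$-pyramids. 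The restriction of $P_\nu$ to the $X$-pyramid qubits is the projection onto the chosen eigenspaces $X_{i_p}$, each of which simultaneously diagonalizes all constraints of its pyramid; hence this part of $P_\nu$ commutes with every $A_{m,a}$ and annihilates it unless $a$ equals the number of layer-$m$ violations among the $X$-pyramids that $\nu$ dictates. For $m=1$ that number is exactly $|V_\nu|$, but the only terms available have $a\le j_1<|V_\nu|$, so every term is killed and $P_\nu(\Ppr{g}{j_g}\cdots\Ppr{1}{j_1})=0$.

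Assembling: $\bra{\Omega_\nu}E^{top}\ket{\Omega_\nu}\le|\nu|+|V_\nu|\le|\nu|+\ell$ for all $\nu$ with $\lambda_\nu\neq 0$, and the convexity remark of the first paragraph upgrades this to $\bra{\Omega_s}E^{top}\ket{\Omega_s}\le s+\ell$. The main obstacle is exactly the operator identity in the middle paragraph: since the $\Pl{i}$ do not commute, one cannot naively argue that applying $\Pl{1}$ first ``freezes'' the number of layer-$1$ violations at $\le\ell$ forever after. What rescues the argument is that \Eq{eq:pyr-decomp} reorganizes the product so that the inside-pyramid counting projectors act together, and the simultaneous diagonalization of all constraints of a pyramid by its $X$-subspaces lets $P_\nu$ separate the layer-$1$ violations ``frozen'' in the commuting part — which it can retain only if they number exactly $|V_\nu|$, and the decomposition caps them at $\ell$ — from those hidden in the non-commuting part, which are controlled crudely by $|\nu|$. (One small point worth stating explicitly is that each pyramid contains exactly one layer-$1$ constraint, its apex, so ``violated $X$-pyramid'' is literally the same as ``layer-$1$ violation inside a pyramid of the decomposition sitting in the $X$ part''.)
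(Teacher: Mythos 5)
Your proof is correct and follows essentially the same route as the paper: reduce to the fine-grained sectors $\nu$, bound the $Y$-pyramid contribution by $|\nu|$, and show that a sector $\nu$ that dictates more than $\ell$ layer-$1$ violations on its $X$-pyramids must have $\lambda_\nu=0$, by pushing the projectors through the pyramid decomposition (\Eq{eq:pyr-decomp}) and using the commutation inside $X$ sectors. Your tensor factorization $\Ppr{m}{j_m}=\sum_{a+b=j_m}A_{m,a}\otimes B_{m,b}$ and the observation that $P_{\nu,X}$ fixes the $X$-side violation count just make explicit the mechanism the paper invokes more tersely (the paper instead frames it via $\big(\prod_{i=1}^{\ell+1}Q_i\big)\cdot\Ppr{1}{j_1}=0$), so the two arguments are the same modulo presentation.
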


\begin{proof}
  It is sufficient to prove that for every sector $\nu$ with
  non-zero weight, $\bra{\Omega_\nu} E^{top}\ket{\Omega_\nu} \le
  \ell + |\nu|$.
  
  $E^{top}$ has one contribution from every pyramid top. Consider a
  sector $\nu$ of the fine-grained $XY$ decomposition. It contains
  $|\nu|$ $Y$ spaces and the rest are $X$ spaces. The maximal energy
  contribution from the $Y$ pyramids is therefore $|\nu|$. We will
  now show that the contribution from the $X$ pyramids is at most
  $\ell$. Essentially, the proof boils down to the fact that the
  projections commute on the $X$ sectors and therefore if
  $\ket{\Omega}$ has more than $\ell$ violations on the $X$ sectors
  then also $\Pl{1}\ket{\psi}$ has - which is impossible. The
  following argument shows this more formally.
  
  Let $Q_i$ be the projection in the first layer in the $i$'th
  pyramid, where $\nu$ has an $X$ sector. Then either
  $Q_i\ket{\Omega_\nu}=0$ or $Q_i\ket{\Omega_\nu} =
  \ket{\Omega_\nu}$. If the total contribution from all $X$ sectors
  is larger than $\ell$, there are $\ell+1$ pyramids in which
  $Q_i\ket{\Omega_\nu}=\ket{\Omega_\nu}$. For brevity, assume that
  these appear in the first $\ell+1$ pyramids. Then
  \begin{align}
    \big(\prod_{i=1}^{\ell+1}Q_i\big)\ket{\Omega_\nu} =
      \ket{\Omega_\nu} \ .
  \end{align}
  Assuming that $\lambda_\nu\ne 0$, we get
  \begin{align}
  \label{eq:proj-nu}
    \big(\prod_{i=1}^{\ell+1}Q_i\big)P_\nu\ket{\Omega} =
      P_\nu\ket{\Omega} \ .
  \end{align}
  Using the definition of $\ket{\Omega}$ in \Eq{def:Omega} and
  \Eq{eq:pyr-decomp}, the LHS of the above equation is equal to
  \begin{align}
    \frac{1}{x} \sum_{j_1, \ldots, j_g} 
      \big(\prod_{i=1}^{\ell+1}Q_i\big)P_\nu \cdot
      (\Ppr{g}{j_g}\cdots\Ppr{1}{j_g}) \cdot 
      (\Prs{g}{\ell-j_g}\!\!\!\cdots\Prs{1}{\ell-j_g})\ket{\psi} \ .
  \end{align}
  The above expression vanishes. The reason is that it is a sum over
  terms which all contain
  \begin{align}
    \big(\prod_{i=1}^{\ell+1}Q_i\big)P_\nu \cdot
      (\Ppr{g}{j_g}\cdots\Ppr{1}{j_g}) \ .
  \end{align}
  Using the fact that $P_\nu$ commutes with the constraints inside
  the pyramids, and that within an $X$ sector the constraints of the
  pyramids commute within themselves, this is equal to
  \begin{align}
    P_\nu \big(\prod_{i=1}^{\ell+1}Q_i\big)\cdot\Ppr{1}{j_{g-1}} \cdot
      (\Ppr{g}{j_g}\cdots\Ppr{1}{j_{g-1}}) P_\nu
  \end{align}
  But $\big(\prod_{i=1}^{\ell+1}Q_i\big)\cdot\Ppr{1}{j_{g-1}}$ is
  identically zero as it must contain at least one term of the form
  $Q(\Id-Q)$. It follows that $P_\nu \ket{\Omega} = 0$ which can
  only happen when $\lambda_\nu=0$.
\end{proof}

Next, we bound the weights $\lambda_s$ using the exponential decay
of \Sec{sec:gen-decay}, which is proved in \Sec{sec:Pgen-decay}.
According to \Lem{lem:gen-decay}, there exists a set of weights
$\eta^2_s$ such that $\sum_s \eta_s^2\le 1$ and for every $s\ge
\ell$, 
\begin{align}
  \lambda_s \le \frac{1}{x} k^{g^2\ell}
    \left(\frac{\ell+1}{\ell!}\right)^g
     s^{g\ell} \theta^s \eta_s  \ . 
\end{align}
Bounding  $\eta_s$ by $1$, we get that 
\begin{align}
  \lambda_s \le \frac{1}{x} k^{g^2\ell}
    \left(\frac{\ell+1}{\ell!}\right)^g
       s^{g\ell} \theta^s  \ .
\end{align}

We are now in position to prove the main result of this section,
\Eq{eq:E1-general}. In \App{sec:r} we use the above equation to show that
it is possible to find a constant $r(\theta, k, g)$ such that for
every $s > r\ell$,
\begin{align}
  \lambda_s^2 s \le \frac{1}{x^2}\theta^s\ .
\end{align}
Consequently, from \Cl{cl:energy} it follows that
\begin{align}
  \bra{\Omega} E^{top} \ket{\Omega} 
    &\le \ell + \sum_{s=0}^{r\ell} \lambda_s^2 s +
    \frac{1}{x^2}\sum_{s=r\ell+1}^\infty \theta^s \\
    &\le (r + 1)\ell 
      + \frac{1}{x^2} \frac{\theta^{(r\ell+1)}}{1-\theta} \ .
\end{align}
By redefining $r(\theta, k, g) \to r(\theta, k, g) + 1$, and using
the fact that $\theta^{(r\ell+1)}<1$, we recover \Eq{eq:E1-general}.

Finally, we can now prove \Lem{lem:aux}.

\subsection{Proof of \Lem{lem:aux}}
\label{sec:Paux}

To prove \Lem{lem:aux}, we use \Lem{lem:layer} to estimate the
total energy of the system. To estimate the energy of the first
layer, we apply \Lem{lem:layer} several times using different $XY$
decomposition. The $XY$ decompositions are chosen such that every
constraint in the first layer appears in exactly one $XY$
decomposition. One can easily verify that the total number of such
decompositions that is needed for this task is upper bounded by some
constant $f_1(k,g)$. Therefore,
the total energy of the first layer is bounded by
\begin{align}
  \bra{\Omega} E_1 \ket{\Omega} \le \left\{
    \begin{array}{lcl}
      f_1 \frac{1-x^2}{x^2} 
        \frac{\theta^2}{(1-\theta^2)^3} &,& \ell=0 \\
      f_1 \left[r\ell 
      + \frac{1}{x^2} \frac{1}{1-\theta} \right] 
        &,& \ell>0
    \end{array}  
  \right.
\end{align}
To bound the energy of the second layer, we can apply the derivation
of the first layer, with some trivial modifications:
\begin{align}
  g &\to g-1 \ , \\
  \ket{\psi} &\to \Pl{1}\ket{\psi} \ .
\end{align}
In addition, we need to update the functions $r(\theta, k, g)$ and
$f_1(k,g)$. It is easy to see that both of them can be decreased.
Therefore, it is not surprising to see that the upper bound of the
$\bra{\Omega}E_2\ket{\Omega}$ is smaller than the upper bound of
$\bra{\Omega}E_1\ket{\Omega}$, and this true for all
the other layers. Consequently, by setting $f(k,g) \EqDef
gf_1(k,g)$, we get:
\begin{align}
  \epsilon_0 \le \bra{\Omega} E \ket{\Omega} \le \left\{
    \begin{array}{lcl}
      f \frac{1-x^2}{x^2} 
        \frac{\theta^2}{(1-\theta^2)^3} &,& \ell=0 \\
      f \left[r\ell 
      + \frac{1}{x^2} \frac{1}{1-\theta} \right] 
        &,& \ell>0
    \end{array}  
  \right. \ .
\end{align}
Here $\epsilon_0$ is the ground energy of the system.
\Lem{lem:aux} is now proved by inverting this inequality:
\begin{align}
  \text{for $\ell=0$:} \quad & x^2 \le 
        \frac{1}{(\epsilon_0/f)
          \frac{(1-\theta^2)^3}{\theta^2} + 1} 
           = 1 - \Delta^2(0)\ , \\
  \text{for $\ell>0$:} \quad & x^2 \le
        \frac{1}{1-\theta}\cdot
        \frac{1}{(\epsilon_0/f) - r\ell} = 1 - \Delta^2(\ell) \ .
\end{align}
Note, of course, that the $\ell>0$ inequality is only valid for 
$(\epsilon_0/f) - r\ell > 0$.

\section{Finding $r(\theta,k,g)$}
\label{sec:r}

In this section we prove that it is possible to find a constant
$r(\theta, k, g)$ such that for every $s > r\ell$, 
\begin{align}
\label{eq:cond-again}
  \frac{1}{x^2}\left(\frac{\ell+1}{\ell!}\right)^{2g}
      k^{2g^2\ell}s^{2g\ell+1} \theta^{2s} \le \frac{1}{x^2}\theta^s\ .
\end{align}
Eliminating a factor $\frac{\theta^s}{x^2}$ and taking a $\log$ of
the equation, we find the following sufficient condition
\begin{align}
  2g[\log(\ell+1)-\log(\ell!)] + 2g^2\ell\log(k) + (2g\ell+1)\log(s) +
  s\log(\theta) < 0
\end{align}
which is equivalent to 
\begin{align}
  \frac{2g}{s}[\log(\ell+1)-\log(\ell!)] + 2g^2\frac{\ell}{s}\log(k)
   + (2g\ell+1)\frac{\log(s)}{s} + \log(\theta) < 0
\end{align}
Re-arranging it gives
\begin{align}
  \frac{2g}{s}[\ell\log(s)-\log(\ell!)] + \frac{2g}{s}\log(\ell+1)
  + 2g^2\frac{\ell}{s}\log(k) + \frac{\log(s)}{s}  < \log(1/\theta)
  \ .
\end{align}
On the LHS we have the sum of 4 terms. For $s>3$, $\log(s)/s$ is
monotonically decreasing ($\log(\cdot)$ is the natural logarithm).
So if the above condition holds for $s=r\ell$ with $r>3$, it would
hold for any $s>r\ell$. Therefore, a sufficient condition is
\begin{align}
  \frac{2g}{s}[\ell\log(s)-\log(\ell!)] + \frac{2g}{r\ell}\log(\ell+1)
    + \frac{2g^2\log(k)}{r} + \frac{\log(r)}{r}  < \log(1/\theta)
  \ .
\end{align}
Next, for every $\ell\ge 1$, the term $\frac{\log(\ell+1)}{\ell}$,
which appears in the second element is smaller than $1$, therefore a
sufficient condition is
\begin{align}
  \frac{2g}{s}[\ell\log(s)-\log(\ell!)] + \frac{2g}{r}
    + \frac{2g^2\log(k)}{r} + \frac{\log(r)}{r}  < \log(1/\theta)
  \ .
\end{align}
Let us now analyze the first term. Using Sterling's approximation, we
get
\begin{align}
  \frac{2g}{s}[\ell\log(s)-\log(\ell!)] &\le
  \frac{2g}{s}[\ell\log(s)-\ell\log(\ell) +\ell ] \\
    &=\frac{2g}{s/\ell}\log(s/\ell) + \frac{2g}{s/\ell} \ .
\end{align}
Again, using the assumption that $s/\ell > r > 3$, then
$\frac{\log(s/\ell)}{s/\ell} < \log(r)/r$, and $\frac{2g}{s/\ell} <
2g/r$. So overall, we find that as long as $r>3$, a sufficient
condition for \Eq{eq:cond-again} is
\begin{align}
  (2g+1)\frac{\log(r)}{r} 
    + \frac{4g+2g^2\log(k)}{r} < \log(1/\theta) \ .
\end{align}
The LHS of the above inequality approaches zero as $r\to+\infty$,
hence we can find an $r(\theta, k, g)>3$ that satisfies it.

\section{Proof of Classical Amplification lemma}\label{app:classicalGA} 

\begin{proof}(of Lemma \ref{lem:classicalGA}) 
  Given the assignment $\sigma$, we let $F\subseteq E$ denote the
  set of unsatisfied edges in $G$. Obviously, $\UNSAT_\sigma(G) =
  \frac{|F|}{|E|}$. Consider the homogeneous probability
  distribution over all $t$-walks. We define a random variable
  $Z(\Be)$ that counts the number of unsatisfied edges in the
  $t$-walk $e=(e_1, \ldots, e_t)$. Then 
  \begin{align}
    \UNSAT_\sigma(G^t) = \Pr[Z(\Be) > 0] \ .
  \end{align}
  Moreover, since $Z(\Be)$ is a non-negative random variable that is
  not identically $0$, 
  \begin{align}
  \label{eq:positive}
     \Pr[Z(\Be) > 0] \ge \frac{\Av^2[Z(\Be)]}{\Av[Z^2(\Be)]}  \ .
  \end{align}
  In what follows, we will lower-bound $\Av[Z(\Be)]$ and upper-bound
  $\Av[Z^2(\Be)]$. To do that, we write $Z(\Be) = \sum_{i=1}^t
  Z_i(\Be)$, where $Z_i(\Be)$ is the random variable that is equal
  to $1$ if the $i$'th edge of $\Be$ is unsatisfied and to $0$
  otherwise. It is easy to see that for every $i$, $\Av[Z_i(\Be)] =
  |F|/|E|$, and therefore 
  \begin{align}
    \Av[Z(\Be)] = t\frac{|E|}{|F|} \ .
  \end{align}
  
  To bound $\Av[Z^2(\Be)]$, we write
  \begin{align}
    \Av[Z^2(\Be)] = \sum_{i,j} \Av[Z_i(\Be)Z_j(\Be)] = 
    \sum_{i=1}^t \Av[Z^2_i(\Be)] + 2\sum_{i<j} \Av[Z_i(\Be)Z_j(\Be)]
    \ .
  \end{align}
  Note that $Z_i^2(\Be)=Z_i(\Be)$ and so
  \begin{align}
  \label{eq:Z2}
    \Av[Z^2(\Be)] = t\frac{|F|}{|E|} + 2\sum_{i<j}
      \Av[Z_i(\Be)Z_j(\Be)]   \ .
  \end{align}
  To estimate $\Av[Z_i(\Be)Z_j(\Be)]$, we can use the expansion
  properties of $G$, which imply that as $i,j$ grow apart,
  $Z_i(\Be)$ and $Z_j(\Be)$ become more and more independent, and so
  $\Av[Z_i(\Be)Z_j(\Be)] \to \Av[Z_i(\Be)]\cdot \Av[Z_j(\Be)] =
  \left(\frac{|F|}{|E|}\right)^2$. The exact statement is that for
  $i>j$,
  \begin{align}
    \Av[Z_i(\Be)Z_j(\Be)] \le \frac{|F|}{|E|}
      \left( \frac{|F|}{|E|} + |\lambda|^{i-j-1}\right) \ .
  \end{align}
  The proof of this fact is standard, and is given in
  Ref.~\cite{ref:Din07}, and will therefore be omitted.
  
  Inserting this into \Eq{eq:Z2}, we arrive to 
  \begin{align}
    \Av[Z^2(\Be)] &= t\frac{|F|}{|E|} 
      + 2\frac{|F|}{|E|}\sum_{i=1}^t\sum_{j=i+1}^t
        \left( \frac{|F|}{|E|} + |\lambda|^{i-j-1}\right) \\
     &\le t\frac{|F|}{|E|} + t(t-1)\left(\frac{|F|}{|E|}\right)^2 +
      \frac{2t}{1-\lambda}\frac{|F|}{|E|} \\
     &= t\frac{|F|}{|E|}\left(1 + \frac{|F|}{|E|}(t-1) +
     \frac{2}{1-\lambda}\right) \ .
  \end{align}
  Using \Eq{eq:positive}, we get
  \begin{align}
    \UNSAT_\sigma(G^t) \ge  
      \frac{t\UNSAT_\sigma(G)}
        {1 + t\UNSAT_\sigma(G) + \frac{2}{1-\lambda}}
        \EqDef F\big(\UNSAT_\sigma(G)\big) \ ,
  \end{align}
  where $F(x) = \frac{tx}{1 + tx + \frac{2}{1-\lambda}}$.
  If $x\le 1/t$, $F(x) \ge \frac{tx}{2+\frac{2}{1-\lambda}}$. On the
  other hand, as $F(x)$ is monotonically increasing for $x>0$, then
  for $x\ge 1/t$, $F(x) \ge F(1/t) = \frac{1}{2+\frac{2}{1-\lambda}}$.
  Setting $x=\UNSAT_\sigma(G)$ and using Equation \ref{def:c}   
  completes the proof.
\end{proof}

\bibliographystyle{hep}
\bibliography{QC}

\newcommand{\etalchar}[1]{$^{#1}$}
\begin{thebibliography}{AvDK{\etalchar{+}}04}

\bibitem[AB]{ref:Aro08}
S.~Arora and B.~Barak, \textsl{ {Computational Complexity: A Modern Approach}},
\newblock to appear: http://www. cs. princeton.edu/theory/complexity .

\bibitem[ABO97]{ref:Aha97}
D.~Aharonov and M.~Ben-Or,
\newblock {Fault-tolerant quantum computation with constant error},
\newblock in \textsl{ Proceedings of the twenty-ninth annual ACM symposium on
  Theory of computing}, pages 176--188, ACM New York, NY, USA, 1997.

\bibitem[AGIK07]{ref:Aha07b}
D.~Aharonov, D.~Gottesman, S.~Irani and J.~Kempe, \textsl{ {The power of
  quantum systems on a line}},
\newblock arXiv preprint  (2007), {arXiv:0705.4077}.

\bibitem[AKS87]{ref:Ajt87}
M.~Ajtai, J.~Komlos and E.~Szemeredi,
\newblock {Deterministic simulation in LOGSPACE},
\newblock in \textsl{ Proceedings of the nineteenth annual ACM conference on
  Theory of computing}, pages 132--140, ACM New York, NY, USA, 1987.

\bibitem[ATS03]{ref:Aha03}
D.~Aharonov and A.~Ta-Shma,
\newblock {Adiabatic quantum state generation and statistical zero knowledge},
\newblock in \textsl{ Proceedings of the thirty-fifth annual ACM symposium on
  Theory of computing}, pages 20--29, ACM Press New York, NY, USA, 2003.

\bibitem[AvDK{\etalchar{+}}04]{ref:Aha04}
D.~Aharonov, W.~van Dam, J.~Kempe, Z.~Landau, S.~Lloyd and O.~Regev,
\newblock {Adiabatic Quantum Computation is Equivalent to Standard Quantum
  Computation},
\newblock in \textsl{ ANNUAL SYMPOSIUM ON FOUNDATIONS OF COMPUTER SCIENCE},
  volume~45, pages 42--53, IEEE COMPUTER SOCIETY PRESS, 2004.

\bibitem[BDLT08]{ref:Bra08}
S.~Bravyi, D.~DiVincenzo, D.~Loss and B.~Terhal, \textsl{ {Quantum Simulation
  of Many-Body Hamiltonians Using Perturbation Theory with Bounded-Strength
  Interactions}},
\newblock Physical Review Letters \textbf{ 101}(7), 70503 (2008),
  {arXiv:0803.2686}.

\bibitem[CFP01]{ref:Chi01}
A.~Childs, E.~Farhi and J.~Preskill, \textsl{ {Robustness of adiabatic quantum
  computation}},
\newblock Physical Review A \textbf{ 65}(1), 12322 (2001), {quant-ph/0108048}.

\bibitem[Din07]{ref:Din07}
I.~Dinur, \textsl{ The PCP theorem by gap amplification},
\newblock J. ACM \textbf{ 54}(3), 12 (2007).

\bibitem[FGG02]{ref:Far02}
E.~Farhi, J.~Goldstone and S.~Gutmann, \textsl{ {Quantum Adiabatic Evolution
  Algorithms versus Simulated Annealing}},
\newblock arXiv preprint  (2002), {quant-ph/0201031}.

\bibitem[FGGS00]{ref:Far00}
E.~Farhi, J.~Goldstone, S.~Gutmann and M.~Sipser, \textsl{ {Quantum Computation
  by Adiabatic Evolution}},
\newblock arXiv preprint  (2000), {quant-ph/0001106}.

\bibitem[Has07]{ref:Has07}
M.~Hastings, \textsl{ {An Area Law for One Dimensional Quantum Systems}},
\newblock JSTAT, P \textbf{ 8024} (2007), {arXiv:0704.3906}.

\bibitem[IZ89]{ref:Imp89}
R.~Impagliazzo and D.~Zuckerman,
\newblock {How to recycle random bits},
\newblock in \textsl{ Foundations of Computer Science, 1989., 30th Annual
  Symposium on}, pages 248--253, 1989.

\bibitem[JF08]{ref:Jor08}
S.~P. Jordan and E.~Farhi, \textsl{ Perturbative gadgets at arbitrary orders},
\newblock Physical Review A (Atomic, Molecular, and Optical Physics) \textbf{
  77}(6), 062329 (2008), {arXiv:0802.1874}.

\bibitem[JFS06]{ref:Jor06}
S.~Jordan, E.~Farhi and P.~Shor, \textsl{ {Error-correcting codes for adiabatic
  quantum computation}},
\newblock Physical Review A \textbf{ 74}(5), 52322 (2006), {quant-ph/0512170}.

\bibitem[Kit03]{ref:Kit03}
A.~Kitaev, \textsl{ {Fault-tolerant quantum computation by anyons}},
\newblock Annals of Physics \textbf{ 303}(1), 2--30 (2003).

\bibitem[KKR06]{ref:Kem06}
J.~Kempe, A.~Kitaev and O.~Regev, \textsl{ {The Complexity of the Local
  Hamiltonian Problem}},
\newblock SIAM JOURNAL ON COMPUTING \textbf{ 35}(5), 1070 (2006),
  {quant-ph/0406180}.

\bibitem[KLZ98]{ref:Kni98}
E.~Knill, R.~Laflamme and W.~Zurek, \textsl{ {Resilient Quantum Computation}},
\newblock Science \textbf{ 279}(5349), 342 (1998).

\bibitem[KSV02]{ref:Kit02}
A.~Kitaev, A.~Shen and M.~Vyalyi,
\newblock \textsl{ {Classical and Quantum Computation}},
\newblock American Mathematical Society, 2002.

\bibitem[Lid08]{ref:Lid08}
D.~A. Lidar, \textsl{ Towards Fault Tolerant Adiabatic Quantum Computation},
\newblock Physical Review Letters \textbf{ 100}(16), 160506 (2008),
  {arXiv:0707.0021}.

\bibitem[{\'L}KS05]{ref:Lbe05}
J.~{\'L}berg, D.~Kult and E.~Sj{\"o}qvist, \textsl{ {Robustness of the
  adiabatic quantum search}},
\newblock Physical Review A \textbf{ 71}(6), 60312 (2005).

\bibitem[Osb07]{ref:Osb07}
T.~Osborne, \textsl{ {Simulating adiabatic evolution of gapped spin systems}},
\newblock Physical Review A \textbf{ 75}(3), 32321 (2007), {quant-ph/0601019}.

\bibitem[OT08]{ref:Oli05}
R.~Oliveira and B.~Terhal, \textsl{ {The complexity of quantum spin systems on
  a two-dimensional square lattice}},
\newblock Quant. Inf. Comp. \textbf{ 8}(10), 0900--0924 (2008),
  {quant-ph/0504050}.

\bibitem[RC02]{ref:Rol02}
J.~Roland and N.~Cerf, \textsl{ {Quantum search by local adiabatic evolution}},
\newblock Physical Review A \textbf{ 65}(4), 42308 (2002), {quant-ph/0107015}.

\bibitem[RC05]{ref:Rol05}
J.~Roland and N.~Cerf, \textsl{ {Noise resistance of adiabatic quantum
  computation using random matrix theory}},
\newblock Physical Review A \textbf{ 71}(3), 32330 (2005), {quant-ph/0409127}.

\bibitem[Rei04]{ref:Rei04}
B.~W. Reichardt,
\newblock The quantum adiabatic optimization algorithm and local minima,
\newblock in \textsl{ STOC '04: Proceedings of the thirty-sixth annual ACM
  symposium on Theory of computing}, pages 502--510, New York, NY, USA, 2004,
  ACM.

\bibitem[SL05]{ref:Lid05}
M.~Sarandy and D.~Lidar, \textsl{ {Adiabatic Quantum Computation in Open
  Systems}},
\newblock Physical Review Letters \textbf{ 95}(25), 250503 (2005).

\bibitem[vDMV01]{ref:Dam01}
W.~van Dam, M.~Mosca and U.~Vazirani,
\newblock {How Powerful is Adiabatic Quantum Computation?},
\newblock in \textsl{ ANNUAL SYMPOSIUM ON FOUNDATIONS OF COMPUTER SCIENCE},
  volume~42, pages 279--287, IEEE COMPUTER SOCIETY PRESS, 2001.

\bibitem[vDV01]{ref:Dam01b}
W.~van Dam and U.~Vazirani, \textsl{ {More on the power of adiabatic
  computation}},
\newblock unpublished  (2001).

\end{thebibliography}

\end{document}